\theoremstyle{plain}
\newtheorem{observation}[theorem]{{Observation}}
\newtheorem{proposition}[theorem]{{Proposition}}
\newcommand{\Z}{\mathbb{Z}}
\newcommand{\R}{\mathbb{R}}
\newcommand{\CDO}{TOC}
\newcommand{\CDR}{TOC}
\newcommand{\CDS}{TOC}
\newcommand{\mat}{\mathsf}
\def\marrow{{\marginpar[\hfill$\longrightarrow$]{$\longleftarrow$}}}
\newif\ifComments
  \newcommand{\says}[3]{\textcolor{#1}{\textsc{#2 says:} \marrow\textsf{#3}}}
  \newcommand{\says}[2]{\relax}
\newcommand{\mati}[1]{\says{red}{Mati}{#1}}
\title{High Dimensional Consistent Digital Segments\footnote{M.K. was partially supported in part by the ELC project (MEXT KAKENHI No.~12H00855 and 15H02665).}}
\titlerunning{High Dimensional Consistent Digital Line Segments} %
\author[1,2]{Man-Kwun Chiu}
\author[3]{Matias Korman}
\affil[1]{National Institute of Informatics (NII), Tokyo, Japan. \\
  \texttt{chiumk@nii.ac.jp}}
\affil[2]{JST, ERATO, Kawarabayashi Large Graph Project.}
\affil[3]{Tohoku University, Sendai, Japan.\\
\texttt{mati@dais.is.tohoku.ac.jp}}
\authorrunning{M.\,K. Chiu and M.\,Korman} %
\subjclass{"I.3.5 Computational Geometry and Object Modeling", "I.4.1 Digitization and Image Capture"}
\keywords{Consistent Digital Line Segments, Digital Geometry, Computer Vision}%
\begin{document}
\maketitle

\begin{abstract}
We consider the problem of digitalizing Euclidean line segments from $\R^d$ to $\Z^d$. Christ {\em et al.} (DCG, 2012) showed how to construct a set of {\em consistent digital segment} (CDS) for $d=2$: a collection of segments connecting any two points in $\Z^2$ that satisfies the natural extension of the Euclidean axioms to $\Z^d$. In this paper we study the construction of CDSs in higher dimensions. 

We show that any total order can be used to create a set of {\em consistent digital rays} CDR in $\Z^d$ (a set of rays emanating from a fixed point $p$ that satisfies the extension of the Euclidean axioms). We fully characterize for which total orders the construction holds and study their Hausdorff distance, which in particular positively answers the question posed by Christ {\em et al.}.
\end{abstract}

\section{Introduction}
Computation in Ancient Greece was rigorously done with ruler and compass using the five axioms of Euclidean geometry. The study of these axioms has had a drastic influence in the development of mathematics. Indeed, the removal of one of them (the fifth one) created non-Euclidean geometries, which have had huge influence on science and technology.

Computers and digital data have nowadays replaced the ruler and compass methods of computation.  In order to have a rigorous system of geometric computation in digital world, it is desirable to establish a set of axioms similar to those Euclidean geometry, where we need to replace a line by a Manhattan path in the micro scale that in a macro scale can be seen as a straight line.

There have been several attempts to define digital segments in a two dimensional $n \times n$ grid. The two dimensional bounded space is the most popular case to consider given its many applications in computer vision and computer graphics. Solutions have been proposed from a robust finite-precision geometric computation point of view~\cite{DBLP:conf/focs/GreeneY86,DBLP:conf/iccS/Sugihara01}, snap rounding~\cite{DBLP:conf/compgeom/GoodrichGHT97}, and many more. 

A pioneering work by Michael Luby in 1987~\cite{luby} introduced an axiomatic approach of  the set of digital rays emanating from the origin. He showed that lines should curve by $\Theta(\log n)$ to satisfy a set of axioms analogous to the Euclid's axioms (the lower bound proof was given by H\r{a}stad). The theory was recently re-discovered by Chun {\em et al.}~\cite{cknt-cdg-09j} and Christ {\em et al.}~\cite{ChristJournal12}. Using these results we can define a geometry that satisfies Euclid-like axioms in the two dimensional grid, and only a small bend of the lines will be needed (i.e., $\Theta(\log n)$ in an $n \times n$ grid, a formal definition is given below).

Chun {\em et al.} and Christ {\em et al.}  proposed a $d$-dimensional version of the set of axioms, but unfortunately it is not constructive. That is, they left open how to find a system to generate a set of digital segments in $d$-dimensional space that resembles the Euclidean segments. In this paper we provide the first significant step towards answering the question for high dimensions. For the purpose we extend the constructive algorithm of Christ {\em et al.}~\cite{ChristJournal12} to spaces of arbitrary dimension and study how much of a bend it creates.

\section{Preliminaries}
Let $x_1, x_2, \ldots, x_d$ denote the coordinate axes in $\Z^d$, and $p_i$ denote the $i$-th coordinate of a point $p\in \Z^d$ (for simplicity, from now on all indices are in the set $\{1, \ldots, d\}$). For any two points $p,q\in\Z^d$, we denote the path connecting $p$ and $q$ by $R(p,q)$. We aim for a constructive method to define a segment from any pair of points $(p,q)\in \Z^d\times \Z^d$. As we will see later, it will be useful to consider a general definition for subsets of $\Z^d \times \Z^d$.

\begin{definition} For any $S\subseteq \Z^d \times \Z^d$, let $DS$ be a set of digital segments such that $(p,q)\in S \rightarrow R(p,q)\in DS$. We say that $DS$ forms a {\em partial set of consistent digital segments} on $S$ (partial CDS for short) if for every pair $(p,q)\in S$ it satisfies the following five axioms:

\begin{itemize}
\item[(S1)] Grid path property: $R(p,q)$ is a path between $p$ and $q$ under the $2d$-neighbor topology\footnote{The $2d$-neighbor topology is the natural one that connects to your predecessor and successor in each dimension. Formally speaking, two points are connected if and only if their $L_1$ distance is exactly one.}. 
\item[(S2)] Symmetry property: $R(p,q)\in DS \rightarrow R(p,q)=R(q,p)$.
\item[(S3)] Subsegment property: For any $r\in R(p,q)$, we have $R(p,r) \in DS$ and $R(p,r) \subseteq R(p,q)$.
\item[(S4)] Prolongation property: There exists $r \in \Z^d$, such that $R(p,r)\in DS$ and $R(p,q) \subset R(p,r)$.
\item[(S5)] Monotonicity property: for all $i\leq d$ such that $p_i = q_i$, it holds that every point $r\in R(p,q)$ satisfies $r_i = p_i = q_i$.
\end{itemize}
\end{definition} 

These axioms give nice properties of digital line segments analogous to Euclidean line segments. For example, (S1) and (S3) imply that the non-empty intersection of two digital line segments is connected under the $2d$-neighbor topology. In particular, the intersection between two digital segments is a digital line segment that could degenerate to a single point or even to an empty set. (S5) implies that the intersection with any axis-aligned halfspace is connected, and so on. 

Our aim is to create a partial CDS on $\Z^d\times \Z^d$. We call such a construction a set of {\em consistent digital segments} (CDS for short). Similarly, a partial CDS on $\{p\}\times \Z^d$ (for some $p\in\Z^d$) is a collection of  segments emanating from $p$ and thus is referred to as a {\em consistent digital ray} system (or CDR for short).

Although conceptually simple, it is not straightforward to create CDSs or CDRs, even when $d=2$. For example, the simple rounding scheme of a Euclidean segment to the digital world that is often used in computer vision, does not generate a CDS (since axioms are not always preserved, see Figure~\ref{fig:rounding}). Another alternative is to use the bounding box approach that makes all moves in one dimension before moving in another one. Although this set of segments is consistent, it will be visually very different from the Euclidean line segments. Thus, the objective is to create a CDR or a CDS that resembles the Euclidean segments. 

The straightness or resemblance between the digital line segment $R(p,q)$ and the Euclidean segment $\overline{pq}$ is often measured with the {\em Hausdorff} distance. The Hausdorff distance $H(A,B)$ of two objects $A$ and $B$ is defined by $ H(A, B) = \max \{ h(A,B), h(B, A) \}$, where $h (A,B)
= \max_{a \in A} \min_{b \in B} \delta(a,b)$, and $\delta(a,b)$ is some fixed underlying distance. In this paper we use the natural Euclidean distance $|| \cdot ||_2$ to measure distance between two points. %

\begin{figure}[h]
\centering
\includegraphics[page=1]{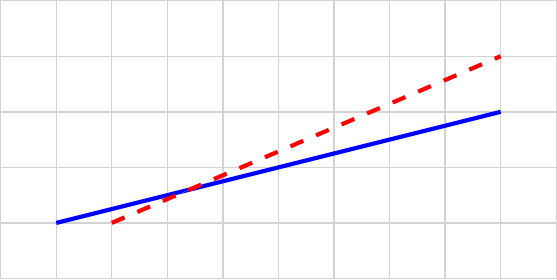}
\includegraphics[page=2]{figure/rounding}
\caption{Two different Euclidean line segments and their corresponding digital line segments via a rounding scheme. Note that their intersection in $\Z^2$ (highlighted with grey disks) is not connected under the 4-neighbor topology, which implies that the rounding scheme is not consistent.}
\label{fig:rounding}
\end{figure}

\begin{definition}
Let $DS(S)$ be a partial CDS. We say that $DS(S)$ has Hausdorff distance $f(n)$ if for all $p, q \in S$ such that $d_2(p,q)\leq n$, it holds that $H(\overline{pq},R(p,q)) = O(f(n))$.
\end{definition} 

Constructions with smaller Hausdorff distance resemble more the Euclidean segments and thus, are more desirable. Hence, the big open problem in the field is what is the (asymptotically speaking) smallest $f(n)$ function so that we can have a CDS in $\Z^d$? Or equivalently: what is the asymptotic behaviour of the Hausdorff distance of the CDS that best approximates the Euclidean segments?

\subsection{Previous work}
Although the concept of consistent digital segments was first studied by Luby~\cite{luby}, it received renewed interest by the community when it was rediscovered by Chun {\em et al.}~\cite{cknt-cdg-09j}. The latter showed how to construct a set of consistent digital rays (CDR) in any dimension. The construction and satisfies all axioms, including the Hausdorff distance bound:

\begin{theorem}[Theorem~4.4 of \cite{cknt-cdg-09j}, rephrased]
\label{theo:chunCDR}
For any $d\geq 2$ and $p\in\Z^d$ we can construct a CDR with $O(\log n)$ Hausdorff distance. 
\end{theorem}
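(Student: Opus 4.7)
The plan is to construct the CDR explicitly as a tree of monotone lattice paths, one orthant at a time. By translation invariance I may assume $p=0$, and by reflection symmetry it suffices to describe the construction on the non-negative orthant $\Z_{\geq 0}^d$; the full CDR is then obtained by gluing $2^d$ reflected copies, which agree along the coordinate hyperplanes thanks to axiom (S5).

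Within $\Z_{\geq 0}^d$, axiom (S5) forces every $R(0,q)$ to be a monotone lattice path of length $|q|_1 = q_1+\dots+q_d$, and axioms (S2)--(S4) force the collection of rays to be exactly the set of root-to-node paths of some rooted spanning tree of $\Z_{\geq 0}^d$ with root $0$. Such a tree is determined by a single parent function $f:\Z_{\geq 0}^d\setminus\{0\}\to\{1,\dots,d\}$ with $q_{f(q)}>0$: the parent of $q$ is $q-e_{f(q)}$. Axioms (S1)--(S3) and (S5) then hold automatically from the tree structure, while (S4) only needs a mild check that the tree has no leaves, so the whole problem reduces to choosing a good~$f$.

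To keep each ray close to its Euclidean segment I would pick $f$ by a balanced-scheduling rule, generalizing the two-dimensional construction of Christ et al. The idea is that, read backwards, the root-to-$q$ path should be a near-balanced interleaving of the $d$ coordinate increments approximating the ratios $q_i/|q|_1$. One natural candidate is to fix a $\mathbb{Q}$-linearly independent family of positive reals $\alpha_1,\dots,\alpha_d$ summing to $1$ and let $f(q)=\arg\max_i \{\alpha_i q_i\}$ among $i$ with $q_i>0$ (no ties ever occur); a more refined alternative is a hierarchical \emph{Sturmian-style} rule, where $f(q)$ is read off a low-discrepancy sequence keyed to the position of $q$ in concentric shells $\{|q|_1 = k\}$. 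Either way, the choice is made once and for all, uniformly over all targets.

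The main obstacle is proving the $O(\log n)$ Hausdorff bound. For a target $q$ with $|q|_1\leq n$, writing $0=q^{(0)},q^{(1)},\dots,q^{(|q|_1)}=q$ for the resulting path, the bound reduces to the one-dimensional discrepancy estimate $|q^{(j)}_i - j\, q_i/|q|_1|=O(\log n)$ for every $i$ and $j$. The delicate point is that the same schedule, fixed independently of $q$, must have low discrepancy against every rational rate vector $(q_1/|q|_1,\dots,q_d/|q|_1)$ simultaneously. I expect to establish this via a discrepancy / equidistribution argument in the spirit of the three-distance theorem and Ostrowski / continued-fraction representations; the $\log n$ factor then arises naturally as the worst-case depth of continued-fraction expansions of rationals with denominator at most $n$, matching the $\Omega(\log n)$ lower bound of H{\aa}stad.
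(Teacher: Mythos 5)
Your general framework---restrict to one orthant, view the CDR there as a spanning tree encoded by a parent function, and reduce the $O(\log n)$ bound to a discrepancy estimate along each root-to-node path---is the right one, and it matches the spirit of the cited construction of Chun \emph{et al.}: there the parent function is governed by a fixed total order on the levels $\{|q|_1=k\}$, the Hausdorff distance is shown to be, up to constants, the discrepancy of the corresponding permutation, and choosing the van der Corput sequence yields $\Theta(\log n)$. The gap in your proposal sits exactly at the step that carries all the difficulty. Your first concrete rule, $f(q)=\arg\max_i\{\alpha_i q_i\}$, fails: take $d=2$, $\alpha_1>\alpha_2$, and $q=(n,n)$. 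Following parents from $q$, the rule decrements the first coordinate as long as $\alpha_1 r_1>\alpha_2 n$, so every point $(r_1,n)$ with $(\alpha_2/\alpha_1)n\lesssim r_1\leq n$ is an ancestor of $q$; the forward path therefore ends with a horizontal run of length $\Theta(n)$ at height $x_2=n$, and the point $\bigl(\lceil(\alpha_2/\alpha_1)n\rceil,\,n\bigr)$ lies at distance $\Theta(n)$ from $\overline{0q}$. Greedy balancing drives every path toward the single fixed direction with $\alpha_1r_1=\cdots=\alpha_dr_d$ near the root rather than toward the direction of its own target, so it is asymptotically no better than the bounding-box construction. Your second candidate (a ``Sturmian-style rule read off a low-discrepancy sequence'') is the correct idea but is never specified, and the proof strategy you sketch for it---continued-fraction/Ostrowski analysis of the rate vector $(q_i/|q|_1)_i$---is not the right mechanism, since the schedule is fixed once and must be good against every target simultaneously. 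What is actually needed is the permutation-discrepancy property of a single fixed order: for every interval of levels and every threshold $m$, the number of the $m$ smallest elements of the restricted order lying in any prefix deviates from the proportional count by $O(\log n)$. That is precisely what the van der Corput order provides, and it is the content of the result being cited.

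A secondary but substantive point: you justify gluing the $2^d$ reflected copies by saying they ``agree along the coordinate hyperplanes thanks to axiom (S5).'' That reasoning is valid only for $d=2$, where two quadrants share a line and (S5) pins the path down. For $d\geq 3$ two orthants share a $(d-1)$-dimensional subspace on which (S5) imposes nothing, and agreement of the two constructions there is a nontrivial condition---indeed it is the subject of Theorem~\ref{theo:CDR} of this paper. A fully reflection-symmetric rule does glue, but because the rule is symmetric, not because of (S5); as stated, your reduction to a single orthant is unjustified in the regime $d\geq 3$ that the theorem covers.
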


H\r{a}stad~\cite{luby} and Chun {\em et al.}~\cite{cknt-cdg-09j} showed that any CDR in two dimensions must have $\Omega(\log n)$ Hausdorff distance. Thus the $\log n$ is the smallest possible distance one can hope to achieve. 
This result was generalized by Christ {\em et al.}~\cite{ChristJournal12}, where they show a correspondence between CDRs in $\Z^2$ and total orders on the integers (details on this correspondence is given in Section~\ref{sec_construct}). In particular, this correspondence can be used to create a CDS in $\Z^2$ that has $O(\log n)$ Hausdorff distance. Note that the $\Omega(\log n)$ lower bound also holds for CDS, so this result is asymptotically tight. 

This answers the question of how well can CDSs approximate Euclidean segments in the two dimensional case. However, the question for higher dimensions remains largely open. Although the method of Christ {\em et al.}~\cite{ChristJournal12} cannot be used to construct CDSs or CDRs in high dimensions, they show that it can create partial CDS as follows.

\begin{theorem}[Theorem~16 of \cite{ChristJournal12}, rephrased]
\label{thm:christ_2}
Let $S=\{(x,y) \colon x_i\geq y_i\} \subset \Z^d\times\Z^d$. Then, we can construct arbitrarily many partial CDS on $S$.
\end{theorem}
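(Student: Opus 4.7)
The plan is to parametrize the construction by a total order $\prec$ on $\Z$, give an explicit greedy construction of a partial CDS on $S$, and verify the five axioms directly; different choices of $\prec$ then yield the arbitrarily many promised CDSs.

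For each $(p,q)\in S$, I would define $R(p,q)$ inductively: starting at $p$, at each current point $t$ let the set of \emph{active coordinates} be $I_t^q=\{i\colon t_i>q_i\}$, and move to $t-e_j$ where $j\in I_t^q$ is chosen so that $t_j$ is $\prec$-maximum, breaking ties by preferring the smaller index. Since $p\geq q$ componentwise, this terminates at $q$ after $\sum_i (p_i-q_i)$ steps. Axiom (S1) is built into the construction, (S2) holds vacuously because $(p,q),(q,p)\in S$ forces $p=q$, and (S5) is immediate since coordinates with $p_i=q_i$ are never active.

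The crux of the argument is the subsegment axiom (S3), which I regard as the main obstacle. Given $r\in R(p,q)$, I would prove by induction along the path that the greedy construction with destination $r$ reproduces exactly the prefix of $R(p,q)$ from $p$ to $r$. The key observation is that at any intermediate point $t$ strictly preceding $r$, the active set $I_t^r$ is contained in $I_t^q$ (since $r\geq q$ componentwise), and the direction $j^*$ picked by the $q$-rule at $t$ must itself lie in $I_t^r$: in the remainder of the $q$-path, $j^*$ still has to be decremented at least once before reaching $r$, forcing $t_{j^*}>r_{j^*}$. Because the $\prec$-maximum over the larger set $I_t^q$ is attained at an element of the smaller set $I_t^r$, both the maximum value and the smallest-index tiebreak transfer unchanged, so the $r$-rule selects the same direction as the $q$-rule.

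For the prolongation axiom (S4), I would take $r=q-e_{j^*}$ where $j^*$ is the direction of the last step of $R(p,q)$ (and $r=p-e_1$ when $p=q$). Precisely because $j^*$ is the \emph{last} direction used, we have $t_{j^*}>q_{j^*}$ at every $t\in R(p,q)\setminus\{q\}$: if $t_{j^*}$ ever reached $q_{j^*}$ earlier, then $j^*$ would be permanently removed from the active set and could not reappear as the final direction. Hence along the full path $I_t^r=I_t^q$, so the greedy construction of $R(p,r)$ retraces $R(p,q)$ and appends one extra step $q\to r$ in direction $j^*$, yielding $R(p,q)\subset R(p,r)$. Finally, distinct total orders $\prec$ already produce distinct paths in small examples, so the uncountable family of total orders on $\Z$ supplies arbitrarily many partial CDSs on $S$.
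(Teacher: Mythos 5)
Your construction is correct, but it is genuinely different from the one behind the paper's statement. The paper does not prove Theorem~\ref{thm:christ_2} at all --- it imports it from Christ \emph{et al.} and realizes it through the \emph{level-based} total-order construction $\CDO(\theta,p,\mat{t})$: at an intermediate point $r$ one looks at the rank of the level $\mat{t}\cdot r$ inside the window $\theta[\mat{t}\cdot p,\mat{t}\cdot q-1]$ and assigns the $|p_{a_1}-q_{a_1}|$ smallest ranks to moves in $x_{a_1}$, the next block to $x_{a_2}$, and so on. You instead run a greedy rule that compares the \emph{current coordinate values} $t_j$ over the active coordinates under $\prec$ and decrements the $\prec$-largest. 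Your verification is sound, and you handle correctly the two places where such constructions typically break: for (S3), the decisive point is that the direction $j^*$ chosen by the $q$-rule at a point $t$ strictly preceding $r$ automatically lies in $I^r_t$ (the step at $t$ itself decrements $j^*$, so $r_{j^*}<t_{j^*}$), and a $\prec$-maximum over $I^q_t$ that is attained inside the subset $I^r_t$ remains the maximum there, with the smallest-index tiebreak restricting consistently; for (S4), taking $r=q-e_{j^*}$ with $j^*$ the last direction used indeed forces $I^r_t=I^q_t$ along the entire path. The trade-off: your argument is shorter and fully self-contained, but the level-based construction is the one that supports everything downstream in this paper --- the Translation Lemma (Lemma~\ref{lem:DO_tran}), the gluing conditions of Theorems~\ref{theo:CDR} and~\ref{theo_characCDS}, and the $O(\log n)$ Hausdorff bounds via discrepancy --- so your construction proves the cited theorem in isolation but could not be substituted for $\CDO(\theta,p,\mat{t})$ elsewhere. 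One loose end: ``distinct orders produce distinct paths in small examples'' does not yet deliver \emph{arbitrarily many} partial CDSs; this is easily repaired by exhibiting an infinite family, e.g.\ in $\Z^2$ the first step of $R((1,m),(0,0))$ reveals whether $1\prec m$ or $m\prec 1$ for each $m\geq 2$, so orders that disagree on some such pair yield distinct segment systems, and there are uncountably many pairwise-distinguishable choices.
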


Note that $S$ contains segments of positive slope (that is, only for the pairs $(p,q)$ such that $q$ is in the first orthant of $p$), hence it is roughly a small fraction (roughly $1/2^{d-1}$) of all possible segments. Other than Theorems~\ref{theo:chunCDR} and~\ref{thm:christ_2}, little or nothing is known for three or higher dimensions. Up to the date, the only CDS known in three or higher dimensions is the naive bounding box approach (described in Section~\ref{sec_construct}) that has $\Omega(n)$ Hausdorff distance. In particular, it still remains open whether one can create a CDS in $\Z^d$ with $o(n)$ Hausdorff distance (for $d>2$). 

Other research in the topic has focused in the characterization of CDSs in two dimensions. Chowdury and Gibson~\cite{Chowdhury2015} gave necessary and sufficient conditions so that the union of CDRs forms a CDS. This characterization heavily uses the correspondence between CDRs and total orders, and thus it was stated in terms of total orders. %
In a companion paper, the same authors~\cite{Chowdhury2016} afterwards provided an alternative characterization together with a constructive algorithm. Specifically, they gave an algorithm that, given a collection of segments in an $n \times n$ grid that satisfies the five axioms, computes a CDS that contains those segments. The algorithm runs in polynomial time of $n$. Both results only hold for the two dimensional case.

\paragraph*{Other definitions}
Given two points $p,q\in\Z^d$ such that $p\neq q$, the {\em slope} of $R(p,q)$ is the sign vector $\mat{t}=(t_1, t_2, \ldots, t_d) \in \{+1,-1\}^d$, where $t_i = +1$ if $p_i \leq q_i$ and is $-1$ if $p_i \geq q_i$. For simplicity, along the paper we talk about {\em the} slope of $R(p,q)$ (whenever $p$ and $q$ have more than one slope we pick one arbitrarily). Let $T$ be the set containing all $2^d$ slopes of $\Z^d$.  %

A {\em total order} $\theta$ of $\Z$ is a binary relation on all pairs of integers. We to denote that $a$ is smaller than $b$ with respect to $\theta$ by $a \prec_\theta b$. We say that two elements $a$ and $b$ are {\em consecutive} if there is no number between them (i.e., no integer $c$ satisfies $a \prec_\theta c \prec_\theta b$).

We define three operations on total orders: shift, flip and reverse. The {\em shift} operation is denoted $\theta + c$ and is the result of adding a constant value $c$ to each integer without changing their binary relations (that is, $a \prec_\theta b$ if and only if $a + c \prec_{\theta + c} b + c$). Similarly, {\em flipping} is denoted by $- \theta$ and is the result of changing the sign of all binary relations (that is, $a \prec_\theta b$ if and only if $-a \prec_{-\theta} -b$). The {\em reverse} operation of $\theta$ (denoted by $\theta^{-1}$) is the total order resulting in inverting all relationships (that is, $a\prec_{\theta} b$ 
if and only if
$b \prec_{\theta^{-1}} a$). 

Sometimes we will restrict a total order $\theta$ to an interval $[a,b]$. We denote this by $\theta [a,b]$. For these subsets we also use the same shift, flip and reverse operations whose definitions follow naturally. As an example, observe that $\theta[a,b] + c = (\theta+c)[a+c, b+c]$. Along the paper we will associate a total order to a point $p$ and a slope $\mat{t}$. This will be denoted by $\theta^p_\mat{t}$. We will omit the subscript or superscript if it is clear from the context or we use the same total order for all slopes or points, accordingly. Due to lack of space some proofs are deferred to the Appendix. Whenever possible, we added a sketch of the proof in the main document.

\subsection{Paper organization}\label{sec_organiz}
We study properties that CDRs and CDSs must satisfy in high dimensions (i.e., $d\geq 3$), and show that they behave very differently from the two-dimensional counterparts. In Section~\ref{sec_construct} we introduce the concept of {\em axis-order}. Although not needed in two dimensions, it allows us to extend the total order construction of Christ {\em et al.} to higher dimensions. Given a point $p\in \Z^d$, a total order $\theta$ on the integers, a slope $\mat{t}$, we construct a partial CDS which we denote by $\CDO(\theta,p,\mat{t})$. Specifically, it contains segments having an endpoint $p$ and slope $\mat{t}$ (that is, an orthant whose apex is $p$). In order to create a CDR, we combine $2^d$ such constructions (one for each slope), and characterize when will such approach work.

\begin{theorem}[Necessary and sufficient condition for CDRs]\label{theo:CDR}
For any $d>2$, point $p \in \Z^d$ and set $\{ \theta_{\mat{t}} \colon t\in T\}$ of $2^d$ total orders, $\bigcup_{\mat{t}\in T}\CDO(\theta_{\mat{t}},p,\mat{t})$ forms a CDR at $p$ if and only if for any $\mat{t},\mat{t}'\in T$ it holds that $\theta_\mat{t}[\mat{t}\cdot p, \infty) = \theta_{\mat{t}'}[ \mat{t}' \cdot p, \infty) - \mat{t}' \cdot p + \mat{t}\cdot p$.
\end{theorem}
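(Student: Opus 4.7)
The plan is to read the hypothesis as saying that, after shifting each $\theta_\mat{t}$ by $-\mat{t}\cdot p$, all $2^d$ total orders coincide on the nonnegative integers. The value $\mat{t}\cdot p$ is natural because every point $q$ in the orthant with apex $p$ and slope $\mat{t}$ satisfies $\mat{t}\cdot q \geq \mat{t}\cdot p$, and because the construction $\CDO(\theta,p,\mat{t})$ is driven only by the restriction of $\theta$ to the range $[\mat{t}\cdot p,\infty)$ of ``levels'' reached in that orthant. Under this viewpoint, the condition in the theorem is exactly the statement that adjacent orthants agree on their shared boundaries, and the proof becomes a translation between this algebraic agreement and the CDR axioms on $\{p\}\times\Z^d$.

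For the necessity direction, I would first handle the special case where $\mat{t}$ and $\mat{t}'$ differ in exactly one coordinate, say the $i$-th. The two orthants then share the codimension-one facet $F=\{q:q_i=p_i,\; t_j q_j\geq t_j p_j \text{ for } j\neq i\}$. For any $q\in F$, axiom (S5) forces $R(p,q)$ to lie in the hyperplane $x_i=p_i$, so the segment must coincide whether computed in $\CDO(\theta_\mat{t},p,\mat{t})$ or $\CDO(\theta_{\mat{t}'},p,\mat{t}')$. On $F$ we have the identity $\mat{t}\cdot q - \mat{t}\cdot p = \mat{t}'\cdot q - \mat{t}'\cdot p$, and the values $\mat{t}\cdot q$ range over every integer in $[\mat{t}\cdot p,\infty)$; since the $\CDO$ construction is governed by comparisons of levels under its total order, this forces the two shifted orders to agree on the full interval $[\mat{t}\cdot p,\infty)$. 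An arbitrary pair of slopes is then handled by chaining through a sequence of adjacent orthants, picking up the required shift identity one coordinate at a time.

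For the sufficiency direction, I would assume the matching condition and verify the five axioms for the union $\bigcup_{\mat{t}\in T}\CDO(\theta_\mat{t},p,\mat{t})$. Axioms (S1), (S2), and (S4) hold inside each orthant and transfer to the union. For (S3), the only nontrivial case is when $r\in R(p,q)$ lies on the boundary of the orthant containing $q$; the hypothesis guarantees that the two incident partial CDSs produce the same segment there, so $R(p,r)$ is unambiguously defined. Axiom (S5) is precisely the facet-sharing property: when $p_i=q_i$ the point $q$ lies on a facet shared by two orthants, each $\CDO$ already respects (S5) within its own orthant, and the hypothesis guarantees that the two produce the same path on the shared facet.

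The main obstacle will be formalizing the claim that $\CDO(\theta_\mat{t},p,\mat{t})$ restricted to a shared facet depends only on the shifted restriction of $\theta_\mat{t}$ to $[0,\infty)$; this is what lets a level-by-level comparison drive both directions. A clean way to organize this is to first establish an internal boundary-monotonicity property of $\CDO$ (so that restriction to a facet is a well-defined lower-dimensional partial CDS), and then observe that the global CDR axioms reduce entirely to the statement that adjacent orthants produce identical segments on their common facet, which is exactly what the shift condition encodes.
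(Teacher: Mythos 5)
Your overall architecture matches the paper's: both directions reduce to pairs of slopes differing in one coordinate, necessity for a general pair is obtained by chaining through a sequence of adjacent orthants, and sufficiency is obtained by checking that the union is well defined on shared boundaries and invoking the fact that each single orthant is already a partial CDS (Lemma~\ref{lem:CDO}). But the step you defer as ``the main obstacle'' is in fact the entire content of the necessity direction, and as written your argument for it proves too much. You claim that because the levels $\mat{t}\cdot q$ sweep all of $[\mat{t}\cdot p,\infty)$ on the shared facet and the construction is ``governed by comparisons of levels,'' agreement of the segments on the facet forces agreement of the shifted orders. For $d=2$, however, the shared facet of two adjacent quadrants is a line, the segments on it are forced by monotonicity (S5) regardless of the total order, and the orders genuinely need not agree --- which is exactly why the theorem is stated only for $d>2$ and why in two dimensions the four quadrants are independent. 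What is actually needed is an injectivity argument: assuming $v\prec_{\theta_\mat{t}}v'$ while the shifted relation is reversed in $\theta_{\mat{t}'}$, you must exhibit a concrete point $q$ on the facet at which the two constructions produce different paths. The paper does this by choosing two coordinates $x_a,x_b$ on which $\mat{t}$ and $\mat{t}'$ agree (possible only because the facet has dimension $d-1\geq 2$), placing $q$ so that, from the $\mat{t}$-orthant's viewpoint, the path moves in $x_a$ at the intermediate point at level $v$ and in $x_b$ at level $v'$, and then checking that from the $\mat{t}'$-orthant's viewpoint both intermediate points would move in $x_a$. Your proof needs this explicit witness and should identify where $d>2$ enters.

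A second, smaller omission affects both directions: you implicitly assume that the two orthants interpret the restricted total order the same way on the shared facet, i.e., that the ``smallest elements'' are routed to the same coordinate by both constructions. This is not a consequence of the shift condition on the orders; it is a separate property of the axis-order (the paper's Observation~\ref{obs:same_order}: slopes agreeing on a set of coordinates induce the same axis-order on the subspace they span, and opposite slopes induce the reversed one). Without this compatibility, identical restricted orders could still produce different paths on the facet, so it must be stated and checked.
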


This result highly contrasts with the two dimensional counterpart of Christ {\em et al.}~\cite{ChristJournal12}: in two dimensions we have four different slopes (and thus, four associated quadrants). We can use four different total orders (one for each of the quadrants) and the union will always be a CDR. In higher dimensions this is not true: fixing the total order for a single orthant uniquely determines the behaviour of other orthants. %
In particular, there is a unique way of completing the partial CDS $\CDO(\theta,p,\mat{t})$ to a CDR which we denote by $\CDS(\theta,p)$. 

The next step is to consider the union of several CDRs to obtain a CDS. In Section~\ref{sec:nec} we characterize for which total orders this is possible. 

\begin{theorem}[Necessary and sufficient condition for CDSs]\label{theo_characCDS}
$\theta$ is a total order such that $\bigcup_{p\in\Z^d}\CDR(\theta,p)$ forms a CDS if and only if $\theta = \theta + 2$ and $\theta = -(\theta + 1)^{-1}$.
\end{theorem}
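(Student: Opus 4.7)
The plan is to reduce the equivalence to axiom~S2 (symmetry) alone: each $\CDR(\theta,p)$ is already a CDR at $p$, so S1, S3, S4, and S5 transfer unchanged to the union. What remains is the identity $R(p,q)=R(q,p)$ for every pair $(p,q)\in\Z^{d}\times\Z^{d}$, where the left side is extracted from $\CDR(\theta,p)$ and the right side from $\CDR(\theta,q)$.

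To phrase this as a condition on $\theta$, I would unfold both sides via Theorem~\ref{theo:CDR}. Write $\mat{t}$ for the slope of $q$ from $p$ and $\mat{1}=(1,\ldots,1)$ for the canonical slope. Then $R(p,q)\subseteq\CDO(\theta_{\mat{t}}^{p},p,\mat{t})$, and Theorem~\ref{theo:CDR} expresses $\theta_{\mat{t}}^{p}[\mat{t}\cdot p,\infty)$ as a shift of $\theta[\mat{1}\cdot p,\infty)$; the analogous formula applies to $\theta_{-\mat{t}}^{q}$ governing $R(q,p)$. Since walking a discrete segment backwards inverts the order in which the CDR picks coordinates, the symmetry identity collapses to an equality between two restrictions of $\theta$, one of them shifted and one of them also reversed.

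For the \emph{only if} direction I would specialize this identity to two carefully chosen families of test pairs. One family keeps the reversal inert (so the identity is a direct equality of two shifted copies of $\theta$) and forces $\theta=\theta+2$ by varying the translation amount. The other family isolates the reversal, and after normalizing away the residual shifts using the first identity, forces exactly $\theta=-(\theta+1)^{-1}$. For the \emph{if} direction I would run this reduction in reverse: the two identities together turn the generic matching identity into a tautology, with $\theta=\theta+2$ absorbing arbitrary even shifts between $\mat{1}\cdot p$ and $\mat{1}\cdot q$, and $\theta=-(\theta+1)^{-1}$ absorbing the ``reverse plus shift-by-one'' that arises when reading a discrete segment backwards.

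The main obstacle I expect is the careful bookkeeping of interval endpoints, and in particular the origin of the ``$+1$'' in the second condition. This ``$+1$'' is not a cosmetic artifact but the off-by-one between the last index of an interval and the first index of its reverse, which is precisely what the CDR construction sees when it walks a segment backwards versus forwards. Once this endpoint calculation is pinned down, both directions of the equivalence should reduce to routine substitution into the compatibility relation of Theorem~\ref{theo:CDR}.
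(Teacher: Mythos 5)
Your proposal is correct, and for half of the argument it coincides with the paper's: reducing sufficiency to axiom (S2) alone (the other four axioms transfer from the individual CDRs once the two readings of each segment agree), deriving $\theta=-(\theta+1)^{-1}$ from symmetry, and identifying the ``$+1$'' as the off-by-one between $\mat{t}\cdot r$ and $-\mat{t}\cdot r'$ when a segment is walked backwards are exactly Lemmas~\ref{lem:nec_2} and~\ref{lem:CDS}. Where you genuinely diverge is the first condition $\theta=\theta+2$: the paper does \emph{not} obtain it from symmetry, but from a two-dimensional ``subtree property'' (Lemmas~\ref{lem:prefix} and~\ref{lem_subtree}, a consequence of the subsegment axiom applied across different base points), instantiated at three points $p,q,r$ of an affine plane whose third-quadrant orders are shifted copies $(\theta+2)[\lambda+1,\infty)$ of the order at $p$. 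Your route gets the same condition from (S2) alone, and it does work: for a pair of slope $(+1,\ldots,+1)$ the symmetry identity reads $a\prec_\theta b\Leftrightarrow 2N-1-b\prec_\theta 2N-1-a$ for $a,b\le N-1$, where $N=(+1,\ldots,+1)\cdot q$, i.e.\ an order reversal composed with a shift by the even constant $2N$; comparing the identities for $N$ and $N+1$ cancels the two reversals and leaves precisely $x\prec_\theta y\Leftrightarrow x+2\prec_\theta y+2$ on $[N,\infty)$, with $N$ arbitrary, after which the residual identity is exactly $\theta=-(\theta+1)^{-1}$. Two remarks: no \emph{single} family of test pairs ``keeps the reversal inert'' as you put it --- you must compose two instances of the symmetry identity so that the reversals cancel --- and your shorter route forgoes what the paper's subtree detour buys, namely Theorem~\ref{theo_shiftedCDS} (every orthant at every point uses literally $\theta$) and the fact that the shift condition is already forced by the subsegment structure alone.
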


This result also contrasts with the two dimensional case: if we replicate the same construction for all points of $\Z^2$, the result will always be a CDS for any total order. However, in higher dimensions this will only hold for some total orders. In particular, this result positively answers the question posed by Christ {\em et al.} of whether their approach can be extended to create CDSs in higher dimensions~\cite{christarXiv}.

The main difference between two and higher dimensional spaces is that the construction for two different slopes have a larger portion in common. In two dimensions, two quadrants share at most a line (whose behaviour is unique because of the monotonicity axiom), but in general orthants may share a subspace of dimension $d-1$. The total orders associated to each orthant must behave similarly within the subspace, which creates some dependency between the total orders. More importantly, each orthant shares subspaces with other orthants, and so on. This cascades creating common dependencies that cycle back to the original orthant and highly constrain the total orders. In Section~\ref{sec:concl} we discuss this dependency and argue that variations of this construction will also have the same necessary and sufficient conditions.

\section{Extending the total order construction to high dimensions}
\label{sec_construct}
In this section we use a total order to construct a CDR in $\Z^d$. We start by reviewing the construction of Christ {\em et al.}~\cite{ChristJournal12} for $\Z^2$. Given a total order $\theta$ and two points $p=(p_1,p_2),q=(q_1,q_2) \in \Z^2$ such that $q_1\geq p_1$ and $q_2 \geq p_2$, we view the digital segment $R(p,q)$ as a collection of steps that form a path from $p$ to $q$. Due to the monotonicity property, in each step the path increases either the first or second coordinate by one. Clearly, this path must do $q_1+q_2-p_1-p_2$ steps, out of which $q_1-p_1$ are in the $x_1$ coordinate (and the remaining ones in the $x_2$ coordinate). The choice of which steps we move in which coordinate depends on $\theta$: assume that after moving several steps we have reached some intermediate point $(r_1,r_2)$. Then, we check whether or not the number $r_1+r_2$ is among the $q_1-p_1$ smallest elements of $\theta[p_1+p_2,q_1+q_2-1 ]$. If so, we move from $(r_1,r_2)$ in the $x_1$ coordinate. Otherwise we do so in the $x_2$ coordinate (see an example in Figure~\ref{fig:construction}).

\begin{figure}[h]
\centering
\includegraphics{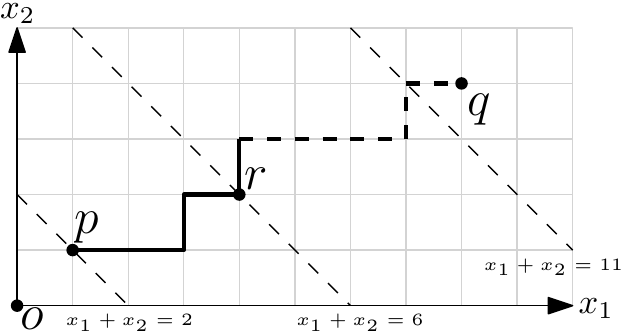}
\caption{Example of the construction of Christ {\em et al.} in $\Z^2$. Given $p=(1,1)$, $q=(8,4)$ and a total order $\theta$ such that $\theta[2,11] = 5 \prec 3 \prec 2 \prec 7 \prec 9 \prec 8 \prec 11 \prec 10 \prec 6 \prec 4$. The path must perform $q_{1}- p_{1}=7$ steps in the $x_1$ direction and $q_{2}- p_{2}=3$ steps in the $x_2$ direction. Since $p_{1}+p_{2} = 2$ and $2$ is among the $7$ smallest elements in $\theta[2,11]$, it moves in the $x_1$ direction. Similarly, at point $r=(4,2)$, the path will move in $x_2$ direction because $r_{1}+r_{2} =6$ is among the $3$ largest elements of $\theta[2,11]$. 
Observe that, for any $c \in [2,11]$ there is a unique point $m$ in the path such that $m_1+m_2=c$.}
\label{fig:construction}
\end{figure}

All of the segments created this way have slope $(+1,+1)$. In a similar way, we can pick a total order to define the segments emanating from $p$ with slope $(+1,-1)$, $(-1,+1)$ and $(-1,-1)$. We emphasize that there is no dependency between the total orders: the choice of total order for one slope has no impact on the available options for the others. Moreover, any four choices will result in a CDR (Similarly, any CDR in $\Z^d$ is associated to $2^d$ total orders of $\Z$, one for each slope). As mentioned before, this independence between quadrants does not hold in higher dimensions.

\subsection{Constructing a CDR in $\Z^d$ from a total order}
The construction of Christ {\em et al.} explains how to construct segments of slope $(+1,+1)$ in $\mathbb{Z}^2$ (or equivalently, for points in the first quadrant). The segments of different slopes are obtained via symmetry. In higher dimensions it will be useful to have an explicit way to construct segments of any slope. Thus, we first generalize the method of Christ {\em et al.} for any orthant. 

In order to get an idea of our approach, we first look at the folklore bounding box approach to construct a CDS. When traveling from point $p$ to point $q$, we consider the bounding box formed by the two points. The point with smaller $x_1$ coordinate will move in the $x_1$ coordinate until reaching the $x_1$ coordinate of another point. Afterwards, the one with smaller $x_2$ coordinate will move in the $x_2$ coordinate, and so on until the two points meet (see Figure~\ref{fig:bounding}).

\begin{figure}[h]
\centering
\scalebox{0.5}{\includegraphics{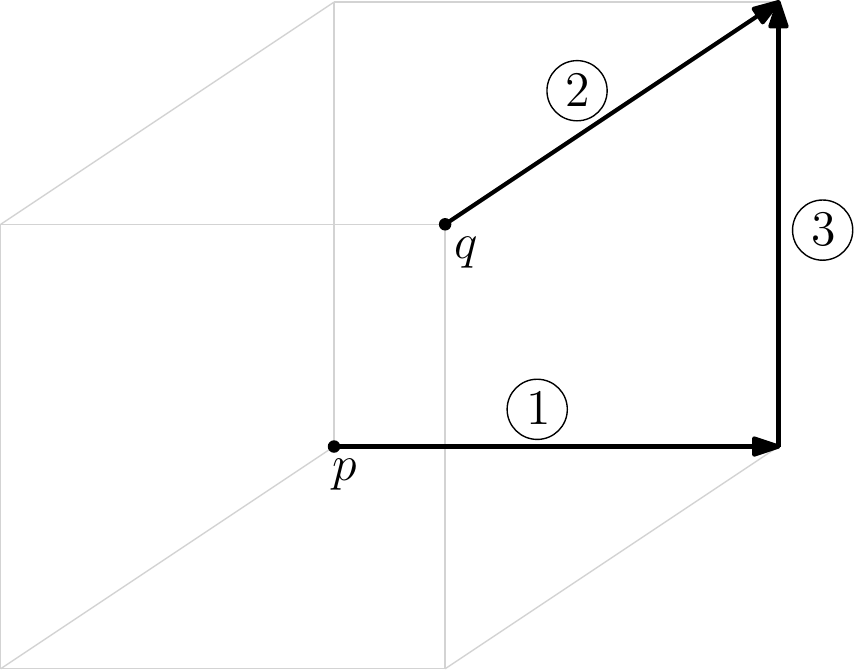}}
\caption{Example of the bounding box approach in $\Z^3$. $p=(0,3,0)$ and $q=(3,0,3)$. The number in each circle indicates the order in which we execute the movements.}
\label{fig:bounding}
\end{figure}

So, if $d=3$, for any segment whose slope is $(+1,+1,+1)$ we first do all the movements in the $x_1$ coordinate, then $x_2$ coordinate, and finally in the $x_3$ coordinate. However, if the segment has slope $(+1,-1,-1)$, then the bounding box CDS will travel first in the $x_1$ coordinate, then $x_3$ and finally $x_2$. Intuitively speaking, even though in both cases we are performing the same steps (i.e, we use the natural order  $0 \prec 1 \prec 2 \prec 3 \prec \ldots $), the order in which we execute each dimension is slightly different (or equivalently, the total order is being interpreted differently). 
We model this difference in interpretation through a new concept which we call {\em axis-order}.

Given a slope $(t_1, t_2, \ldots, t_d)$, let $a_1, \ldots a_k$ be indices of the coordinates with positive value in increasing order (that is, $t_i=+1$ if and only if $i=a_j$ for some $j\leq k$). Similarly, let $b_1, \ldots b_{d-k}$ be the indices of the the coordinates with negative value in decreasing order. Then, the {\em axis-order} of $(t_1, t_2, \ldots, t_d)$ is $x_{a_1}, x_{a_2}, \ldots, x_{a_k}, x_{b_1}, \ldots, x_{b_{d-k}}$. For example, the axis-order of $(-1,+1,+1)$ is $x_2,x_3,x_1$, and the axis-order of $(+1,-1,+1)$ is $x_1,x_3,x_2$. As we will see later, it will be useful to consider subspaces of $\Z^d$. We observe a property that follows from the definition of axis-order.

\begin{observation}
\label{obs:same_order}
Let $a_1, \ldots a_k$ be a sequence of indices such that $a_1<\ldots<a_k$, and let $\mat{t},\mat{t}' \in \{-1,1\}^d$ be two slopes such that $\mat{t}_{a_i}=\mat{t}'_{a_i}$ (for all $i\leq k$). Then, $\mat{t}$ and $\mat{t}'$ have the same axis-order $\tau$ restricted to a subspace $\mathcal{H}$ spanned by $\{x_{a_{1}}, x_{a_{2}}, \ldots, x_{a_k}\}$. Moreover, the axis-order of $-\mat{t}$ and $-\mat{t}'$ restricted to $\mathcal{H}$ is the reverse of $\tau$.
\end{observation}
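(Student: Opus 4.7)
The plan is to carry out a direct unwinding of the definition of axis-order, with careful bookkeeping of indices. First, I would make precise what "axis-order restricted to $\mathcal{H}$" means: it is obtained by restricting the slope $\mathbf{t}$ to the coordinates $a_1,\dots,a_k$ (yielding $(\mathbf{t}_{a_1},\dots,\mathbf{t}_{a_k})$) and then applying the axis-order definition to this $k$-tuple, using the induced ordering $a_1<\dots<a_k$ on the indices. Equivalently, it is the subsequence of the full axis-order of $\mathbf{t}$ consisting of the axes $x_{a_1},\dots,x_{a_k}$. An easy consistency check shows these two descriptions coincide, because the definition's "sort positives in increasing order, then negatives in decreasing order" rule commutes with taking a subsequence on a fixed index set.

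For the first claim, the hypothesis $\mathbf{t}_{a_i}=\mathbf{t}'_{a_i}$ immediately implies that the partition of $\{a_1,\dots,a_k\}$ into indices of positive sign and indices of negative sign is identical for $\mathbf{t}$ and $\mathbf{t}'$. Let $J=\{j : \mathbf{t}_{a_j}=+1\}$ and $L=\{l : \mathbf{t}_{a_l}=-1\}$; list $J=\{j_1<\dots<j_s\}$ and $L=\{l_1<\dots<l_r\}$ with $s+r=k$. Then by definition,
\[
\tau \;=\; x_{a_{j_1}},\,x_{a_{j_2}},\,\dots,\,x_{a_{j_s}},\,x_{a_{l_r}},\,x_{a_{l_{r-1}}},\,\dots,\,x_{a_{l_1}},
\]
and this expression depends on $\mathbf{t}$ and $\mathbf{t}'$ only through the (coinciding) sets $J$ and $L$, so the restricted axis-orders of $\mathbf{t}$ and $\mathbf{t}'$ are the same.

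For the second claim, since $-\mathbf{t}$ and $-\mathbf{t}'$ also agree on the coordinates $a_1,\dots,a_k$, the first claim already gives that they share a common restricted axis-order. It remains to show this common axis-order equals $\tau^{-1}$, the reverse of $\tau$. But flipping signs swaps the roles of $J$ and $L$: the positive indices of $-\mathbf{t}$ among $\{a_1,\dots,a_k\}$ are $L$, and the negative ones are $J$. Applying the definition yields the restricted axis-order of $-\mathbf{t}$ as
\[
x_{a_{l_1}},\,x_{a_{l_2}},\,\dots,\,x_{a_{l_r}},\,x_{a_{j_s}},\,x_{a_{j_{s-1}}},\,\dots,\,x_{a_{j_1}},
\]
which is exactly the reversal of the sequence displayed above.

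There is essentially no substantive obstacle here beyond setting up notation cleanly; the only mild subtlety is making certain that "restricted axis-order" is unambiguously defined so that the verification is not circular. Once that is nailed down in a single sentence, both assertions follow from reading off the two list-formulas and comparing.
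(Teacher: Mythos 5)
Your proof is correct and matches the paper's treatment: the paper states this as an observation that ``follows from the definition of axis-order'' and gives no further argument, and your direct unwinding (agreeing sign pattern gives identical sets $J$ and $L$, and negating the slope swaps their roles, which reverses the concatenated list) is exactly the intended justification.
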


With the help of axis-order we can extend the two dimensional construction to higher dimensions. Given a point $p=(p_1,\ldots, p_d)\in \Z^d$, a total order $\theta$ and a slope $\mat{t}$, we construct the set of rays emanating from $p$ with that slope. Let $\mathcal{O}_{\mat{t}}(p)=\{q \in \Z^d \colon \mat{t}_i\cdot q_i \geq \mat{t}_i \cdot p_i\}$: by definition, the segment from $p$ to any point in $\mathcal{O}_{\mat{t}}(p)$ has slope $\mat{t}$. Also, let  $x_{a_1}, x_{a_2}, \ldots$ be the axis-order of $\mat{t}$. 

For any point $q=(q_1, \ldots, q_d) \in \mathcal{O}_{\mat{t}}(p)$ we construct the segment $R(p,q)$. Similar to the two dimensional case, the path from $p$ to $q$ must do $\mat{t} \cdot q-\mat{t} \cdot p$ steps, out of which $|p_1-q_1|$ will be in the first coordinate, $|p_2-q_2|$ in the second, and so on. We traverse through intermediate points, each time increasing the inner product with $\mat{t}$ by one. At each intermediate point $r$, we check the position of $\mat{t} \cdot r$ in $\theta[\mat{t} \cdot p, \mat{t} \cdot q-1]$; if it is among the $|p_{a_1}-q_{a_1}|$ smallest elements in $\theta[\mat{t} \cdot p, \mat{t} \cdot q-1]$ then we move in the $x_{a_1}$ coordinate. Otherwise, if it is among the smallest $|p_{a_1}-q_{a_1}|+|p_{a_2}-q_{a_2}|$ elements we move in $x_{a_2}$, and so on. 

For example, if the total order $\theta$ satisfies $3\prec_\theta 1 \prec_\theta 5 \prec_\theta 7 \prec_\theta 9 \prec_\theta 8 \prec_\theta 6 \prec_\theta 4 \prec_\theta 2 \prec_\theta 0$, $p=(0,0,0)$ and $q=(2,-3,5)$, the slope is $(+1,-1,+1)$, axis-order is $x_1, x_3, x_2$. So we must look at $\theta[p\cdot(+1,-1,+1),q\cdot(+1,-1,+1)-1]=\theta[0,9]$. In this total order the number $(+1,-1,+1)\cdot (0,0,0)=0$ is the largest element in $\theta[0,9]$, so we move from $(0,0,0)$ in the $x_2$ coordinate to point $(0,-1,0)$. At point $(0,-1,0)$ the number $(+1,-1,+1)\cdot (0,-1,0)=1$ is the second smallest element in $\theta[0,9]$, so we move in the $x_1$ coordinate, and so on. Overall the path is $(0,0,0) \rightarrow (0,-1,0) \rightarrow (1,-1,0) \rightarrow (1,-2,0) \rightarrow (2,-2,0) \rightarrow (2,-3,0) \rightarrow (2,-3,1) \rightarrow (2,-3,2) \rightarrow (2,-3,3) \rightarrow (2,-3,4) \rightarrow (2,-3,5)$.

\begin{definition}
For any point $p\in \Z^d$, slope $\mat{t}$, and total order $\theta$, we call the collection of segments $\{R(p,q) \colon q\in \mathcal{O}_{\mat{t}}(p)\}$ the {\em total order} construction of $\theta$ (centered at $p$) for the slope $\mat{t}$, and denote it by $\CDO(\theta,p,\mat{t})$.
\end{definition}

\subsection{Properties of the total order construction}

\begin{lemma}[Translation Lemma]
\label{lem:DO_tran}
For any $p \in \Z^d$, slope $\mat{t}$ and total order $\theta$, the set of segments in $\CDO(\theta, p, \mat{t})$ is the translated copy of the set of segments in $\CDO(\theta - \mat{t}\cdot p, o, \mat{t})$, where $o$ is the origin.
\end{lemma}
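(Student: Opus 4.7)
The plan is to unfold the definition of $\CDO$ for both sides and check that the construction in $\CDO(\theta,p,\mat{t})$ translated by $-p$ produces exactly the construction in $\CDO(\theta - \mat{t}\cdot p, o, \mat{t})$. Concretely, for any $q \in \mathcal{O}_\mat{t}(p)$, I would show by induction on the step index that the decision made at every intermediate point of $R(p,q)$ coincides with the decision made at the corresponding translated point of $R(o,q-p)$.

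First I would observe that the axis-order used in both constructions depends only on the slope $\mat{t}$, hence it is identical on both sides. Next, for each coordinate $i$, the number of steps in the $x_i$ direction equals $|p_i - q_i| = |o_i - (q-p)_i|$, so the two paths must allocate the same total number of steps per coordinate. Moreover, $q \in \mathcal{O}_\mat{t}(p)$ if and only if $q-p \in \mathcal{O}_\mat{t}(o)$, so the two constructions deal with the same family of endpoints under the bijection $q \mapsto q-p$.

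The key step is the equivalence of decisions. For the construction starting at $p$, an intermediate point $r$ triggers a move in $x_{a_j}$ depending on the rank of $\mat{t}\cdot r$ in $\theta[\mat{t}\cdot p,\, \mat{t}\cdot q - 1]$. Under the translation $r \mapsto r - p$, the inner product transforms as $\mat{t}\cdot(r-p) = \mat{t}\cdot r - \mat{t}\cdot p$, and the relevant interval shifts to $[0,\, \mat{t}\cdot q - \mat{t}\cdot p - 1]$. By the definition of the shift operation, $a \prec_\theta b$ iff $a - \mat{t}\cdot p \prec_{\theta - \mat{t}\cdot p} b - \mat{t}\cdot p$, hence
\[
\theta[\mat{t}\cdot p,\, \mat{t}\cdot q - 1] \;=\; (\theta - \mat{t}\cdot p)[0,\, \mat{t}\cdot(q-p) - 1] + \mat{t}\cdot p,
\]
so the rank of $\mat{t}\cdot r$ in the first restricted order equals the rank of $\mat{t}\cdot(r-p)$ in the second. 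Since the thresholds $|p_{a_1}-q_{a_1}|, |p_{a_1}-q_{a_1}|+|p_{a_2}-q_{a_2}|,\ldots$ are also preserved by the translation, both constructions pick the same axis at corresponding steps.

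The inductive step is then immediate: if the paths agree (up to translation by $-p$) at step $k$, the chosen coordinate at step $k+1$ is the same, so the next point also agrees up to translation. The base case is that $p$ maps to $o$. I do not anticipate any serious obstacle; the only point worth handling carefully is bookkeeping of the shift on the restricted interval $[\mat{t}\cdot p, \mat{t}\cdot q - 1]$, which reduces to the identity $\theta[a,b] + c = (\theta+c)[a+c, b+c]$ already recorded in the preliminaries.
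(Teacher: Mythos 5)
Your proposal is correct and follows essentially the same route as the paper's proof: an induction along the path showing that each translated intermediate point makes the same move, justified by the fact that the rank of $\mat{t}\cdot r$ in $\theta[\mat{t}\cdot p, \mat{t}\cdot q-1]$ equals the rank of $\mat{t}\cdot(r-p)$ in the shifted restricted order. Your explicit remarks on the axis-order and the per-coordinate step thresholds are details the paper leaves implicit, but the argument is the same.
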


\begin{lemma}
\label{lem:CDO}
For any $p\in \Z^d$, slope $\mat{t}$ and total order $\theta$, the set of segments in $\CDO(\theta,p,\mat{t})$ forms a partial CDS on $\{p\}\times \mathcal{O}_{\mat{t}}(p)$. 
\end{lemma}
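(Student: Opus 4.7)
The plan is to verify each of the five axioms (S1)--(S5) for the collection $\CDO(\theta,p,\mat{t})$. By Lemma~\ref{lem:DO_tran} we may assume $p=o$ without loss of generality, although the arguments below work verbatim for any $p$. Axioms (S1), (S2), and (S5) are immediate from the description of the construction: every step moves by one unit in a single coordinate direction, giving the $2d$-neighbor grid path property (S1); on $\{p\}\times\mathcal{O}_{\mat{t}}(p)$ the only pair $(q,p)$ that also belongs to the set has $q=p$, so (S2) is vacuous; and the construction performs exactly $|p_i-q_i|$ steps in the $x_i$ direction, which is zero whenever $p_i=q_i$, giving (S5).

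The substance of the proof lies in (S3) and (S4). For (S3), fix $q\in\mathcal{O}_{\mat{t}}(p)$ and let $r\in R(p,q)$. Because each step of the construction increases $\mat{t}\cdot(\cdot)$ by one, $r$ is the unique intermediate point at value $\mat{t}\cdot r$; in particular $r\in\mathcal{O}_{\mat{t}}(p)$ and the coordinates of $r$ record how many steps of each type have been taken. Let $A_j$ denote the $j$-th block of the partition of $\theta[\mat{t}\cdot p,\mat{t}\cdot q-1]$ used in the construction of $R(p,q)$, and let $B_j$ denote the $j$-th block of the partition of $\theta[\mat{t}\cdot p,\mat{t}\cdot r-1]$ used in the construction of $R(p,r)$. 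I would prove by induction on $j$ that $A_j\cap[\mat{t}\cdot p,\mat{t}\cdot r-1]=B_j$. The cardinalities coincide (both equal $|p_{a_j}-r_{a_j}|$, the number of $x_{a_j}$-steps taken from $p$ to $r$ inside $R(p,q)$), and an exchange argument handles the rest: if $c\in A_j\cap[\mat{t}\cdot p,\mat{t}\cdot r-1]$ and $c'\in B_j$ disagreed with $c'\prec_\theta c$, then the definitions of $A_j$ and $B_j$ as ``smallest remaining elements'' of their respective ground sets would immediately yield a contradiction. Once the blocks agree, the direction chosen at every intermediate value is the same in both constructions, so $R(p,r)$ is precisely the prefix of $R(p,q)$ ending at $r$.

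For (S4), let $c^*=\mat{t}\cdot q$ and let $k$ be the $\theta$-rank of $c^*$ inside $\theta[\mat{t}\cdot p,c^*]$. Choose $j^*$ with $\sum_{i<j^*}|p_{a_i}-q_{a_i}|<k\leq\sum_{i\leq j^*}|p_{a_i}-q_{a_i}|+1$, which is always possible because these $d$ ranges cover $\{1,\ldots,c^*-\mat{t}\cdot p+1\}$. Let $r$ be the point obtained from $q$ by incrementing its $a_{j^*}$-th coordinate by $\mat{t}_{a_{j^*}}$; this $r$ lies in $\mathcal{O}_{\mat{t}}(p)\setminus\{q\}$. I would then verify that the partition used by $R(p,r)$ places $c^*$ in block $j^*$ (immediate from the choice of $j^*$) and that removing $c^*$ from that partition recovers block by block the partition used by $R(p,q)$. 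This last comparison splits on whether each $c\in[\mat{t}\cdot p,c^*-1]$ satisfies $c\prec_\theta c^*$ or $c\succ_\theta c^*$, since these cases correspond to the rank of $c$ shifting by $0$ or $1$ between the two ground sets; the offset is precisely absorbed by the $+1$ in the $j^*$-th block size. Thus $R(p,r)$ passes through $q$ and then takes a final step to $r$, so $R(p,q)\subset R(p,r)$.

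The main obstacle is the rank bookkeeping in (S3) and (S4): tracking how the $\theta$-rank of a fixed value shifts when the ground set $\theta[\mat{t}\cdot p,\cdot]$ is enlarged or restricted, and checking that these shifts are exactly compensated by the corresponding changes in the block sizes $|p_{a_j}-q_{a_j}|$ used by the construction.
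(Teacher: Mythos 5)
Your proof is correct, but it takes a genuinely different route from the paper. The paper disposes of this lemma in two sentences by observing that $\CDO(\theta,p,\mat{t})$ is a special case of the partial CDS of Theorem~\ref{thm:christ_2} (Theorem~16 of Christ \emph{et al.}), asserting that their proof for slope $(+1,\ldots,+1)$ ``extends naturally'' to arbitrary slopes under the axis-order reindexing. You instead verify the five axioms directly, which makes the lemma self-contained and actually substantiates the extension the paper only asserts: your rank bookkeeping for (S3) and (S4) is exactly the content hidden in ``extends naturally.'' Both of your core arguments are sound. For (S3), the cleanest way to finish your exchange step is to note that both $\bigl(A_j\cap[\mat{t}\cdot p,\mat{t}\cdot r-1]\bigr)_j$ and $(B_j)_j$ are $\theta$-order-respecting partitions of the same ground set $[\mat{t}\cdot p,\mat{t}\cdot r-1]$ with identical block sizes $|p_{a_j}-r_{a_j}|$ (the size count uses coordinatewise monotonicity of the path), and two such partitions must coincide. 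For (S4), your rank-shift analysis (elements $\theta$-below $c^*$ keep their rank, elements $\theta$-above lose one, and the discrepancy is absorbed by enlarging block $j^*$ by one) is exactly right, and a valid $j^*$ always exists since the half-open ranges you describe cover $\{1,\ldots,c^*-\mat{t}\cdot p+1\}$; the resulting $r$ differs from $q$, giving the strict containment (S4) demands. What the paper's approach buys is brevity and reuse of a published result; what yours buys is a complete argument that does not defer the general-slope case to the reader.
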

\begin{proof}
This statement is a particular case of of Theorem~\ref{thm:christ_2}: we are interested in segments of a single slope emanating from a fixed point, whereas Theorem~\ref{thm:christ_2} only requires segments of a fixed slope). The proof given by Christ {\em et al.}~\cite{ChristJournal12} is for slope $(+1, \ldots , +1)$, but the arguments extend naturally for the general case.
\end{proof}

Let $\theta_0$ be the natural order on the integers (that is, $\theta_0 = \{\ldots\prec  -1 \prec 0\prec 1\prec 2\prec \ldots\}$). Fix any point $p\in \Z^d$ and apply the total order construction $\CDO(\theta,p,\mat{t})$ to all slopes. Let $\CDR(\theta_0,p)$ be the union of all segments created. Similarly, let $\theta_1$ be result of swapping the position of $-1$ and $-2$ in $\theta_0$ (i.e., $\theta_1=\{\ldots \prec -1 \prec -2 \prec 0\prec 1\prec 2 \ldots\}$). And let $\CDR(\theta_1,p)$ be the union of all segments created when using $\theta_1$ instead.

\begin{proposition}\label{prop_nocdr}
$\CDR(\theta_0,p)$ is a CDR that is included in the bounding box CDS whereas $\CDR(\theta_1,p)$ is not a CDR. 
\end{proposition}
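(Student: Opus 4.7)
The plan is to address the three claims separately, using Theorem~\ref{theo:CDR} as the main tool for both the positive and negative directions. The positive case will be immediate from the shift-invariance of $\theta_0$, the bounding-box containment will follow from tracing the construction step-by-step, and the negative case will be handled by exhibiting a specific $p$ at which the necessary condition in Theorem~\ref{theo:CDR} fails for $\theta_1$.

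For the first claim, I will observe that for any integer $a$ the order $\theta_0[a,\infty)$ is just the natural order on $\{a, a+1, \ldots\}$, and shifting by a constant $c$ yields $\theta_0[a+c, \infty)$. Consequently, for any two slopes $\mat{t}, \mat{t}' \in T$, $\theta_0[\mat{t}\cdot p, \infty) = \theta_0[\mat{t}'\cdot p, \infty) - \mat{t}'\cdot p + \mat{t}\cdot p$, so the condition of Theorem~\ref{theo:CDR} holds and $\CDR(\theta_0, p)$ is a CDR. For the bounding-box containment, I will use the fact that when $\theta = \theta_0$ the rank of $\mat{t}\cdot r$ in $\theta_0[\mat{t}\cdot p, \mat{t}\cdot q - 1]$ is exactly $\mat{t}\cdot r - \mat{t}\cdot p + 1$, which increments by one at every step of the path. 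Tracing the construction with axis-order $x_{a_1}, x_{a_2}, \ldots$, the first $|p_{a_1} - q_{a_1}|$ steps go in direction $x_{a_1}$, the next $|p_{a_2} - q_{a_2}|$ in direction $x_{a_2}$, and so on, which is precisely the bounding-box traversal described in Section~\ref{sec_construct}.

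The hardest part is the negative claim, because $\CDR(\theta_1, p)$ is in fact a CDR when $p$ is the origin, so the statement should be read as asserting that some choice of $p$ makes the union fail. My plan is to take $p = (-2, 0, \ldots, 0) \in \Z^d$ (with $d \geq 3$) together with the slopes $\mat{t} = (+1, +1, \ldots, +1)$ and $\mat{t}' = (-1, +1, \ldots, +1)$, so that $\mat{t}\cdot p = -2$ and $\mat{t}'\cdot p = 2$. The condition of Theorem~\ref{theo:CDR} then requires $\theta_1[-2, \infty) = \theta_1[2, \infty) - 4$. But the left side begins $-1 \prec -2 \prec 0 \prec 1 \prec \cdots$, since the range captures the swap that defines $\theta_1$, whereas the right side is the shift of the natural order $2 \prec 3 \prec 4 \prec \cdots$ down by $4$, namely $-2 \prec -1 \prec 0 \prec 1 \prec \cdots$. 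These disagree on the ordering of $-1$ and $-2$, so the condition fails and $\CDR(\theta_1, p)$ is not a CDR. As a concrete visualization of the inconsistency, the boundary point $r = (-2, 1, 1, 0, \ldots, 0)$ is assigned different paths from $p$ by $\CDO(\theta_1, p, \mat{t})$ and $\CDO(\theta_1, p, \mat{t}')$: the two intermediate points are $(-2, 0, 1, 0, \ldots, 0)$ and $(-2, 1, 0, 0, \ldots, 0)$ respectively.
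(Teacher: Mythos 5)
Your proposal is correct, and all three computations check out (the shift-invariance of $\theta_0$, the rank formula $\mat{t}\cdot r-\mat{t}\cdot p+1$, the failure of the compatibility condition at $p=(-2,0,\ldots,0)$, and the two divergent paths to $(-2,1,1,0,\ldots,0)$, which I verified step by step). The route differs from the paper's in one substantive way: the paper never invokes Theorem~\ref{theo:CDR} here. For the positive claim it only traces the construction and observes it reproduces the bounding-box path (the CDR property then being inherited from the bounding-box CDS), and for the negative claim it works directly at $p=(0,0,2)$ in $\Z^3$ with $q=(1,1,3)$ and $q'=(1,1,1)$, exhibiting $R(p,q)\cap R(p,q')=\{(0,0,2),(1,1,2)\}$, a disconnected intersection that violates (S1)/(S3). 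You instead derive both directions from Theorem~\ref{theo:CDR}; this is logically sound and arguably cleaner, but note that in the paper Proposition~\ref{prop_nocdr} \emph{precedes} and motivates Theorem~\ref{theo:CDR}, so your argument introduces a forward reference. It is not circular (Lemmas~\ref{lem:necCDR} and~\ref{lem_sufficient} do not use the proposition), but it defeats the expository role the proposition plays; your closing ``concrete visualization'' with the two distinct intermediate points is exactly the self-contained violation the paper uses as its actual proof, so you could promote that to the main argument. Two further remarks: your observation that the $\theta_1$ claim must be read existentially in $p$ (since $\CDR(\theta_1,o)$ \emph{is} a CDR) is correct and is something the paper only handles implicitly by fixing $p=(0,0,2)$; and in the bounding-box containment step you assert, rather than argue, that doing the coordinates in axis-order coincides with the bounding-box traversal --- the missing half-sentence is that the axis-order lists the positive-slope coordinates first (traversed from $p$) and the negative-slope coordinates last in reversed index order (traversed from $q$), which is precisely how the two endpoints alternate in the bounding-box scheme.
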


\subsection{Gluing orthants to obtain CDRs}
The second example of Proposition~\ref{prop_nocdr} shows an example of a  total order that cannot be applied everywhere to form a CDR. %
Theorem~\ref{theo:CDR} stated in Section~\ref{sec_organiz} shows the relationship that total orders in different slopes must satisfy in order to create a CDR. Intuitively speaking, this correlation is so big that choosing one total order effectively fixes the rest. The remainder of this section is dedicated to proving this interdependency. We start by showing one side of the implication.

\begin{lemma}[Necessary condition for CDRs]\label{lem:necCDR}
Let $p \in \Z^d$ and $\{ \theta_{\mat{t}} \colon t\in T\}$ be a set of $2^d$ total orders such that $\bigcup_{\mat{t}\in T}\CDO(\theta_{\mat{t}},p,\mat{t})$ forms a CDR. Then, for any $\mat{t},\mat{t}'\in T$, it holds that $\theta_\mat{t}[\mat{t}\cdot p, \infty) = \theta_{\mat{t}'}[ \mat{t}' \cdot p, \infty) - \mat{t}' \cdot p + \mat{t}\cdot p$.
\end{lemma}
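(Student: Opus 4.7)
I would first handle the case where $\mat{t}$ and $\mat{t}'$ differ in exactly one coordinate, and then cascade to arbitrary pairs through a chain of single-coordinate flips; since the shifts compose telescopically, the cascading step is routine.

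For the single-flip case, suppose $\mat{t}$ and $\mat{t}'$ agree on every coordinate except $j$. The hyperplane $\mathcal{H}_j := \{r \in \Z^d : r_j = p_j\}$ lies inside both orthants $\mathcal{O}_{\mat{t}}(p)$ and $\mathcal{O}_{\mat{t}'}(p)$. Axiom (S5) ensures that for any $q \in \mathcal{H}_j$ the CDR segment $R(p,q)$ stays inside $\mathcal{H}_j$, and the CDR hypothesis makes this segment unique; hence the segments produced on $\mathcal{H}_j$ by $\CDO(\theta_{\mat{t}},p,\mat{t})$ and by $\CDO(\theta_{\mat{t}'},p,\mat{t}')$ must coincide. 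Moreover, Observation~\ref{obs:same_order} guarantees that the two slopes induce the same axis-order on $\mathcal{H}_j$, so both constructions apply the same rule for translating total-order ranks into movement directions.

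The core step is to show that these shared segments uniquely determine the restriction of $\theta_{\mat{t}}$ to $[\mat{t}\cdot p, \infty)$. Let $a_1, a_2$ be the first two entries of the common axis-order on $\mathcal{H}_j$. For each $M \ge 0$ and each $k \in \{0, 1, \ldots, M+1\}$, take $q \in \mathcal{H}_j$ with $q_{a_1} - p_{a_1} = k\, t_{a_1}$, $q_{a_2} - p_{a_2} = (M+1-k)\, t_{a_2}$, and $q_i = p_i$ for all other $i$, so $\mat{t}\cdot q - \mat{t}\cdot p = M+1$. By construction, $\CDO(\theta_{\mat{t}}, p, \mat{t})$ moves in direction $a_1$ at an intermediate point $r$ precisely when $\mat{t}\cdot r$ is among the $k$ smallest elements of $\theta_{\mat{t}}[\mat{t}\cdot p, \mat{t}\cdot p + M]$, and moves in $a_2$ otherwise. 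Since each such $r$ satisfies $r_j = p_j$, one checks that $\mat{t}\cdot r - \mat{t}'\cdot r = (t_j - t'_j)p_j = \mat{t}\cdot p - \mat{t}'\cdot p$, so reading the same segment via $\theta_{\mat{t}'}$ yields the analogous rank condition in $\theta_{\mat{t}'}[\mat{t}'\cdot p, \mat{t}'\cdot p + M]$ on the shifted values. Letting $k$ range over $\{0, \ldots, M+1\}$ exposes every rank, which forces
\[
\theta_{\mat{t}}[\mat{t}\cdot p, \mat{t}\cdot p + M] = \theta_{\mat{t}'}[\mat{t}'\cdot p, \mat{t}'\cdot p + M] - \mat{t}'\cdot p + \mat{t}\cdot p
\]
for every $M$, and taking $M \to \infty$ yields the adjacent-slope claim.

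The main obstacle is the reconstruction argument of the core step: carefully verifying that the shared two-direction segments really encode the rank of every element of $\theta_{\mat{t}}[\mat{t}\cdot p, \mat{t}\cdot p + M]$ and thus pin down the restricted order. The hypothesis $d > 2$ enters precisely here, since we need at least two axis-directions $a_1, a_2$ on the hyperplane $\mathcal{H}_j$ to play off each other; this matches the known fact that in $d = 2$ the orders on different quadrants may be chosen independently.
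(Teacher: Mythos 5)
Your proposal is correct and follows essentially the same route as the paper: restrict to the $(d-1)$-dimensional subspace shared by two orthants differing in one coordinate (which needs $d>2$), use Observation~\ref{obs:same_order} to get a common axis-order, exploit the identity $\mat{t}'\cdot r = \mat{t}\cdot r - \mat{t}\cdot p + \mat{t}'\cdot p$ for points with $r_j=p_j$ together with the rank-to-direction rule, and chain through a sequence of adjacent orthants for general pairs. The only difference is presentational: the paper argues by contradiction from a single witness endpoint chosen from one violated inequality, while you reconstruct the whole restricted order directly by varying the endpoint over all splits $k$ of a fixed length $M+1$.
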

\begin{proof}[Proof (sketch)]
We prove the statement by contradiction. That is, assume that there exist two slopes $\mat{t}, \mat{t}'$ such that $v \prec_{\theta_\mat{t}} v' $ but $v' - \mat{t} \cdot p + \mat{t}'\cdot p \prec_{\theta_{\mat{t}'}} v - \mat{t} \cdot p + \mat{t}'\cdot p $. Without loss of generality, we can choose $\mat{t}$ and $\mat{t}'$ so that they share a plane (pick a sequence of intermediate orthants so that pairwise they do, and look at the first time in which the equality is not satisfied).
We pick a point $q$ such that $R(p,q)$ has both slope $\mat{t}$ and $\mat{t}'$, and look at $R(p,q)$ from both the viewpoints of $\CDO(\theta_\mat{t}, p, \mat{t})$ and $\CDO(\theta_{\mat{t}'}, p, \mat{t}')$. 

Along the path $R(p,q)$ we look at two intermediate points $r$ and $r'$. The main feature of these points is that the behaviour of $R(p,q)$ at those points depends on the positions of $v$ and $v'$ in $\theta_\mat{t}$ (if we look at it from the viewpoint of $\CDO(\theta_\mat{t}, p, \mat{t})$). Since $v \prec_{\theta_\mat{t}} v' $, we can choose $q$ in a way that the path will move in different directions at the two points. Then, we study the same segment from the viewpoint of the other orthant. In this case, the behaviour of the same intermediate points will depend on the positions of $v' - \mat{t} \cdot p + \mat{t}'\cdot p$ and $v - \mat{t} \cdot p + \mat{t}'\cdot p $ in the shifted total order instead. Thus, if the relationships are reversed, the two paths behave differently and in particular we cannot have a CDR.
\end{proof}

\begin{lemma}[Sufficient condition for CDRs]\label{lem_sufficient}
For any point $p \in \Z^d$, let $\{ \theta_{\mat{t}} \colon t\in T\}$ be a set of $2^d$ total orders such that $\theta_\mat{t}[\mat{t}\cdot p, \infty) = \theta_{\mat{t}'}[ \mat{t}' \cdot p, \infty) - \mat{t}' \cdot p + \mat{t} \cdot p$ for any $\mat{t},\mat{t}'\in T$. Then, $\bigcup_{\mat{t}\in T}\CDO(\theta_{\mat{t}},p,\mat{t})$ forms a CDR. 
\end{lemma}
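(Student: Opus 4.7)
The plan is to reduce the whole claim to a single cross-orthant compatibility statement. Lemma~\ref{lem:CDO} already guarantees that each individual $\CDO(\theta_\mat{t},p,\mat{t})$ is a partial CDS on $\{p\}\times\mathcal{O}_\mat{t}(p)$, so axioms (S1) and (S5) hold inside every orthant, (S2) is immediate in the CDR setting, and (S4) holds because we can always prolong a path by one extra step in the slope $\mat{t}$. The only real concern is axiom (S3): if $r\in R(p,q)$ then $R(p,r)$ may sit on a boundary between orthants, and we must know the union is well-defined there. The argument therefore reduces to the following gluing statement, which I would prove first: whenever $q\in \mathcal{O}_\mat{t}(p)\cap \mathcal{O}_{\mat{t}'}(p)$, the paths produced by $\CDO(\theta_\mat{t},p,\mat{t})$ and by $\CDO(\theta_{\mat{t}'},p,\mat{t}')$ are identical.

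To establish gluing, I would first isolate the relevant coordinates. Let $I=\{i\colon \mat{t}_i=\mat{t}'_i\}$. The monotonicity constraints defining both orthants force $q_i=p_i$ for every $i\notin I$, so both constructions only ever move in coordinates indexed by $I$; the axes outside $I$ contribute threshold $|p_i-q_i|=0$ and are never selected. Observation~\ref{obs:same_order} then gives that the axis-orders of $\mat{t}$ and $\mat{t}'$, restricted to $I$, agree. So the two constructions see the same ordered list of candidate coordinates and the same cumulative thresholds at each cut-off point.

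It then remains to check that, at every intermediate point $r$, both constructions consult their total orders and obtain the same answer. Since each step of the path increases both $\mat{t}\cdot r$ and $\mat{t}'\cdot r$ by one, we have $\mat{t}'\cdot r-\mat{t}'\cdot p=\mat{t}\cdot r-\mat{t}\cdot p$, and similarly $\mat{t}'\cdot q-\mat{t}'\cdot p=\mat{t}\cdot q-\mat{t}\cdot p$. The hypothesis $\theta_\mat{t}[\mat{t}\cdot p,\infty)=\theta_{\mat{t}'}[\mat{t}'\cdot p,\infty)-\mat{t}'\cdot p+\mat{t}\cdot p$ then says precisely that the rank of $\mat{t}\cdot r$ in $\theta_\mat{t}[\mat{t}\cdot p,\mat{t}\cdot q-1]$ equals the rank of $\mat{t}'\cdot r$ in $\theta_{\mat{t}'}[\mat{t}'\cdot p,\mat{t}'\cdot q-1]$. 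Combined with the matching axis-orders and thresholds, this forces both constructions to select the same next coordinate at $r$. A step-by-step induction starting from $r=p$ then gives the identity of the two paths. If the notation becomes cumbersome I would invoke Lemma~\ref{lem:DO_tran} to shift $p$ to the origin first.

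The main obstacle I expect is the careful book-keeping of the reduction from (S3) to the gluing statement, rather than the gluing argument itself. Once gluing is in place, (S3) follows quickly: pick any slope $\mat{t}$ with $q\in \mathcal{O}_\mat{t}(p)$; monotonicity of the path puts $r$ inside $\mathcal{O}_\mat{t}(p)$ as well, gluing ensures $R(p,r)$ is unambiguous across whatever other orthants contain $r$, and Lemma~\ref{lem:CDO} inside $\mathcal{O}_\mat{t}(p)$ yields $R(p,r)\subseteq R(p,q)$. With that, all five CDR axioms hold for the union, completing the proof.
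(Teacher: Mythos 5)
Your proposal is correct and follows essentially the same route as the paper: both reduce the claim to showing the union is well-defined on the intersection of two orthants, use the fact that the shared subspace forces agreement in the coordinates where the slopes coincide, invoke Observation~\ref{obs:same_order} to match the axis-orders there, and use the shift hypothesis to match the ranks consulted at each intermediate point (the paper routes this through Lemma~\ref{lem:DO_tran}, which you mention as optional). Your version merely spells out the rank-matching induction in more detail than the paper's, which is fine.
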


This completely characterizes the CDRs that can be made with the total order construction in $\Z^d$. %
For any point $p$, slope $\mat{t}$ and total order $\theta$, there is a unique CDR that can be created in this way and contains $\CDO(\theta,p,\mat{t})$. 
Since the choice of slope is not important, let $\CDR(\theta,p)$ be the unique CDR that contains $\CDO(\theta,p,(+1,\ldots, +1))$.

\begin{corollary}
For any $p\in\Z^d$ there exist arbitrarily many CDRs with $O(\log n)$ Hausdorff distance. 
\end{corollary}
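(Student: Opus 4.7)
The plan is to identify arbitrarily many total orders $\theta$ for which $\CDR(\theta, p)$ has Hausdorff distance $O(\log n)$, and then invoke Lemma~\ref{lem_sufficient} to assemble the CDR for each. The corollary is then immediate.

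Theorem~\ref{thm:christ_2} (and the underlying construction of Christ et al.~\cite{ChristJournal12}) supplies arbitrarily many total orders $\theta$ such that the single-slope partial CDS $\CDO(\theta, p, (+1,\ldots,+1))$ has Hausdorff distance $O(\log n)$: their 2D Hausdorff analysis depends only on the combinatorial balance of $\theta$ on each integer interval, and this property is unchanged when the same construction is carried out in a single orthant of $\Z^d$ rather than in $\Z^2$, since the rule selecting the next coordinate uses $\theta$ in exactly the same way regardless of ambient dimension. Fix any such $\theta$.

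By Lemma~\ref{lem_sufficient}, $\theta$ determines a full CDR $\CDR(\theta, p)$. For any other slope $\mat{t}\in T$, Theorem~\ref{theo:CDR} forces the corresponding total order $\theta_\mat{t}$ to coincide with $\theta$ on $[\mat{t}\cdot p, \infty)$ up to a shift. Composing Lemma~\ref{lem:DO_tran} (translation invariance) with the coordinate reflection that maps $\mat{t}$ to $(+1,\ldots,+1)$ (a Euclidean isometry, whose compatibility with the axis-order is ensured by Observation~\ref{obs:same_order}) shows that $\CDO(\theta_\mat{t}, p, \mat{t})$ is the isometric image of some $\CDO(\theta, q, (+1,\ldots,+1))$ with a suitably translated apex. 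Hausdorff distance is preserved by Euclidean isometries, so every orthant of $\CDR(\theta, p)$ inherits the $O(\log n)$ bound, and hence so does their union $\CDR(\theta, p)$. Letting $\theta$ range over the arbitrarily large family above produces arbitrarily many CDRs of Hausdorff distance $O(\log n)$.

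The main obstacle is the first paragraph: justifying that the 2D Hausdorff analysis of Christ et al.\ truly lifts to a single $d$-dimensional orthant. This step is essentially the content of Theorem~\ref{thm:christ_2} combined with the observation that the bend bound is a statement about $\theta$ alone, independent of the number of axes being processed. Everything after that is routine symmetry between orthants.
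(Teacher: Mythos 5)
Your overall route is the same as the paper's: both reduce the corollary to the fact that the Hausdorff distance of the total order construction is controlled by the discrepancy of $\theta$ (Chun \emph{et al.}'s Van der Corput analysis, extended by Christ \emph{et al.} to the $\Theta(\log n)$ discrepancy characterization, which is what yields \emph{arbitrarily many} suitable orders), and both then assert that the single-orthant bound transfers to the full CDR. One bookkeeping point: Theorem~\ref{thm:christ_2} as rephrased here says nothing about Hausdorff distance, so the supply of arbitrarily many orders with $O(\log n)$ bend really comes from the discrepancy statement in \cite{ChristJournal12} together with Theorem~\ref{theo:chunCDR}; your parenthetical ``and the underlying construction'' is doing all the work in that sentence.

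The one step that does not survive scrutiny is the claim that $\CDO(\theta_{\mat{t}},p,\mat{t})$ is the isometric image of some $\CDO(\theta,q,(+1,\ldots,+1))$ under the coordinate reflection taking $\mat{t}$ to $(+1,\ldots,+1)$. Reflection does not commute with the axis-order convention: the axis-order of $(+1,\ldots,+1)$ is $x_1,\ldots,x_d$, whereas the axis-order of, say, $(+1,-1,+1,\ldots,+1)$ is $x_1,x_3,\ldots,x_d,x_2$ --- the flipped coordinate migrates to the decreasing block at the end rather than staying in place. Observation~\ref{obs:same_order} only guarantees agreement of axis-orders on the \emph{common} subspace of two slopes (and reversal under total negation); it does not make the axis-order of $\mat{t}$ the reflection-conjugate of that of $(+1,\ldots,+1)$. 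Hence the reflected image of the positive-orthant construction assigns the blocks of $\theta$ to coordinates in a different order than $\CDO(\theta_{\mat{t}},p,\mat{t})$ does, the two path systems genuinely differ, and the Hausdorff bound cannot be transported by a literal isometry. The conclusion still stands, and the repair is short: the discrepancy-to-Hausdorff argument bounds, for each prefix of the axis-order and each intermediate value $c$, the number of steps spent in that prefix against its ideal proportional value, and this bound is insensitive to which permutation of the coordinates the axis-order happens to be; so the $O(\log n)$ bound holds in every orthant directly. That is precisely the sense in which the paper's own (terser) proof says the $d=2$ arguments ``can be directly applied.''
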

\begin{proof}
An explicit construction of a single CDR in $\Z^d$ with $O(\log n)$ Hausdorff distance was given by Chun {\em et al.}~\cite{cknt-cdg-09j}. They showed that the CDR generated using the Van der Corput sequence~\cite{c-v-35} as total order has low Hausdorff distance (for any dimension). Christ {\em et al.}~\cite{ChristJournal12} extended the result showing that the straightness is asymptotically same as the discrepancy of the permutation corresponding to the total order, which is known to be $\Theta (\log n)$.
 The arguments for $d=2$ to our higher dimension construction can be directly applied. Thus, we omit them.
\end{proof}

\section{Necessary and sufficient conditions for CDSs}
\label{sec:nec}

Next we focus our attention to constructing CDSs. Christ {\em et al.}~\cite{ChristJournal12} showed that if we apply the same total order construction to all points of $\Z^2$ we get a collection of CDRs whose union is always a CDS. For any total order $\theta$, let $\CDS(\theta)=\bigcup_{p\in\Z^d}\CDR(\theta,p)$. Unlike the two dimensional case, the construction $\CDS(\theta)$ does not always yield a CDS in higher dimensions. Theorem~\ref{theo_characCDS} stated in Section~\ref{sec_organiz} gives necessary and sufficient conditions that the total order must satisfy.

Recall that in principle, we allow different orthants (except $(+1, \ldots, +1)$) to have different total orders in this construction. For any point $p\in \Z^d$ and slope $\mat{t}$, let $\theta^p_{\mat{t}}$ be the total order associated to point $p$ and slope $\mat{t}$ in $\CDS(\theta)$. Since $\CDS(\theta)$ in particular contains $\CDS(\theta,p)$, Theorem~\ref{theo:CDR} gives a relationship between $\theta$ and $\theta^p_{\mat{t}}$. We give a stronger bound on that relationship as well.

\begin{theorem}\label{theo_shiftedCDS}
If $\theta$ is a total order such that $\CDS(\theta)$ forms a CDS, then for any $p\in \Z^d$ and slope $\mat{t}$ it holds that $\theta^p_{\mat{t}}[\mat{t}\cdot p, \infty)=\theta[\mat{t}\cdot p, \infty)$. In particular, $\CDS(\theta^p_{\mat{t}},p,\mat{t})=\CDS(\theta,p,\mat{t})$.
\end{theorem}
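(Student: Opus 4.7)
The plan is to show that the hypothesis that $\CDS(\theta)$ forms a CDS forces the underlying total order $\theta$ to satisfy $\theta = \theta + 2$, and then to combine this periodicity with the shift relation given by Lemma~\ref{lem:necCDR} to obtain the claim. The key leverage (not available in the one-point setting of Theorem~\ref{theo:CDR}) is that subsegment (S3) must now hold across different CDRs $\CDR(\theta, p)$ and $\CDR(\theta, p')$ whose segments share common sub-paths.

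I would fix a slope $\mat{t}$ with at least one coordinate $i$ where $\mat{t}_i = +1$ and at least one coordinate $j$ where $\mat{t}_j = -1$ (such a slope exists because $d \geq 2$). For each $p \in \Z^d$, set $p' = p + \mathbf{e}_i - \mathbf{e}_j$, so that $p' \in \mathcal{O}_{\mat{t}}(p)$, $\mathbf{1}\cdot p' = \mathbf{1}\cdot p$, and $\mat{t}\cdot p' = \mat{t}\cdot p + 2$. Iterating prolongation (S4) on $R(p, p')$ produces segments $R(p, q)$ of slope $\mat{t}$ that contain $p'$ and have $\mat{t}\cdot q$ arbitrarily large; subsegment (S3) then gives $R(p', q) \subseteq R(p, q)$. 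Unfolding the $\CDS$ construction, the sub-path of $R(p, q)$ from $p'$ onward is governed by $\theta^p_{\mat{t}}$ restricted to $[\mat{t}\cdot p', \mat{t}\cdot q - 1]$ together with the remaining step counts $|q_k - p'_k|$, while $R(p', q)$ is built from $\theta^{p'}_{\mat{t}}$ on the same interval with the same counts. Matching these for every such $q$ forces
\[
\theta^p_{\mat{t}}[\mat{t}\cdot p', \infty) = \theta^{p'}_{\mat{t}}[\mat{t}\cdot p', \infty).
\]

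Substituting Lemma~\ref{lem:necCDR} on both sides, using $\mat{t}\cdot p' = \mat{t}\cdot p + 2$ and $\mathbf{1}\cdot p' = \mathbf{1}\cdot p$, and cancelling the common shift $\mat{t}\cdot p - \mathbf{1}\cdot p$, yields $\theta[\mathbf{1}\cdot p + 2, \infty) = \theta[\mathbf{1}\cdot p, \infty) + 2$. As $\mathbf{1}\cdot p$ ranges over all of $\Z$ when $p$ varies, this upgrades to $\theta = \theta + 2$. For arbitrary $p$ and $\mat{t}$, Lemma~\ref{lem:necCDR} gives $\theta^p_{\mat{t}}[\mat{t}\cdot p, \infty) = \theta[\mathbf{1}\cdot p, \infty) + (\mat{t}\cdot p - \mathbf{1}\cdot p)$, and since $\mat{t}\cdot p - \mathbf{1}\cdot p$ is always even, iterating $\theta = \theta + 2$ rewrites this as $\theta[\mat{t}\cdot p, \infty)$, giving the first statement. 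The ``in particular'' clause is then immediate because $\CDS(\sigma, p, \mat{t})$ depends only on $\sigma$ restricted to $[\mat{t}\cdot p, \infty)$. The step I expect to require the most care is the sub-path decomposition: checking that the full-interval rank in the $p$-construction, together with the global thresholds $|q_k - p_k|$, reduces cleanly to the restricted-interval rank together with the remaining thresholds $|q_k - p'_k|$, is a careful bookkeeping argument about cumulative bucket sizes.
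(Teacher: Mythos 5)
Your overall architecture matches the paper's: first establish the periodicity $\theta=\theta+2$, then feed it into the shift relation of Theorem~\ref{theo:CDR} (noting that $\mat{t}\cdot p-\mat{t}'\cdot p$ is even) to conclude $\theta^p_{\mat{t}}[\mat{t}\cdot p,\infty)=\theta[\mat{t}\cdot p,\infty)$. That second half is exactly the paper's proof of Theorem~\ref{theo_shiftedCDS} and is correct. The gap is in your derivation of $\theta=\theta+2$, which the paper isolates as Lemma~\ref{lem:nec_1}. You claim that requiring $R(p',q)\subseteq R(p,q)$ for the single intermediate point $p'=p+\mathbf{e}_i-\mathbf{e}_j$ forces the \emph{full} equality $\theta^p_{\mat{t}}[\mat{t}\cdot p',\infty)=\theta^{p'}_{\mat{t}}[\mat{t}\cdot p',\infty)$. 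It does not. Only those $q$ whose path $R(p,q)$ actually passes through $p'$ impose any constraint, and by Lemma~\ref{lem:prefix} this is a restricted family: the relevant step-count thresholds only range over the window $[i_a,i_b-1]$ determined by the positions of the two left-interval elements. Consequently (this is precisely the content of Lemma~\ref{lem_subtree}) the constraint on $\theta^{p'}_{\mat{t}}$ is only a block structure $X_1\prec X_2\prec X_3$ together with preservation of the internal order of the \emph{middle} block $X_2$; the elements of $X_1$ and of $X_3$ --- those lying below $\min\{a,b\}$ or above $\max\{a,b\}$ in $\theta^p_{\mat{t}}$ --- may be permuted arbitrarily within their blocks without violating (S3) for any $q$. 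So the "careful bookkeeping" step you flag is not merely delicate; the clean reduction you need is false for a single intermediate point.

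The paper repairs exactly this deficiency by using \emph{two} intermediate points $q=p+(0,-1,0,\dots)$ and $r=p+(-1,0,0,\dots)$ at $L_1$-distance one from $p$ in the quadrant of slope $(-1,-1,+1,\dots,+1)$: every path of that slope ending at $p$ passes through one of them, each has a degenerate left interval (a singleton), and the two resulting middle blocks $X_2^{pq}$ and $X_2^{pr}$ are complementary (one consists of the numbers above $\lambda$, the other of the numbers below $\lambda$), so their union pins down the entire order and yields $\theta[\lambda+1,\infty)\subset(\theta+2)[\lambda+1,\infty)$, whence $\theta=\theta+2$ as $\lambda\to-\infty$. (The distance-one points also produce the needed $+2$ shift because one step into a negative orthant decreases $\mathbf{1}\cdot p$ by one while increasing $\mat{t}'\cdot p$ by one; your choice $p'=p+\mathbf{e}_i-\mathbf{e}_j$ achieves the same shift, but at the cost of a two-element left interval whose subtree constraint is too weak.) To salvage your argument you would either have to import the two-point complementarity trick, or show that the partial constraints obtained from varying $p$ over different levels $\mathbf{1}\cdot p$ jointly imply full agreement --- neither of which is done in your write-up.
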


This shows that, if we want to create a CDS in this fashion, we must use the same total order $\theta$ for all points and all slopes. Again, this contrasts with the $d=2$ case where we can combine any two total orders for slopes $(+1,+1)$ and $(+1, -1)$. Christ {\em et al.}~\cite{ChristJournal12} showed that if we repeat the construction for all points of $\Z^2$ the union will form a CDS. The remainder of this section is dedicated to showing Theorems~\ref{theo_characCDS} and~\ref{theo_shiftedCDS}.

\subsection{Two dimensional preliminaries}
Along the proof, we will often consider two dimensional subspaces and find some requirements that extend to the whole space. Thus, we first show a subtree property that CDS in $\Z^2$ must satisfy. Consider any point $p\in\Z^2$, slope $\mat{t}$, point $q\in \CDO(\theta^p_{\mat{t}},p,\mat{t})$ such that $q\neq p$, and all points $r\in\Z^2$ such that $R(p,r)$  passes through $q$. This set of points (and their paths to $q$) form a subtree of $\CDO(\theta^p_{\mat{t}}, {p},\mat{t})$. The same tree must be part of $\CDO(\theta^q_{\mat{t}}, {q},\mat{t})$ or it would will violate (S3) (see Figure~\ref{fig:subtree}, left).

We express this subtree property in terms of total orders $\theta^p_\mat{t}$ and $\theta^q_\mat{t}$. Assume $\mat{t}=(+1,+1)$, let $s_1,s_2 \geq 0$ be integers such that $q=p+(s_1,s_2)$, and let $n$ be any number such that $n>s_1+s_2$. We will consider the restriction of the total order $\theta^p_\mat{t}$ to three intervals: $[\mat{t} \cdot p, \mat{t} \cdot q  -1]$, $[\mat{t} \cdot q, \mat{t} \cdot p+n  -1]$, and 
$[\mat{t} \cdot p, \mat{t} \cdot p +n -1]$. Note that the union of the first two forms the third one. In order to reduce notation we call them the left, right, and complete intervals. Similarly, we call $\theta^p_\mat{t}[\mat{t} \cdot p, \mat{t} \cdot q  -1]$, $\theta^p_\mat{t}[\mat{t} \cdot q, \mat{t} \cdot p+n  -1]$, and 
$\theta^p_\mat{t}[\mat{t} \cdot p, \mat{t} \cdot p +n -1]$ the {\em left order}, the {\em right order} and the {\em complete order}. The subtree property says that many inequalities in the right order must also be held in $\theta^q_\mat{t}$.

First assume that $s_1, s_2 \neq 0$; let $a$ and $b$ be the $s_1$-th and $(s_1+1)$-th smallest numbers in the left order, respectively. By definition, these two numbers are consecutive in the left order, but they need not be in the complete order (i.e., there could be numbers from the right interval). 

Let $i_a$ and $i_b$ be the positions of $a$ and $b$ in the complete order, respectively. We partition the numbers of the large interval into three groups, depending on whether they are $(i)$ smaller than $a$, $(ii)$ larger than $a$ and smaller than $b$, or $(iii)$ larger than $b$ (all these comparisons are with respect to $\theta^p_\mat{t}$). Let $X_1(n)$, $X_2(n)$, and $X_3(n)$ be the three sets, respectively (see Figure~\ref{fig:subtree}).

\begin{figure}
\centering
\scalebox{0.7}{\includegraphics{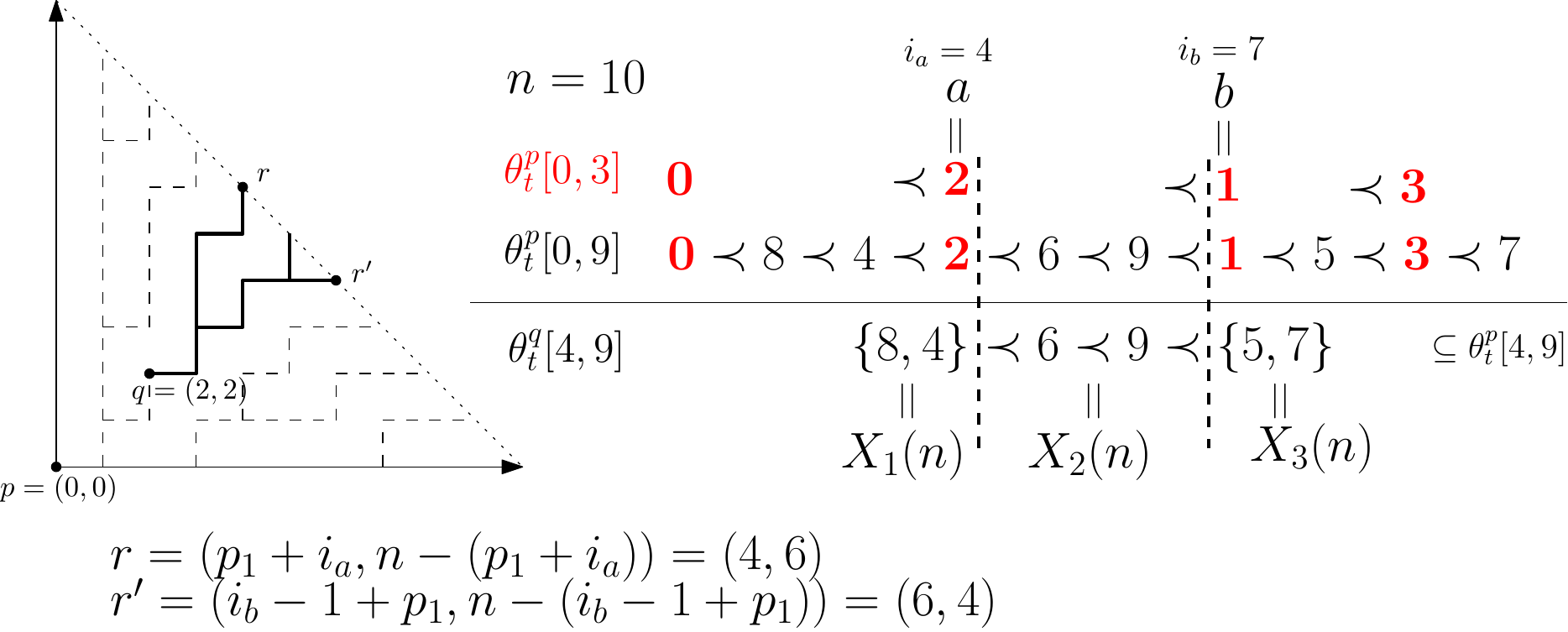}}
\caption{Example of the subtree property.
(left) geometric interpretation of the subtree property. The paths to $p$ that pass through $q$ impose a constraint on $\theta^q_{\mat{t}}$. In particular, a point in the diagonal $x_1+x_2=n$ will pass through $q$ if and only if it is between $r$ and $r'$ (highlighted points in the figure).
(right) implications in the total order of $\theta^q_{\mat{t}}$. In red bold we highlight the points that belong to the left interval. The points in the right interval are classified into the three sets $X_1(n)$, $X_2(n)$ and $X_3(n)$ according to their positions (left of $a$, right of $b$, or in between). The fact that the subtree of $q$ (black in the left figure) has to be preserved in $q$ implies many relationships for $\theta^q_{\mat{t}}$ that are shown in the third line.
}
\label{fig:subtree}
\end{figure}

Before giving the subtree property we extend the definitions of these three sets for the cases in which $s_1$ and $s_2$ can be zero. If $s_1=0$ then $a$ and $i_a$ are not well defined (similarly, $b$ and $i_b$ are not defined when $s_2=0$). In the first case we set $i_a=0$, $X_1(n)=\emptyset$ and classify the numbers of the right interval into $X_2(n)$ and $X_3(n)$ depending on whether they are smaller or larger than $b$. Similarly, if $i_b$ is not defined, we set $i_b=n+1$, $X_3(n) = \emptyset$, and numbers are be split into the two sets $X_1(n)$ and $X_2(n)$.

The following lemma characterizes the points whose path to $p$ passes through $q$ in the quadrant of $(+1,+1)$.

\begin{lemma}
\label{lem:prefix}
For any $n > s_1 + s_2$, let $r\in \Z^2$ be a point such that $r_1 + r_2 = p_1 + p_2 + n$. The path $R(p,r)$ passes through $q$ if and only if $r_1 \geq q_1$, $r_2 \geq q_2$ and $i_a \leq r_1 - p_1 \leq i_b -1$.
\end{lemma}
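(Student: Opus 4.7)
The plan is to interpret the statement through the dynamics of the path-generating rule. Recall that, by the total-order construction, at each intermediate step of $R(p,r)$ (where $r_1+r_2 = p_1+p_2+n$), we are standing at a point whose coordinate sum $c$ lies in $[\mat{t}\cdot p, \mat{t}\cdot p + n -1]$, i.e., in the complete interval, and we move in $x_1$ exactly when $c$ is among the $r_1-p_1$ smallest elements of the complete order. Since after $s_1+s_2$ steps the path has visited intermediate sums exactly equal to $\mat{t}\cdot p, \mat{t}\cdot p+1, \ldots, \mat{t}\cdot q -1$, which is the left interval, the path passes through $q$ if and only if precisely $s_1$ of these $s_1+s_2$ numbers appear among the $r_1-p_1$ smallest in the complete order.

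Hence the entire statement reduces to a counting question about the complete order: letting $f(k)$ denote the number of left-interval elements among the first $k$ elements of the complete order, the path passes through $q$ if and only if $f(r_1-p_1)=s_1$ (and the endpoint is compatible with $q$, i.e., $r_1 \ge q_1,\ r_2 \ge q_2$). First I would show that $f$ is a non-decreasing step function taking the value $s_1$ precisely on the interval $[i_a, i_b-1]$. The argument is that the positions (within the complete order) occupied by the $s_1+s_2$ elements of the left interval, listed in increasing order, have $i_a$ in the $s_1$-th slot and $i_b$ in the $(s_1+1)$-th slot: the $s_1$ left elements $\preceq_{\theta^p_\mat{t}} a$ all sit at complete-order positions $\leq i_a$ with $a$ at position $i_a$, while the $s_2$ left elements $\succeq_{\theta^p_\mat{t}} b$ all sit at positions $\geq i_b$ with $b$ at position $i_b$. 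Therefore $f(k)\geq s_1$ iff $k\geq i_a$, and $f(k)\leq s_1$ iff $k\leq i_b-1$, which gives the desired characterization.

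Next I would translate back to the conditions on $r$. The condition $f(r_1-p_1)=s_1$ is equivalent to $i_a\leq r_1-p_1\leq i_b-1$. It remains to reconcile this with the componentwise requirements $r_1\geq q_1$ and $r_2\geq q_2$. Using that $i_a\geq s_1$ (because there are $s_1$ left elements $\preceq a$) and $i_b\leq n-s_2+1$ (because there are $s_2$ left elements $\succeq b$), the inequality $i_a\leq r_1-p_1\leq i_b-1$ already forces $s_1\leq r_1-p_1\leq n-s_2$, hence $r_1\geq q_1$ and $r_2\geq q_2$. In the reverse direction, if the path passes through $q$ then the endpoint $r$ lies in $\mathcal{O}_\mat{t}(q)$ so these inequalities hold automatically.

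Finally, I would handle the degenerate cases $s_1=0$ and $s_2=0$ separately, since in those cases $a,i_a$ or $b,i_b$ are not defined by an element of the left interval. Using the conventions $i_a=0$ when $s_1=0$ and $i_b=n+1$ when $s_2=0$ (consistent with the definitions of $X_1(n),X_3(n)$ given just before the lemma), the one-sided version of the counting argument still applies: if $s_1=0$ then the path passes through $q=p$ iff $r_1-p_1\geq 0=i_a$ (and any upper bound up to $i_b-1$ is active), and symmetrically if $s_2=0$. I expect the main source of fiddliness (though not real difficulty) to be bookkeeping these boundary cases together with a clean justification of the position statement ``$i_a$ is the $s_1$-th smallest position of a left element in the complete order'', which is the one place where one must carefully use that $\theta^p_\mat{t}$ restricted to the left interval is the same order that defines $a$ and $b$.
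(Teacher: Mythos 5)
Your proposal is correct, and it reaches the conclusion by a genuinely different organization of the same underlying facts. The paper proves the two directions separately: for the ``if'' direction it shows $R(p,q)\subseteq R(p,r)$ by checking that every left-interval number $\preceq_{\theta^p_\mat{t}} a$ lies among the $r_1-p_1$ smallest elements of the complete order (because $a$ does, at position $i_a\le r_1-p_1$) and every left-interval number $\succeq_{\theta^p_\mat{t}} b$ lies among the largest, so the first $s_1+s_2$ moves of $R(p,r)$ coincide with those of $R(p,q)$; for the ``only if'' direction it argues by contradiction, exhibiting the explicit intermediate point (with coordinate sum $a$, resp.\ $b$) at which the two paths diverge when $r_1-p_1<i_a$ or $r_1-p_1>i_b-1$. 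Your counting function $f$ packages both directions at once: since the path visits exactly one point per level and arrives at level $\mat{t}\cdot q$ with first coordinate $p_1+f(r_1-p_1)$, passing through $q$ is literally the equation $f(r_1-p_1)=s_1$, and the identification of the level set $\{k\colon f(k)=s_1\}$ with $[i_a,i_b-1]$ follows from monotonicity of $f$ together with the (correctly flagged) fact that the complete order restricts to the left order, so the $s_1$-th and $(s_1+1)$-th left elements sit at positions $i_a$ and $i_b$. Both arguments hinge on the same two positions $i_a,i_b$, but yours is more symmetric, avoids the contradiction step, and yields a small bonus the paper does not record: $i_a\ge s_1$ and $i_b\le n-s_2+1$ imply that $i_a\le r_1-p_1\le i_b-1$ already forces $r_1\ge q_1$ and $r_2\ge q_2$, so those conditions are consequences rather than independent hypotheses. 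Your handling of the degenerate cases via the conventions $i_a=0$ and $i_b=n+1$ agrees with the paper's.
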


\begin{lemma}[The subtree property]\label{lem_subtree}
For any $n>s_1+s_2$ and $u,v\in [\mat{t} \cdot q, \mat{t} \cdot p+n  -1]$, the following relationships must hold in $\theta^q_\mat{t}$.
\begin{itemize}
\item $u \prec_{\theta^q_\mat{t}} v$ for all $u\in X_1(n)$ and $v\in X_2(n)$
\item $u \prec_{\theta^q_\mat{t}} v$ for all $u\in X_1(n) \cup X_2(n)$ and $v\in X_3(n)$
\item $u \prec_{\theta^q_\mat{t}} v$ for all $u,v\in X_2(n)$ such that $u \prec_{\theta^p_\mat{t}} v$
\end{itemize}
\end{lemma}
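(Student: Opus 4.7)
The plan is to fix a point $r$ on the diagonal $x_1 + x_2 = p_1 + p_2 + n$ whose path from $p$ passes through $q$, and then read off constraints on $\theta^q_\mat{t}$ by comparing how $\theta^p_\mat{t}$ and $\theta^q_\mat{t}$ route the portion of the segment from $q$ to $r$. By Lemma~\ref{lem:prefix} the admissible $r$'s are exactly those with $r_1 - p_1 \in [i_a, i_b - 1]$; writing $m := r_1 - p_1 - s_1$, this parametrises $m$ over $[i_a - s_1,\, i_b - 1 - s_1]$. For each such $r$, axiom (S3) forces $R(q, r)$ to coincide with the subpath of $R(p, r)$ from $q$ onwards. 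Comparing which right-interval values trigger an $x_1$-move under the two constructions yields
\[
A_m = B_m,
\]
where $A_m$ is the set of right-interval values that are among the $s_1 + m$ smallest of the complete order, and $B_m$ is the set of the $m$ smallest right-interval values in $\theta^q_\mat{t}$ restricted to the right interval.

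The next step is to describe $A_m$ explicitly. Because $a$ and $b$ are consecutive in the left order, for every $m$ in the admissible range exactly $s_1$ left-interval values lie among the initial $s_1 + m$ positions of the complete order, so $|A_m| = m$. Moreover, the portion of the complete order strictly between $a$ and $b$ is composed entirely of right-interval values, namely the elements of $X_2(n)$. Hence $A_{i_a - s_1} = X_1(n)$, $A_{i_b - 1 - s_1} = X_1(n) \cup X_2(n)$, and each singleton difference $A_m \setminus A_{m-1}$ appends the next element of $X_2(n)$ in $\theta^p_\mat{t}$ order.

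Translating these identities into statements about $B_m$ delivers the three bullets. From $B_{i_a - s_1} = X_1(n)$, the $|X_1(n)|$ smallest right-interval elements in $\theta^q_\mat{t}$ are exactly $X_1(n)$, which proves the first bullet and the $X_1$-part of the second. From $B_{i_b - 1 - s_1} = X_1(n) \cup X_2(n)$, the next $|X_2(n)|$ elements in $\theta^q_\mat{t}$ are exactly those of $X_2(n)$, which together with the previous step places every element of $X_1(n) \cup X_2(n)$ before every element of $X_3(n)$, completing the second bullet. Finally, the singleton differences $B_m \setminus B_{m-1}$ uniquely identify the $\theta^q_\mat{t}$-order on $X_2(n)$ as the restriction of $\theta^p_\mat{t}$, which is the third bullet.

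The main obstacle I anticipate is bookkeeping the degenerate cases $s_1 = 0$, $s_2 = 0$, and $|X_2(n)| = 0$. The extended definitions $i_a = 0$, $i_b = n + 1$, $X_1(n) = \emptyset$, $X_3(n) = \emptyset$ given in the excerpt seem to be engineered precisely so that the argument above passes through without any case split, but I would explicitly verify the boundary values of $m$ behave as expected, that Lemma~\ref{lem:prefix} continues to apply under the extended conventions, and that the affected bullets specialise to vacuous statements in each degenerate situation.
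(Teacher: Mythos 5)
Your proof is correct and follows essentially the same route as the paper's: both invoke Lemma~\ref{lem:prefix} to identify the points $r$ whose paths pass through $q$, and then compare the constructions at $p$ and at $q$ on the right interval, using the partition of the complete order at positions $i_a$ and $i_b$. Your version merely makes the paper's terse argument explicit by parametrising over all admissible $r$ and extracting the family of set equalities $A_m=B_m$, which is a sound (and arguably cleaner) way to organise the same idea.
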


{\bf Remark} Although we have stated the subtree property for slope $(+1,+1)$, it is straightforward to see that this result extends to other ones. We stick to this notation for simplicity in the exposition, although we will afterwards use it for negative slope as well. 

\subsection{Application in high dimensional spaces}
With the subtree property we can show the first necessary condition of Theorem~\ref{theo_characCDS}.
\begin{lemma}
\label{lem:nec_1}
Let $\theta$ be a total order such that $\CDS(\theta)$ forms a CDS. Then, 
$\theta = \theta + 2$.
\end{lemma}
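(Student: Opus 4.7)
My plan is to combine Theorem~\ref{theo:CDR} with Theorem~\ref{theo_shiftedCDS} to obtain an explicit shift-compatibility relation on $\theta$, and then specialize this relation at two nearby points on the first coordinate axis so that composing the two forced shifts isolates a shift of exactly $2$.

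First I would observe that, since $\CDS(\theta)$ is a CDS, for every $p \in \Z^d$ the set of segments of $\CDS(\theta)$ with endpoint $p$ forms a CDR at $p$. Applying the necessary direction of Theorem~\ref{theo:CDR} to this CDR (with its local total orders $\theta^p_\mat{t}$, $\mat{t} \in T$) and using Theorem~\ref{theo_shiftedCDS} to replace each $\theta^p_\mat{t}[\mat{t}\cdot p,\infty)$ by $\theta[\mat{t}\cdot p, \infty)$ yields
\[ \theta[\mat{t}\cdot p, \infty) \;=\; \theta[\mat{t}'\cdot p, \infty) - \mat{t}'\cdot p + \mat{t}\cdot p \qquad \text{for all } p \in \Z^d,\ \mat{t}, \mat{t}' \in T. \]
Equivalently, the translation $x \mapsto x - \mat{t}'\cdot p + \mat{t}\cdot p$ is an order isomorphism between $\theta$ restricted to these two half-infinite intervals.

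Next, I would specialize to $\mat{t} = (+1, +1, \ldots, +1)$ and $\mat{t}' = (-1, +1, \ldots, +1)$ and choose $p = (M, 0, \ldots, 0)$ for an arbitrary integer $M$. Since $\mat{t}\cdot p = M$ and $\mat{t}'\cdot p = -M$, the relation becomes $\theta[M,\infty) = \theta[-M,\infty) + 2M$; that is, $x \mapsto x + 2M$ is an order isomorphism from $\theta|_{[-M,\infty)}$ onto $\theta|_{[M,\infty)}$. Running the same specialization with $p$ replaced by $(M-1, 0, \ldots, 0)$ and inverting yields a second order isomorphism $x \mapsto x - 2(M-1)$ from $\theta|_{[M-1, \infty)}$ onto $\theta|_{[-(M-1),\infty)}$. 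Since $[M,\infty)\subseteq[M-1,\infty)$, these two isomorphisms compose; the net map is $x \mapsto x + 2M - 2(M-1) = x + 2$, which therefore preserves $\theta$ on $[-M, \infty)$. In particular $a \prec_\theta b \iff a + 2 \prec_\theta b + 2$ for all $a, b \in [-M, \infty)$, and since $M$ can be taken arbitrarily large this extends to all $a, b \in \Z$, giving $\theta = \theta + 2$.

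The delicate point is keeping careful track of domains and images so that the composition of the two forced shifts is actually legal; once that bookkeeping is done, the translation-by-two invariance falls out of the identity $2M - 2(M-1) = 2$. The conceptual observation driving the argument is that flipping the sign of exactly one coordinate of the slope forces a shift by an even integer $2M$, and using two consecutive values of $M$ pins the shift down to be exactly $2$ rather than some larger even number.
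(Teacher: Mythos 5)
Your argument is circular. The identity you extract in the first step, $\theta[\mat{t}\cdot p,\infty) = \theta[\mat{t}'\cdot p,\infty) - \mat{t}'\cdot p + \mat{t}\cdot p$, is obtained by substituting Theorem~\ref{theo_shiftedCDS} into Theorem~\ref{theo:CDR}; but the paper's proof of Theorem~\ref{theo_shiftedCDS} explicitly applies $\theta=\theta+2$ (i.e., the present lemma) repeatedly in order to absorb the even shift $-\mat{t}'\cdot p+\mat{t}\cdot p$, so that theorem is downstream of Lemma~\ref{lem:nec_1}, not available to prove it. The circularity is not merely a matter of citation order: by the definition of $\CDS(\theta)=\bigcup_{p}\CDR(\theta,p)$, the only order pinned to $\theta$ at a point $p$ is the one for slope $(+1,\ldots,+1)$, and Theorem~\ref{theo:CDR} then forces, e.g., at $p=(M,0,\ldots,0)$ with $\mat{t}'=(-1,+1,\ldots,+1)$, that $\theta^p_{\mat{t}'}[-M,\infty)=\theta[M,\infty)-2M=(\theta-2M)[-M,\infty)$. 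Asserting that this coincides with $\theta[-M,\infty)$ is literally the statement that $\theta$ is invariant under a shift by $2M$ on that interval --- the very kind of shift-invariance you are trying to establish. So your ``key identity'' is a restatement of the conclusion rather than a consequence of the hypotheses.

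What is missing is a constraint that compares the constructions based at two \emph{different} base points; Theorem~\ref{theo:CDR} only relates the orders of different slopes at the \emph{same} apex, and no amount of within-point bookkeeping will produce a self-relation on $\theta$. This is exactly the role played by the subtree property (Lemma~\ref{lem_subtree}) in the paper's proof: axiom (S3) forces the segments through a neighbour $q$ of $p$ to agree with the CDR rooted at $q$, and since Theorem~\ref{theo:CDR} expresses both $\theta^p_{\mat{t}'}$ and $\theta^q_{\mat{t}'}$ as differently shifted copies of $\theta$ (namely $\theta$ and $\theta+2$ on the relevant intervals), this cross-point consistency is what collapses to $\theta=\theta+2$. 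Your second step --- composing the shifts coming from $(M,0,\ldots,0)$ and $(M-1,0,\ldots,0)$ to isolate a shift by exactly $2$ --- is sound arithmetic, but it only becomes usable once a non-circular source for the shift-compatibility relation is supplied.
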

\begin{proof}
We first give a birdseye overview of the construction: choose an arbitrary $\lambda\in\Z$ and consider the affine plane $\mathcal{H} = \{x_3 = \lambda, x_4=0, \ldots, x_d=0\}$. In this plane we look at the origin $p=(0,0)$, points $q=(-1,0)$ and $r=(0,-1)$ (see Figure~\ref{fig:same_theta}, left). In particular, we look at the third quadrant (the one with slope $(-1,-1)$): first, from Theorem~\ref{theo:CDR} we know that $\theta^p_{(-1,-1)}$ must coincide with $\theta$ (on the interval $[\lambda, \infty)$). 

We apply the subtree property from $p$ to $q$ and $r$; the key property is that both $\theta^q_{(-1,-1)}$ and $\theta^r_{(-1,-1)}$ coincide with $\theta+2$ on the interval $[\lambda+1, \infty)$. Moreover, all paths to $p$ must pass through either $q$ or $r$, which in particular implies that all inequalities from $\theta^p_{(-1,-1)}$ must also be preserved in either  $\theta^q_{(-1,-1)}$ or $\theta^r_{(-1,-1)}$. By combining all of these properties, we show that $\theta$ coincides with $\theta+2$ on the interval $[\lambda+1, \infty)$. The result works for any value of $\lambda$, so when $\lambda \rightarrow -\infty$ we get $\theta=\theta+2$ as claimed.

More formally, pick any $\lambda\in\Z$ and consider the points $p=(0, 0, \lambda, 0, \ldots 0)$, $q=(0, -1, \lambda, 0, \ldots,0)$ and $r=(-1, 0, \lambda, 0, 0, \ldots,0)$. By construction, these points lie on the affine plane $\mathcal{H} = \{x_3 = \lambda, x_4=0, \ldots, x_d=0\}$ as claimed. %

\begin{figure}[h]
\centering
\scalebox{0.5}{\includegraphics{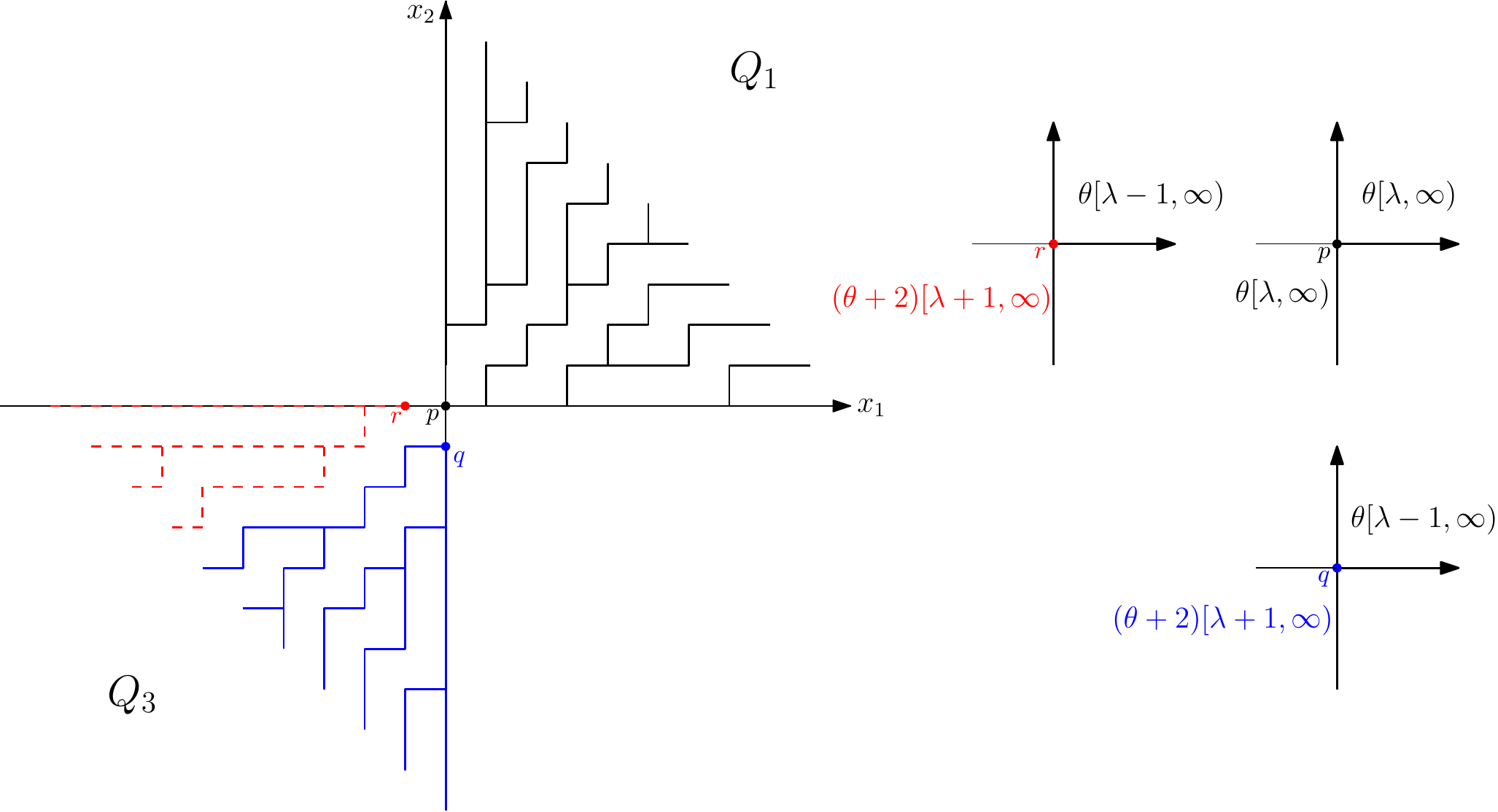}}
\caption{An example of the CDR at $p$ is shown on the left hand side and the total orders applied at $p$, $q$ and $r$ are shown on the right hand side.  
The subtrees at $q$ and at $r$ in $Q_3$ are represented by solid blue and dashed red segments respectively.
}
\label{fig:same_theta}
\end{figure}

Let $\mat{t} = (+1, \ldots, +1)$ and $\mat{t}' = (-1, -1, +1, \ldots, +1)$. By definition of $\CDO(\theta)$ we have $\theta^p_{\mat{t}}=\theta^q_{\mat{t}}=\theta^r_{\mat{t}}=\theta$. We use Theorem~\ref{theo:CDR} to determine the total order used at slope $\mat{t}'$ for the three points: $\theta^p_{\mat{t}'}[\mat{t}' \cdot p, \infty) = \theta^p_{\mat{t}}[\mat{t} \cdot p, \infty) - \mat{t} \cdot p + \mat{t}' \cdot p = \theta[\mat{t} \cdot p, \infty) - \mat{t} \cdot p + \mat{t}' \cdot p = \theta[\lambda, \infty)$. Similarly, at point $q$ we have $\theta^q_{\mat{t}'}[\lambda+1, \infty) = \theta[\lambda-1, \infty) + 2 = (\theta+2)[\lambda+1, \infty)$ and at point $r$ we have $\theta^r_{\mat{t}'}[\lambda+1, \infty) = (\theta+2)[\lambda+1, \infty)$ (The 6 total orders and their relevant orthants are depicted in Figure~\ref{fig:same_theta}, right). %

For any $n>0$ consider the bounded interval $[\lambda, \lambda+n-1]$. 
We apply Lemma~\ref{lem_subtree} in the third quadrant to obtain relationships between $\theta^p_{\mat{t}'}$, $\theta^q_{\mat{t}'}$ and $\theta^r_{\mat{t}'}$. Let $X^{pq}_1(n),X^{pq}_2(n)$, and $X^{pq}_3(n)$ be the partition in the three sets obtained when applying the subtree property to $p$ and $q$ (similarly, we define the sets $X^{pr}_i$). Since we are applying it to the third quadrant and in particular the axis-order is $x_2, x_1$, we must swap the definitions of $s_1$ and $s_2$ (i.e., $s_1$ will be equal to the difference in the $x_2$ coordinate of $p$ and $q$).

For the pair $p,q$ we have $s_1=1$, $s_2=0$. Thus the left interval consists of the singleton $[\lambda,\lambda]$, the right interval is $[\lambda+1, \lambda+n-1]$, $X^{pq}_3(n)=\emptyset$ and we are splitting the numbers of the right interval into sets $X^{pq}_1(n)$ and $X^{pq}_2(n)$ depending on whether or not they are larger than $\lambda$. That is, 
\begin{eqnarray*}
X^{pq}_1(n)=[\lambda+1, \lambda+n-1] \cap \{i\in\Z \colon i\prec_{\theta^p_{\mat{t}'}} \lambda\} \\
X^{pq}_2(n)=[\lambda+1, \lambda+n-1] \cap \{i\in\Z \colon \lambda\prec_{\theta^p_{\mat{t}'}} i\} 
\end{eqnarray*}
Applying the subtree property to the pair $p,r$ gives a similar partition. In this case, the three sets become $X^{pr}_1(n)=\emptyset$, $X^{pr}_2(n)=[\lambda+1, \lambda+n-1] \cap \{i\in\Z \colon i\prec_{\theta^p_{\mat{t}'}} \lambda\}=X^{pq}_1(n)$, and  $X^{pr}_3(n)=[\lambda+1, \lambda+n-1] \cap \{i\in\Z \colon \lambda\prec_{\theta^p_{\mat{t}'}} i\}=X^{pq}_2(n)$.

The sets $X_i^{pq}$ imply some constraints on $\theta^q_{\mat{t}'}$ (similarly, $X_i^{pr}$ gives constraints on $\theta^r_{\mat{t}'}$). 
Recall that we previously observed that $\theta^q_{\mat{t}'}[\lambda+1, \infty)=\theta^p_{\mat{t}'}[\lambda+1, \infty)=(\theta+2)[\lambda+1, \infty)$, which in particular implies that all constraints of the subtree property apply to $\theta+2$.

$X^{pq}_2(n)$ says that all relationships in $\theta^p_{\mat{t}'}[\lambda+1,\lambda+n-1]$ are be preserved for numbers that are larger than $\lambda$ in $\theta^p_{\mat{t}'}$. Similarly, $X^{pr}_2(n)$ says that relationships for numbers {\em smaller} than $\lambda$ must also be preserved. Thus, we conclude that all relationships (both larger and smaller than $\lambda$) must be preserved. Hence, we conclude that $\theta^p_{\mat{t}'}[\lambda+1,\lambda+n-1] \subset (\theta+2)[\lambda+1,\infty)$. This reasoning applies for any values of $\lambda\in \Z$, and $n>0$. In particular, when $\lambda \rightarrow -\infty$ and $n \rightarrow \infty$ we get $\theta = \theta +2$ as claimed.

\end{proof}

With this result we can now show Theorem~\ref{theo_shiftedCDS}.

\begin{proof}[(Proof of Theorem~\ref{theo_shiftedCDS})]
Let $\mat{t}'=(+1, \ldots, +1)$ and note that, by definition, we have $\theta^p_{\mat{t}'}=\theta$. We apply Theorem~\ref{theo:CDR} 
and obtain $\theta^p_{\mat{t}}[ \mat{t} \cdot p, \infty) = \theta^p_{\mat{t}'}[ \mat{t}' \cdot p, \infty) - \mat{t}' \cdot p + \mat{t} \cdot p= \theta[ \mat{t}' \cdot p, \infty) - \mat{t}' \cdot p + \mat{t} \cdot p$. The term $-\mat{t}' \cdot p + \mat{t} \cdot p$ must be an even number (it is the inner product of $p$ with vector $\mat{t}-\mat{t}'$ which satisfies that each coordinate is either a zero or a two). Thus, we can apply $\theta = \theta + 2$ repeatedly until we get $\theta[\mat{t}' \cdot p, \infty) - \mat{t}' \cdot p + \mat{t} \cdot p = \theta[\mat{t} \cdot p, \infty)$ as claimed.
\end{proof}

Specifically, we give two necessary conditions that together are also sufficient. The two conditions are derived from the axioms S1-S5. 
The first necessary condition is $\theta = \theta + 2$, which is already proved in Lemma~\ref{lem:nec_1}.

The other necessary condition derives from the symmetry axiom (S2) of CDSs.

\begin{lemma}[Necessary condition 2 for CDSs]
\label{lem:nec_2}
Any total order such that $\CDS(\theta)$ forms a CDS satisfies that $\theta = -(\theta+1)^{-1}$.
\end{lemma}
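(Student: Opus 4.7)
The plan is to apply the symmetry axiom (S2) to carefully chosen segments lying in a 2D subspace of $\Z^d$ and extract the required algebraic condition on $\theta$. Unfolding the shift, flip, and reverse operations, the target condition $\theta = -(\theta+1)^{-1}$ is equivalent to: for all $a, b \in \Z$, $a \prec_\theta b$ if and only if $-b-1 \prec_\theta -a-1$. My goal is to extract exactly this symmetry from the path identity $R(p,q) = R(q,p)$.

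First, for arbitrary integers $a < b$, I would pick $p = (a, 0, \ldots, 0)$ and $q = (q_1, q_2, 0, \ldots, 0)$ in $\Z^d$ with $q_1 + q_2 = b + 1$, $q_1 > a$, and $q_2 > 0$. By monotonicity (S5), $R(p,q)$ is confined to the plane $\{x_3 = \cdots = x_d = 0\}$, and after choosing the slope $\mat{t} = (+1,\ldots,+1)$ (whose axis-order restricted to the plane is $x_1, x_2$ by Observation~\ref{obs:same_order}), it behaves like a 2D segment. Since $\theta^p_\mat{t} = \theta$ by definition, the path $p = r_0 \to r_1 \to \cdots \to r_{b-a+1} = q$, where $\mat{t} \cdot r_j = a+j$, moves from $r_j$ to $r_{j+1}$ in the $+x_1$ direction iff the position of $a+j$ in $\theta[a,b]$ is at most $T := q_1 - a$.

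Next, I would analyze the same segment when traversed as $R(q,p)$. By (S2) it coincides with $R(p,q)$; by Theorem~\ref{theo_shiftedCDS}, it is governed by $\theta$ restricted to $[-b-1, -a-1]$, and the axis-order of $-\mat{t}$ restricted to the plane is $x_2, x_1$. Working out the two viewpoints, the directions agree at every step iff, for each $j$ and each valid $T$, the position of $a+j$ in $\theta[a,b]$ is at most $T$ if and only if the position of $-(a+j+1)$ in $\theta[-b-1,-a-1]$ is strictly greater than $q_2 = b+1-a-T$. Varying $T$ throughout $\{1, \ldots, b-a\}$ forces
\[
\pi(a+j, \theta[a,b]) + \pi(-(a+j+1), \theta[-b-1, -a-1]) = b-a+2
\]
for every $j \in \{0, \ldots, b-a\}$, where $\pi$ denotes position in the restricted order.

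This equality states precisely that $c \mapsto -c-1$ is an order-reversing bijection from $\theta[a,b]$ onto $\theta[-b-1, -a-1]$. Letting $a \to -\infty$ and $b \to \infty$ then yields the global condition $a \prec_\theta b$ iff $-b-1 \prec_\theta -a-1$, i.e., $\theta = -(\theta+1)^{-1}$ as desired. The main obstacle I expect is the bookkeeping of axis-orders under slope reversal and of the off-by-one shift between $\mat{t} \cdot r_j$ and $-\mat{t} \cdot r_{j+1}$: both must be tracked exactly, since they are what pin down the sign reversal together with the specific shift by one in the condition on $\theta$, rather than some other constant.
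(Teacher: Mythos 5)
Your proposal is correct and follows essentially the same route as the paper's proof: both apply the symmetry axiom (S2) to a segment confined to the $x_1x_2$-plane, compare the forward construction governed by $\theta[a,b]$ with axis-order $x_1,x_2$ against the reverse construction governed by $\theta[-b-1,-a-1]$ with axis-order $x_2,x_1$, and extract the order-reversing, shift-by-one correspondence. The only (harmless) difference is that the paper fixes a single endpoint chosen via the position $i_a$ of $a$ to read off the one inequality $-b-1\prec_\theta -a-1$ directly, whereas you vary the endpoint over all admissible $T$ to obtain the full position-complementation identity first.
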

\begin{proof}[Proof (sketch)]
This proof follows the same spirit as Theorem~\ref{theo:CDR}, but using the symmetry axiom instead. For any two numbers $a,b$ such that such that  $a \prec_\theta b$ we choose two points $p,q\in \Z^d$ and look at $R(p,q)$. In particular, we look at two specific intermediate points $r$ and $s$. The key property of these two points is that the behaviour of $R(p,q)$ around those points is determined by the positions of $a$ and $b$ in $\theta$. Then, we look at the symmetric path $R(q,p)$ and show that the behaviour around the same intermediate points now depends on the positions of $-b-1$ and $-a-1$. In order to satisfy the symmetry axiom, the return path $R(q,p)$ has to be the same and thus we must have $-b-1 \prec_\theta -a-1$. %

\end{proof}

This completes one side of the implication of Theorem~\ref{theo_characCDS}. In order to complete the proof we show that the  two requirements for $\theta$ are also sufficient. 

\begin{lemma}[Sufficient condition for CDSs]
\label{lem:CDS}
Let $\theta$ be a total order that satisfies $\theta+2 = \theta$ and $\theta = -(\theta+1)^{-1}$. Then, $\CDS(\theta)$ forms a CDS.
\end{lemma}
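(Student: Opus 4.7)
The plan is to verify axioms (S1)--(S5) for $\CDS(\theta)=\bigcup_{p\in\Z^d}\CDR(\theta,p)$ by exploiting the two hypotheses in complementary ways: $\theta=\theta+2$ will make each individual $\CDR(\theta,p)$ a legitimate CDR (supplying (S1), (S4), (S5) and the within-CDR subsegment property for free), while $\theta=-(\theta+1)^{-1}$ will supply the global symmetry axiom (S2) needed to glue the different CDRs consistently. For the first step I apply Lemma~\ref{lem_sufficient} at each $p\in\Z^d$: the required condition $\theta[\mat{t}\cdot p,\infty)=\theta[\mat{t}'\cdot p,\infty)-\mat{t}'\cdot p+\mat{t}\cdot p$ reduces to $\theta=\theta+c$ with $c=\mat{t}\cdot p-\mat{t}'\cdot p$. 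Since $\mat{t}-\mat{t}'\in\{0,\pm 2\}^d$, the shift $c$ is always even, so the identity follows by iterating $\theta=\theta+2$. A convenient byproduct is $\theta^p_{\mat{t}}[\mat{t}\cdot p,\infty)=\theta[\mat{t}\cdot p,\infty)$ for every $p$ and $\mat{t}$, so every local total order is just a tail of $\theta$ itself.

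The heart of the argument is axiom (S2). Fix $p,q\in\Z^d$ and let $\mat{t}$ be the slope from $p$ to $q$; I need to show the path from $p$ to $q$ produced inside $\CDR(\theta,p)$ coincides as a set with the path from $q$ to $p$ produced inside $\CDR(\theta,q)$, which has slope $-\mat{t}$. Unfolding $\theta=-(\theta+1)^{-1}$ shows that the involution $\phi\colon c\mapsto -c-1$ is order-reversing on $(\Z,\prec_\theta)$, and it restricts to a bijection $[\mat{t}\cdot p,\mat{t}\cdot q-1]\to[-\mat{t}\cdot q,-\mat{t}\cdot p-1]$. By Observation~\ref{obs:same_order}, the axis-order of $-\mat{t}$ is the reverse of the axis-order of $\mat{t}$, so the backward quota vector (the lengths $|q_i-p_i|$ listed in the axis-order of $-\mat{t}$) is the reverse of the forward quota vector. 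The plan is then to verify that if $c\in[\mat{t}\cdot p,\mat{t}\cdot q-1]$ has $\prec_\theta$-rank $k$ from below in the forward interval, then $\phi(c)$ has rank $m-k+1$ from below in the backward interval (with $m=\mat{t}\cdot q-\mat{t}\cdot p$); a prefix-sum calculation on the quotas then shows that the three reversals (axis-order, quota vector, and rank under $\phi$) compose trivially, so that the same coordinate axis is selected at matched steps. An induction along the path then yields that the backward path visits the same sequence of points as the forward path in reverse order, proving $R(p,q)=R(q,p)$.

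The remaining axiom (S3) comes almost for free: for any $r\in R(p,q)$ the inclusion $R(p,r)\subseteq R(p,q)$ holds inside $\CDR(\theta,p)$ by its CDR property, and $R(r,q)\subseteq R(p,q)$ follows by applying the CDR property at $q$ to $R(q,r)\subseteq R(q,p)$ and invoking the identities $R(p,q)=R(q,p)$ and $R(r,q)=R(q,r)$ established in Step~2. The main obstacle is the bookkeeping in the symmetry step: I need a clean verification that reversing the axis-order of $\mat{t}$, reversing its quota vector, and reversing $\prec_\theta$-ranks via $\phi$ all cancel out on the chosen axis. This requires writing out the prefix sums of the quota vector explicitly and checking the index correspondence $k\leftrightarrow m-k+1$, but once the three reversals are lined up correctly it introduces no conceptual difficulty.
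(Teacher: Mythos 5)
Your proposal is correct and follows essentially the same route as the paper: use $\theta=\theta+2$ (via Lemma~\ref{lem_sufficient}) to reduce every local order to a tail of $\theta$, then verify (S2) by matching the forward generation on $\theta[\mat{t}\cdot p,\mat{t}\cdot q-1]$ with the backward generation on $\theta[-\mat{t}\cdot q,-\mat{t}\cdot p-1]$ through the order-reversing involution $c\mapsto -c-1$ supplied by $\theta=-(\theta+1)^{-1}$, together with the reversed axis-order. The only cosmetic difference is that the paper cancels one of the three reversals by observing that reversing both the total order and the axis-order yields the same path, whereas you track the rank correspondence $k\leftrightarrow m-k+1$ against the reversed quota prefix sums explicitly; both computations are equivalent.
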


\section{Characterization of necessary and sufficient conditions}\label{sec_carac}
Let $\mathcal{F}$ be the collection of total orders of $\Z$ that satisfy the necessary and sufficient conditions of Theorem~\ref{theo_characCDS}. 
In order to bound the Hausdorff distance of the CDS associated to these constructions, we must give properties of total orders in $\mathcal{F}$. %

\begin{observation}\label{obs_monot}
All odd numbers appear monotonically in any total order $\theta$ that satisfies $\theta=\theta+2$. The same holds for even numbers.
\end{observation}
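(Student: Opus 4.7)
The plan is to reduce the claim, for odd numbers, to a single base case on the pair $(1,3)$ and then propagate it to all pairs of consecutive odd integers via the shift invariance $\theta=\theta+2$. Transitivity of the total order finishes the job.

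More concretely, I would proceed as follows. First, since $\theta$ is a total order, either $1 \prec_\theta 3$ or $3 \prec_\theta 1$. Suppose we are in the first case. The hypothesis $\theta=\theta+2$ says exactly that $a \prec_\theta b$ if and only if $a+2 \prec_\theta b+2$. Applying this repeatedly (with shift $2k$ for any $k\in\Z$) to $1\prec_\theta 3$ yields
\[
2k+1 \;\prec_\theta\; 2k+3 \qquad \text{for every } k\in\Z.
\]
Thus every pair of consecutive odd integers is ordered by $\prec_\theta$ in the usual integer direction. For any two odd numbers $a<b$, transitivity along the finite chain $a \prec_\theta a+2 \prec_\theta a+4 \prec_\theta \cdots \prec_\theta b$ gives $a\prec_\theta b$. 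Hence the odd integers appear in $\theta$ in strictly increasing (usual) order, which is monotone. The second case $3\prec_\theta 1$ is completely symmetric and produces strictly decreasing order. In either case the odd integers are monotone in $\theta$.

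The argument for even integers is identical, replacing the base pair $(1,3)$ by $(0,2)$: the shift-invariance by $2$ again propagates the base relation to all pairs of consecutive even integers, and transitivity yields full monotonicity.

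The only thing to be careful about is the meaning of ``monotone'' for a bi-infinite sequence, but the chain argument above works equally well when $a$ or $b$ is negative, since we only ever traverse finitely many steps between any two fixed odd (resp.\ even) integers. No deeper property of $\theta$ beyond $\theta=\theta+2$ and the total-order axioms is needed, so no genuine obstacle arises.
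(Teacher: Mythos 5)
Your proof is correct and follows essentially the same route as the paper, which simply notes that the claim follows from repeatedly applying $a\prec_{\theta} b \Leftrightarrow a+2 \prec_{\theta} b+2$; your write-up just spells out the base case and the transitivity chain explicitly.
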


The above result follows from repeatedly applying the fact that $a\prec_{\theta} b \Leftrightarrow a+2 \prec_{\theta} b+2$. The second necessary condition also gives a strong relationship between odd and even numbers.

\begin{observation}\label{obs_monotboth}
Let $\theta$ be a total order such that $\theta=-(\theta+1)^{-1}$. Then, it holds that $0\prec_{\theta} 2 \Leftrightarrow -3 \prec_{\theta} -1$. 
\end{observation}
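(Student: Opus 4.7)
The plan is to unpack the hypothesis $\theta = -(\theta+1)^{-1}$ by chasing through the three operations (shift, reverse, flip) in sequence, and then specialize the resulting rewrite rule to the two elements $0$ and $2$.

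Concretely, I would proceed as follows. Start from the equality $\theta = -(\theta+1)^{-1}$ of total orders, so $a \prec_\theta b$ if and only if $a \prec_{-(\theta+1)^{-1}} b$. Applying the definition of flip, this is equivalent to $-a \prec_{(\theta+1)^{-1}} -b$; applying the definition of reverse, this is equivalent to $-b \prec_{\theta+1} -a$; applying the definition of shift (subtracting $1$ from both sides and undoing the shift), this is equivalent to $-b-1 \prec_\theta -a-1$. The net identity is
\[
a \prec_\theta b \iff -b-1 \prec_\theta -a-1,
\]
valid for all integers $a,b$. Specializing $a = 0$ and $b = 2$ gives $0 \prec_\theta 2 \Leftrightarrow -3 \prec_\theta -1$, which is precisely the observation.

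There is no real obstacle here beyond careful bookkeeping of the three operations on total orders; the statement is essentially a direct instantiation of the involution encoded by the necessary condition $\theta = -(\theta+1)^{-1}$, which is the same identity already exploited in the sketch of Lemma~\ref{lem:nec_2}.
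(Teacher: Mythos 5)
Your derivation is correct, and it matches the paper's (implicit) reasoning: the paper states this observation without proof, treating it as an immediate unpacking of the flip, reverse, and shift definitions, which is exactly the chain $a \prec_\theta b \iff -b-1 \prec_\theta -a-1$ you establish and then instantiate at $a=0$, $b=2$. This is the same identity the paper derives geometrically in the proof of the symmetry condition (Lemma~\ref{lem:nec_2}), so nothing further is needed.
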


By combining the previous two observations we get that either both odd and even numbers increase monotonically for any $\theta\in \mathcal{F}$ or both decrease monotonically. Next we study the relationship between odd and even numbers.

\begin{lemma}\label{lem_consecutive}
Let $\theta\in\mathcal{F}$ be a total order in which two numbers of the same parity are consecutive in $\theta$. Then, it holds that $1\prec_{\theta} 2 \Leftrightarrow 2q+1 \prec_{\theta} 2q'$ for all $q,q'\in\mathbb{Z}$.
\end{lemma}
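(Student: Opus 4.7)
By Observations~\ref{obs_monot} and~\ref{obs_monotboth}, the odd integers and the even integers each appear $\theta$-monotonically and in the same direction. My plan is to leverage the consecutive same-parity pair, together with $\theta=\theta+2$, to force a complete segregation between odds and evens in $\theta$; the hypothesis $1\prec_\theta 2$ will then determine which parity sits below the other. I will assume without loss of generality that both parities are $\theta$-increasing and that the given $\theta$-consecutive pair consists of two evens. The case of two consecutive odds reduces to the case of two consecutive evens through the parity-swapping, order-reversing involution $x\mapsto -x-1$, which is an antiautomorphism of $\theta$ because $\theta=-(\theta+1)^{-1}$ yields $a\prec_\theta b\iff -b-1\prec_\theta -a-1$.

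By monotonicity of the evens, a $\theta$-consecutive pair of evens must be $\mathbb{Z}$-adjacent, so it has the form $(2a,2a+2)$. Iterating $\theta=\theta+2$ then shows that every pair $(2k,2k+2)$ with $k\in\mathbb{Z}$ is $\theta$-consecutive; equivalently, no integer lies strictly between two $\mathbb{Z}$-adjacent evens in $\theta$. Now fix an odd $m$ and set $S_m:=\{k\in\mathbb{Z}:m\prec_\theta 2k\}$. Monotonicity of the evens makes $S_m$ upward-closed in $\mathbb{Z}$. If $S_m$ were a proper nonempty subset, it would have a minimum element $K$, and we would obtain $2(K-1)\prec_\theta m\prec_\theta 2K$, contradicting the $\theta$-consecutiveness of $(2K-2,2K)$. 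Thus every odd lies either below every even or above every even. A further application of $\theta=\theta+2$ transports this ``side'' from one odd to all shifts of it by multiples of $2$, i.e.\ to every other odd, so all odds lie on the same side of all evens.

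Finally, $1\prec_\theta 2$ rules out the possibility that every odd lies above every even, so every odd lies below every even in $\theta$; that is, $2q+1\prec_\theta 2q'$ for every $q,q'\in\mathbb{Z}$. The reverse implication is immediate by specializing to $q=0$, $q'=1$. The main delicate step will be the threshold argument of the second paragraph, which is what converts the local consecutive-pair hypothesis into the global segregation of parities; everything else is either bookkeeping or a symmetry reduction.
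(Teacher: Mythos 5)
Your proof is correct and takes essentially the same route as the paper's: both rest on Observations~\ref{obs_monot} and~\ref{obs_monotboth}, iterate $\theta=\theta+2$ to propagate the given $\theta$-consecutive, $\Z$-adjacent even pair to all such pairs, and derive the contradiction from an integer sitting strictly between two consecutive evens. The differences are cosmetic: your involution $x\mapsto -x-1$ gives an explicit reduction of the odd-pair case where the paper merely declares the remaining cases ``similar,'' and your upward-closed-set threshold argument replaces the paper's choice of the first odd number appearing after $2k+2$ in $\theta$.
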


\begin{corollary}\label{cor_four}
There are exactly four total orders in $\mathcal{F}$ in which two numbers of the same parity are consecutive.
\end{corollary}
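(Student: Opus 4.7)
The plan is to use the previous observations and lemma to show there are at most four candidates, and then verify that each candidate is actually in $\mathcal{F}$.

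First, I would argue the upper bound of four. By Observation~\ref{obs_monot}, in any $\theta\in\mathcal{F}$ the odd integers appear monotonically (either in the natural $\mathbb{Z}$-order or reversed), and the same is true for the even integers. A priori this gives four combinations, but Observation~\ref{obs_monotboth} ties the two directions together: $0 \prec_\theta 2$ iff $-3 \prec_\theta -1$, which together with the monotonicity of each parity class forces the odd and even sub-orders to have the same direction. So there are exactly two ``direction'' choices: both parity classes increase in $\mathbb{Z}$-order, or both decrease. Next, under the hypothesis that some same-parity pair is consecutive, Lemma~\ref{lem_consecutive} forces either all odds to precede all evens in $\theta$ (when $1\prec_\theta 2$) or all evens to precede all odds (when $2\prec_\theta 1$). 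These two binary choices ---direction and which parity class comes first--- completely determine $\theta$, giving at most four candidates which I will call $\theta_A,\theta_B,\theta_C,\theta_D$ corresponding to (odds-first increasing), (odds-first decreasing), (evens-first increasing), (evens-first decreasing).

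For the lower bound I would verify that each of these four total orders actually lies in $\mathcal{F}$, i.e.\ satisfies $\theta=\theta+2$ and $\theta=-(\theta+1)^{-1}$. The identity $\theta=\theta+2$ is immediate from the construction: adding $2$ preserves parity and merely shifts each monotone parity-class sub-sequence by one position, without changing their relative interleaving, so the resulting order is unchanged. For the identity $\theta=-(\theta+1)^{-1}$ I would track what each of the three operations does to the ``(first-class, direction)'' type of a candidate: shifting by $1$ swaps the roles of odds and evens (and negates which class comes first); taking inverse reverses the direction of both sub-sequences and swaps which parity class comes first; and negating labels swaps parity classes again while flipping direction. Composing the three operations returns the same (first-class, direction) pair, and a direct check on a couple of representative comparisons (for instance, the images of $1\prec 3$ and $1\prec 2$) confirms that the resulting order agrees with the original. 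Since all four candidates manifestly contain consecutive same-parity pairs (e.g.\ $-1$ and $1$ in $\theta_A$), this gives exactly four total orders, as desired.

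The only delicate step is the verification that $-(\theta+1)^{-1}=\theta$ for each candidate; the rest is bookkeeping from the earlier observations. I do not expect any real obstacle: everything reduces to tracking how the three elementary operations (shift, reverse, flip) act on the ``type'' of a total order in which the two parity classes appear as disjoint monotone blocks.
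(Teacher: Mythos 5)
Your proposal is correct and follows essentially the same route as the paper: Observations~\ref{obs_monot} and~\ref{obs_monotboth} fix a common direction for the two parity classes, Lemma~\ref{lem_consecutive} forces one class to wholly precede the other, and the two binary choices yield exactly the four orders $\tau_{o^+e^+},\tau_{o^-e^-},\tau_{e^+o^+},\tau_{e^-o^-}$. Your explicit verification that each candidate satisfies $\theta=\theta+2$ and $\theta=-(\theta+1)^{-1}$ is a welcome addition that the paper only asserts as straightforward (in the proof of Theorem~\ref{theo_shapef}).
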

\begin{proof}
Let $\theta\in\mathcal{F}$ be any such total order. By Lemma~\ref{lem_consecutive} either all odd numbers appear before all even numbers or vice versa. There are four cases depending on whether $0\prec_{\theta} 2$ or $2\prec_{\theta} 0$ and $1\prec_{\theta} 2$ or $2\prec_{\theta} 1$. The first inequality determines whether all even numbers appear monotonically increasing or decreasing in $\theta$ (by Observations \ref{obs_monot} and \ref{obs_monotboth} this also determines how do odd numbers appear). The second inequality determines whether odd numbers are smaller or larger (with respect to $\prec_{\theta}$) than the even ones. Thus, under the assumption that two numbers of the same parity are consecutive in $\theta$, only the following four orders exist.
\begin{eqnarray*}
\tau_{o^+e^+} &=&  \{ \ldots \prec 1 \prec 3 \prec 5 \prec \ldots \prec 0\prec 2 \prec 4 \prec \ldots  \}\\
\tau_{o^-e^-} &=&  \{ \ldots \prec 5 \prec 3 \prec 1 \prec \ldots \prec 4\prec 2\prec 0 \prec \ldots  \}\\
\tau_{e^+o^+} &= (\tau_{o^-e^-})^{-1} =&  \{ \ldots \prec 0 \prec 2 \prec 4 \prec \ldots \prec 1\prec 3 \prec 5 \prec \ldots  \}\\
\tau_{e^-o^-} &= (\tau_{o^+e^+})^{-1} =&  \{ \ldots \prec 4 \prec 2 \prec 0 \prec \ldots \prec 5\prec 3 \prec 1 \prec\ldots  \}
\end{eqnarray*}
\end{proof}

It remains to consider the case in which $\theta\in\mathcal{F}$ is a total order in which no two numbers of the same parity appear consecutively. That is, we have an odd number followed by an even number, followed by an odd number, and so on. For any $q\in\Z$, let $\alpha_{q}$ be the total order satisfying $\ldots \prec_{\alpha_{q}} 0 \prec_{\alpha_{q}} 2q+1\prec_{\alpha_{q}} 2\prec_{\alpha_{q}} 2q+3 \prec_{\alpha_{q}} 4 \prec_{\alpha_{q}} \ldots$. 

\begin{theorem}\label{theo_shapef}
$\mathcal{F}= \{\tau_{o^+e^+},\tau_{o^-e^-},\tau_{e^+o^+},\tau_{e^-o^-}\} \cup \{\alpha_{q} \colon q\in \Z\} \cup \{(\alpha_{q})^{-1} \colon q\in \Z\}$
\end{theorem}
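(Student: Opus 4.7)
First I would prove the reverse inclusion $\supseteq$ by direct verification that each listed total order satisfies $\theta = \theta + 2$ and $\theta = -(\theta+1)^{-1}$. For the four $\tau$-orders this already follows from the proof of Corollary~\ref{cor_four}. For $\alpha_q$ the periodicity $\alpha_q = \alpha_q + 2$ is immediate from the periodic template in the definition. The symmetry condition $\alpha_q = -(\alpha_q+1)^{-1}$ is equivalent to $a \prec_{\alpha_q} b \Leftrightarrow -b-1 \prec_{\alpha_q} -a-1$, which by periodicity reduces to checking the two sample comparisons $0 \prec_{\alpha_q} 2q+1$ and $2q+1 \prec_{\alpha_q} 2$: the first requires $-2q-2 \prec_{\alpha_q} -1$, the second requires $-3 \prec_{\alpha_q} -2q-2$, and both are visible in the template of $\alpha_q$ by instantiating $2k = -2q-2$ and $2k = -2q-4$ respectively. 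For $(\alpha_q)^{-1}$ I would use the easily verified identities $\theta^{-1} + c = (\theta+c)^{-1}$ and $(-\theta)^{-1} = -\theta^{-1}$ to transport both conditions from $\alpha_q$ to its reverse.

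Next, for the forward inclusion $\subseteq$, I fix $\theta \in \mathcal{F}$ and split on whether $\theta$ contains two consecutive numbers of the same parity. If it does, Corollary~\ref{cor_four} already identifies $\theta$ as one of the four $\tau$-orders. Otherwise $\theta$ is \emph{strictly alternating} (odd, even, odd, even, \dots); by Observation~\ref{obs_monot} each parity appears monotonically in $\theta$, and by Observation~\ref{obs_monotboth} both parities go in the same direction.

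I then split the alternating case into two subcases. If both parities increase in $\theta$, strict alternation forces exactly one odd number to sit between the consecutive evens $0$ and $2$; call it $2q+1$. Iterating $\theta = \theta + 2$, the odd number trapped between $2k$ and $2k+2$ must be $2k + 2q + 1$ for every $k \in \Z$, which is precisely the definition of $\alpha_q$; hence $\theta = \alpha_q$. If both parities decrease, the reverse $\theta^{-1}$ is strictly alternating with both parities increasing, so the same structural argument (which needs only monotonicity, alternation, and shift invariance of $\theta^{-1}$, and does not require $\theta^{-1} \in \mathcal{F}$) gives $\theta^{-1} = \alpha_q$ for some $q$, i.e., $\theta = (\alpha_q)^{-1}$.

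The main obstacle is articulating the rigidity step precisely: once alternation rules out same-parity neighbours, the periodicity $\theta = \theta + 2$ must promote the single choice ``which odd lies between $0$ and $2$'' into a globally consistent interleaving pattern, and this pattern is exactly what $\alpha_q$ encodes. All other work is organized bookkeeping and direct application of the two observations already established.
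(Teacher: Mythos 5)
Your proposal is correct and follows essentially the same route as the paper: verify membership of the listed orders directly, then for the forward inclusion use Corollary~\ref{cor_four} when two same-parity numbers are consecutive, and otherwise use Observations~\ref{obs_monot} and~\ref{obs_monotboth} together with $\theta=\theta+2$ to pin down the unique odd number $2q+1$ interleaved between $0$ and $2$, forcing $\theta=\alpha_q$ (or $(\alpha_q)^{-1}$ in the decreasing case, which the paper handles by symmetry rather than by passing to $\theta^{-1}$ as you do; the two are equivalent).
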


This completely characterizes the set $\mathcal{F}$ of total orders, and allows us to find a lower bound on the Hausdorff distance of the associated CDSs. %

\begin{theorem}\label{theo:dist}
For any $p=(p_1, \ldots, p_d)\in\Z^d$, total order $\theta \in \mathcal{F}$ and $n>0$, there exists a point $q\in\Z^d$ such that $||p-q||_2 =  \Theta(n)$ and $H(\overline{pq},R(p,q)) = \Theta(n)$.
\end{theorem}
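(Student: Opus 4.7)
The plan is to exhibit, for any $\theta\in\mathcal{F}$ and any $p\in\Z^d$, a point $q$ of the form $q=p+(n,n,n,0,\ldots,0)$ such that the digital path $R(p,q)$ bends by $\Omega(n)$ away from the Euclidean diagonal $\overline{pq}$. The matching upper bound $H=O(n)$ is trivial since $R(p,q)$ consists of $O(n)$ unit steps and hence has diameter $O(n)$.

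The first step is a reduction to the origin. Setting $\mat{t}=(+1,\ldots,+1)$, Theorem~\ref{theo_shiftedCDS} together with the translation lemma (Lemma~\ref{lem:DO_tran}) show that $R(p,q)$ is a translated copy of a path built by $\CDO(\theta-\mat{t}\cdot p,\,0,\,\mat{t})$. Since $\theta=\theta+2$, the shifted order is either $\theta$ or $\theta+1$, and the identity $\theta=-(\theta+1)^{-1}$ rearranges to $-\theta^{-1}=\theta+1$, showing that $\mathcal{F}$ is closed under $\theta\mapsto\theta+1$. Axiom (S5) then confines the path to the 3-dimensional affine subspace $\{x_4=\cdots=x_d=0\}$, where it agrees with the 3-dimensional construction $\CDO(\theta',0,(+1,+1,+1))$ (axis order $x_1,x_2,x_3$) for some $\theta'\in\mathcal{F}$, entirely governed by the restriction $\theta'[0,3n-1]$.

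Next I would perform a case analysis via the classification of $\mathcal{F}$ in Theorem~\ref{theo_shapef}. For the four split orders it suffices to treat $\theta'=\tau_{o^+e^+}$ (the remaining three follow by flipping and reversal): a direct inspection of $\tau_{o^+e^+}[0,3n-1]$ assigns $x_1$-moves to the small odd sums $\{1,3,\ldots,2n-1\}$, $x_2$-moves to $\{0,2,\ldots,n-2\}\cup\{2n+1,\ldots,3n-1\}$, and $x_3$-moves to the large even sums $\{n,n+2,\ldots,3n-2\}$. Tracing step by step shows the path reaches the vertex $(n/2,n/2,0)$ at sum $n$, which lies at Euclidean distance $n/\sqrt{6}$ from the diagonal segment $\{(t,t,t):t\in[0,n]\}$. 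For $\alpha_q$ (or its inverse) with $q$ fixed and $n\gg|q|$, the key observation is that odd number $m$ occupies position $m-2q$ in $\alpha_q$, so the restriction to $[0,3n-1]$ coincides with the natural order $0\prec 1\prec\cdots\prec 3n-1$ except for an $O(|q|)$-sized displacement of extremal odd numbers. Consequently the path follows the bounding-box sequence $(0,0,0)\to(n,0,0)\to(n,n,0)\to(n,n,n)$ up to an additive $O(|q|)$ perturbation of each turning vertex, and the perturbed corner remains at distance $n\sqrt{6}/3-O(1)=\Theta(n)$ from the diagonal.

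The main obstacle I anticipate is making the alternating-order case rigorous: although intuitively $\alpha_q$ is ``close to'' the natural order for fixed $q$, one must confirm that the small rank perturbations do not accidentally align the three-dimensional path with the diagonal. The explicit rank formula pinning the displacement to $O(|q|)$ resolves this, since it is overwhelmed by the $\Theta(n)$ gap between the bounding-box corner and the diagonal.
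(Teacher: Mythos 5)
Your proposal is correct, but it takes a genuinely different construction from the paper's. You aim at the three-dimensional diagonal point $q=p+(n,n,n,0,\ldots,0)$ and then split into cases according to the classification of $\mathcal{F}$; the paper instead aims at the two-dimensional off-diagonal point $q=p+(2n,4n,0,\ldots,0)$ and shows the path is forced through $r=p+(2n,2n,0,\ldots,0)$, at distance $2n/\sqrt{5}$ from $\overline{pq}$. The paper's choice buys a single uniform argument: since the $2n$ smallest elements of $\theta'[0,6n-1]$ consist of the $k$ smallest odds and the $2n-k$ smallest evens for some $k$, all of them are $<4n$ whenever both parities increase monotonically, so every $x_1$-step occurs before the hyperplane $\mat{t}\cdot x=4n$ --- no distinction between the split orders $\tau_{o^+e^+},\tau_{e^+o^+}$ and the alternating orders $\alpha_q$ is needed, and only a $2$-dimensional subspace is used. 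Your target $(n,n,n)$ forces the case split you describe, because the split orders produce a staircase vertex $(n/2,n/2,0)$ (distance $n/\sqrt{6}$, which I confirm) while the $\alpha_q$'s produce a perturbed bounding-box corner near $(n-|q|,0,0)$; your rank-displacement argument for $\alpha_q$ is sound, but it is exactly the bookkeeping the paper's construction avoids, and it silently relies on $q$ being a constant of the fixed order $\theta$ (so the $O(|q|)$ perturbation is $O(1)$ as $n\to\infty$) and on handling both signs of $q$. Your reduction to the origin, the closure of $\mathcal{F}$ under $\theta\mapsto\theta+1$ (the paper instead just notes that shifting preserves parity-monotonicity), and the trivial $O(n)$ upper bound all check out. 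One cosmetic caveat: your clean vertex $(n/2,n/2,0)$ needs $n$ even, which is harmless for a $\Theta(n)$ bound but should be said.
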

\begin{proof}[Proof (sketch)]
Pick a point $q$ sufficiently far from $p$ and look at one every two steps in the path $R(p,q)$. The way in which the path behaves will depend on the position of the odd numbers of $\theta$ (or even numbers depending on the parity of the starting point). Since odd and even numbers appear monotonically in $\theta$, the path will do all steps in one direction before moving into a different one. Intuitively speaking, the movements in the odd numbers will form a bounding box and so will the movements in the even numbers (although the interaction between them may not be same). A specific example of such path is given in the Appendix.
\end{proof}

{\bf Remark} Although, asymptotically speaking, our construction has the same Hausdorff distance as the bounding box CDS, it can be seen that the constant is roughly half: the bounding box CDS has an error of $\frac{\sqrt{2}n}{4}\approx 0.3n$ whereas, say, $\CDS(\tau_{o^+e^+})$ has an error of $\frac{\sqrt{5}n}{15} \approx 0.15n$.

\section{Conclusions}\label{sec:concl}
Increasing the dimension from two to three brings a significant change in the associated constraints for creating CDRs and CDSs. Although we have not been able to create a CDS with $o(n)$ Hausdorff distance, we believe that the results presented in this paper provide the first significant step towards this goal. The next natural step would be to consider constructions that apply different total orders to different points of $\mathbb{Z}^d$.

For simplicity in the exposition, we have defined the CDS as the union of CDRs at all points. The construction of Christ {\em et al.}~\cite{ChristJournal12} considers the union of {\em half CDRs} instead (CDRs that are only defined for half of the slopes, such as slopes that satisfy $t_1=+1$). We note that the same result would follow if we use their approach. Indeed, in order to derive the two necessary conditions, we have only looked at two slopes. For simplicity we have used $(+1,\ldots, +1)$ and $(-1,-1,+1, \ldots , +1)$, but the same result follows for any two slopes that differ in two coordinates. Thus, constructing CDSs by gluing half CDRs would result in the same necessary and sufficient constraints. 

Similarly, one could consider using some kind of priority between slopes (say, lexicographical) so that if $p$ and $q$ are in more than one orthant, only the definition of $R(p,q)$ in the lexicographically smallest slope is considered. This removes the dependency between orthants (Theorem~\ref{theo:CDR}), but has a consistency problem: we can find three points $p,q,q'\in\Z^d$ such that $R(p,q)$ and $R(p,q')$ have different slopes, but the intersection of the two segments is not connected (such as in Figure~\ref{CDR_counterexample}).

\section*{Acknowledgements}
The authors would like to Thank Takeshi Tokuyama and Matthew Gibson for their valuable comments during the creation of this paper.

\begin{figure}
\centering
\scalebox{0.6}{\includegraphics{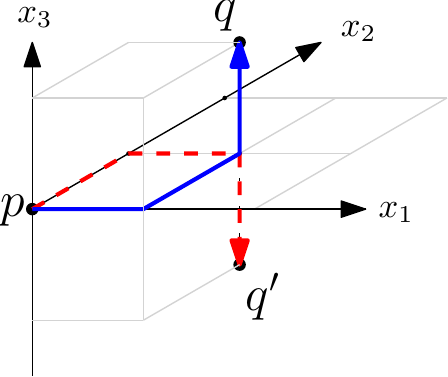}}
\caption{Removing dependency between orthants can creates inconsistencies between each other.
}
\label{CDR_counterexample}
\end{figure}

\bibliographystyle{plainurl}
\bibliography{reference}

\begin{thebibliography}{10}

\bibitem{Chowdhury2015}
Iffat Chowdhury and Matt Gibson.
\newblock A characterization of consistent digital line segments in
  $\mathbb{Z}^2$.
\newblock In {\em Proceedings of the 23rd Annual European Symposium on
  Algorithms}, pages 337--348, 2015.

\bibitem{Chowdhury2016}
Iffat Chowdhury and Matt Gibson.
\newblock Constructing consistent digital line segments.
\newblock In {\em Proceedings of the 12th Latin American Theoretical
  Informatics Symposium}, pages 263--274, 2016.

\bibitem{christarXiv}
Tobias Christ, D{\"{o}}m{\"{o}}t{\"{o}}r P{\'{a}}lv{\"{o}}lgyi, and Milos
  Stojakovic.
\newblock Consistent digital line segments.
\newblock {\em CoRR}, abs/1009.2142, 2010.
\newblock URL: \url{http://arxiv.org/abs/1009.2142}.

\bibitem{ChristJournal12}
Tobias Christ, D{\"o}m{\"o}t{\"o}r P{\'a}lv{\"o}lgyi, and Milo{\v{s}}
  Stojakovi{\'{c}}.
\newblock Consistent digital line segments.
\newblock {\em Discrete {\&} Computational Geometry}, 47(4):691--710, 2012.

\bibitem{cknt-cdg-09j}
Jinhee Chun, Matias Korman, Martin N\"{o}llenburg, and Takeshi Tokuyama.
\newblock Consistent digital rays.
\newblock {\em Discrete and Computational Geometry}, 42(3):359--378, 2009.

\bibitem{DBLP:conf/compgeom/GoodrichGHT97}
Michael~T. Goodrich, Leonidas~J. Guibas, John Hershberger, and Paul~J.
  Tanenbaum.
\newblock Snap rounding line segments efficiently in two and three dimensions.
\newblock In {\em Proceedings of the 13th Annual Symposium on Computational
  Geometry}, pages 284--293, 1997.

\bibitem{DBLP:conf/focs/GreeneY86}
Daniel~H. Greene and F.~Frances Yao.
\newblock Finite-resolution computational geometry.
\newblock In {\em Proceedings of the 27th Annual Symposium on Foundations of
  Computer Science}, pages 143--152, 1986.

\bibitem{luby}
M.~G. Luby.
\newblock Grid geometries which preserve properties of euclidean geometry: A
  study of graphics line drawing algorithms.
\newblock In {\em NATO Conference on Graphics/CAD}, pages 397--432, 1987.

\bibitem{DBLP:conf/iccS/Sugihara01}
Kokichi Sugihara.
\newblock Robust geometric computation based on topological consistency.
\newblock In {\em Proceedings of the 9th International Conference on
  Computational Science}, pages 12--26, 2001.

\bibitem{c-v-35}
Johannes van~der Corput.
\newblock Verteilungsfunktionen {I \& II} (in german).
\newblock {\em Proceedings of the Koninklijke Nederlandse Akademie van
  Wetenschappen}, 38:813--820, 1058--1066, 1935.

\end{thebibliography}

\newpage
\appendix
\section{Proofs omitted from the main document}
\section*{Proof of Lemma~\ref{lem:DO_tran}}
\begin{proof}
Since we are doing a translation operation, the segment from $o$ to $q-p$ has the same slope $\mat{t}$ as the segment from $p$ to $q$. Thus, we must compare $\mathcal{C}_p=\CDO(\theta, p, \mat{t})$ and $\mathcal{C}_o=\CDO(\theta- \mat{t}\cdot p, o, \mat{t})$.
Let $p=m_1, \ldots m_k=q$ be the path from $p$ to $q$ in $\mathcal{C}_p$, and let 
$o=w_1, \ldots w_k=q-p$ be the path from $o$ to $q-p$ in $\mathcal{C}_o$. We show that $m_{i}-p= w_{i}$ for all $i<k$ by induction.

The base case $m_1-p=w_1$ follows from definition of $p$. So assume that this property holds for some $i<k$. The decision of which direction to move in $\mathcal{C}_p$ depends on the position of $\mat{t}\cdot m_i$ in $\theta[\mat{t}\cdot p, \mat{t}\cdot q-1]$. Similarly, in $\mathcal{C}_o$, this choice depends on the position of $\mat{t}\cdot w_i=\mat{t}\cdot m_i-\mat{t}\cdot p$ in $(\theta- \mat{t}\cdot p)[\mat{t}\cdot o, \mat{t}\cdot (q-p)-1]=(\theta- \mat{t}\cdot p)[0, \mat{t}\cdot q-1 -\mat{t}\cdot p]$. 

That is, in one case we are looking at the relative position of number $\mat{t}\cdot w_i$ in a total order $\theta$. In the other case we are looking at a number that is $\mat{t}\cdot p$ units smaller in a permutation $\theta'$ that is equal to $\theta$ where everything has also been reduced by $\mat{t}\cdot p$ (even the scope of our interest). Thus, the relative positions are the same, which implies that both $m_i$ and $w_i$ move in the same direction. Hence, we have $m_{i+1}-p=w_{i+1}$ and by induction the two paths are the translation of each other.
\end{proof}

\section*{Proof of Proposition~\ref{prop_nocdr}}
\begin{proof}
For any slope $\mat{t} = (t_1, t_2, \ldots, t_d)$, let $a_1, \ldots a_k$ be indices of the coordinates with positive value in increasing order (that is, $t_i=+1$ if and only if $i=a_j$ for some $j\leq k$). Similarly, let $b_1, \ldots b_{d-k}$ be the indices of the the coordinates with negative value in decreasing order. By the definition, the axis-order of $(t_1, t_2, \ldots, t_d)$ is $x_{a_1}, x_{a_2}, \ldots, x_{a_k}, x_{b_1}, \ldots, x_{b_{d-k}}$. Let $q$ be a point in the orthant of $\mat{t}$ at $p$ so that $q_i \geq p_i$ for $i=a_1, \ldots, a_k$ and $p_i \geq q_i$ for $i = b_1, \ldots, b_{d-k}$.

Consider the total order construction. If $x_1 = x_{a_1}$ (that is, $p$ has smaller or equal $x_1$ coordinate than $q$), the path moves from $p$ to $(q_1, p_2, \ldots, p_d)$ because $\mat{t} \cdot p$, \ldots, $\mat{t} \cdot p + |q_1 - p_1| -1$ are the $|q_1 - p_1|$ smallest elements in $\theta_0[\mat{t}\cdot p, \mat{t}\cdot q-1]$. Otherwise, $x_1 = x_{b_{d-k}}$ and the last movements will be in the $x_1$ coordinate (i.e., $\mat{t} \cdot q - |p_1-q_1|, \ldots, \mat{t} \cdot q -1$ are the $|p_1-q_1|$ largest elements in $\theta_0[\mat{t}\cdot p, \mat{t}\cdot q-1]$). In this case, the path will move from $(p_1, q_2, \ldots, q_d)$ to $q$. Similarly, for $i = 2, \ldots d$, the path extends $|q_i-p_i|$ steps in $x_i$ coordinate and is connected from $p$ or $q$ depending on $q_i \geq p_i$ or $p_i \geq q_i$. This is the same as the bounding box approach.

For the second part of the claim, we consider the case in which $d=3$, and $p=(0,0,2)$. Let $q=(1,1,3)$ and $q'=(1,1,1)$; we claim that paths $R(p,q)$ and $R(p,q')$ do not satisfy the subsegment property. 

The slope of $R(p,q)$ is $(+1, +1, +1)$ while the slope of $R(p,q')$ is $(+1, +1, -1)$.
Indeed, $R(p,q)$ is constructed by $\theta_1[2, 4] =\{ 2\prec 3\prec 4 \}$ with axis-order $x_1, x_2, x_3$, so the path is $(0,0,2) \rightarrow (1,0,2) \rightarrow (1,1,2) \rightarrow (1,1,3)$. On the other hand, $R(p,q')$ is constructed by $\theta_1[-2,0] = \{ -1 \prec -2 \prec 0\}$ with axis-order $x_1, x_2, x_3$, so the path is $(0,0,2) \rightarrow (0,1,2) \rightarrow (1,1,2) \rightarrow (1,1,1)$. In particular, notice that $R(p,q) \cap R(p,q') = \{(0,0,2), (1,1,2)\}$ which is not connected under the 8-neighbor topology. i.e. the path from $(0,0,2)$ to $(1,1,2)$ is not well-defined (See Figure~\ref{CDR_counterexample}).
\end{proof}

\section*{Proof of Lemma~\ref{lem:necCDR}}
\begin{proof}
We will prove the statement by contradiction: assume that there exist two slopes $\mat{t}, \mat{t}'$ such that $v \prec_{\theta_\mat{t}} v' $ but $v' - \mat{t} \cdot p + \mat{t}'\cdot p \prec_{\theta_{\mat{t}'}} v - \mat{t} \cdot p + \mat{t}'\cdot p $. We will pick a point $q$ such that $R(p,q)$ has both slope $\mat{t}$ and $\mat{t}'$ (i.e., it is in the intersection of both orthants), and look at $R(p,q)$ from both the viewpoints of $\CDO(\theta_\mat{t}, p, \mat{t})$ and $\CDO(\theta_{\mat{t}'}, p, \mat{t}')$. Along the path $R(p,q)$ we look at two intermediate points; the way in which the path behaves at those intermediate points will depend on the positions of $v$ and $v'$ in $\theta_\mat{t}$ (from the viewpoint of $\CDO(\theta_\mat{t}, p, \mat{t})$). Similarly, the behaviour of the same path on the other orthant will depend on the positions of $v' - \mat{t} \cdot p + \mat{t}'\cdot p$ and $v - \mat{t} \cdot p + \mat{t}'\cdot p $ in $\theta_{\mat{t}'}$. Thus, if the relationships are reversed, the two paths will behave differently, and in particular we cannot have a CDR.

More formally, assume that $\theta_\mat{t}[\mat{t}\cdot p, \infty) \neq \theta_{\mat{t}'}[ \mat{t}' \cdot p, \infty) - \mat{t}' \cdot p + \mat{t}\cdot p$. That is, there exist $v,v'\geq \mat{t}\cdot p$ such that $v \prec_{\theta_\mat{t}} v' $ but $v' - \mat{t} \cdot p + \mat{t}'\cdot p \prec_{\theta_{\mat{t}'}} v - \mat{t} \cdot p + \mat{t}'\cdot p $. For simplicity in the exposition, we will first consider the case in which $\mat{t}, \mat{t}'$ that only differ in one coordinate. Since $\mat{t}$ and $\mat{t}'$ differ in a single coordinate, the intersection of the two associated orthants forms a subspace $\mathcal{H}$ of dimension $d-1 \geq 2$. By the definition of the total order construction, from any point $q\in \mathcal{H}$ we can construct $R(p,q)$ either from $\CDO(\theta_\mat{t}, p, \mat{t})$ or $\CDO(\theta_{\mat{t}'}, p, \mat{t}')$. %

Let $x_c$ be the coordinate that $\mat{t}$ and $\mat{t}'$ differ, and let $x_a$ and $x_b$ be two coordinates that $\mat{t}$ and $\mat{t}'$ have the same value (note that $a$ and $b$ must exist because the dimension of $\mathcal{H}$ is at least two). Without loss of generality we assume that $t_c=+1$, $t'_c=-1$ and $x_a$  precedes $x_b$ in the axis-order of $\mat{t}$.

Let $u = \max\{v,v'\} + 1$ and $i_v$ be the position of $v$ in $\theta_\mat{t}[\mat{t}\cdot p, u-1]$. Starting from $p$ move $t_a i_v$ positions in the $x_a$ coordinate and $t_b(u - \mat{t}\cdot p - i_v)$ positions in the $x_b$ coordinate. Let $q$ be such point (more formally, $q_a=p_a+ t_a i_v$, $q_b=p_b+ t_b(u - \mat{t}\cdot p - i_v)$ and $q_i=p_i$ for all $i\neq a, b$). Figure~\ref{fig:CDR} shows an example of the construction of $q$, $r$ and $r'$.

\begin{figure}[h]
\centering
\includegraphics{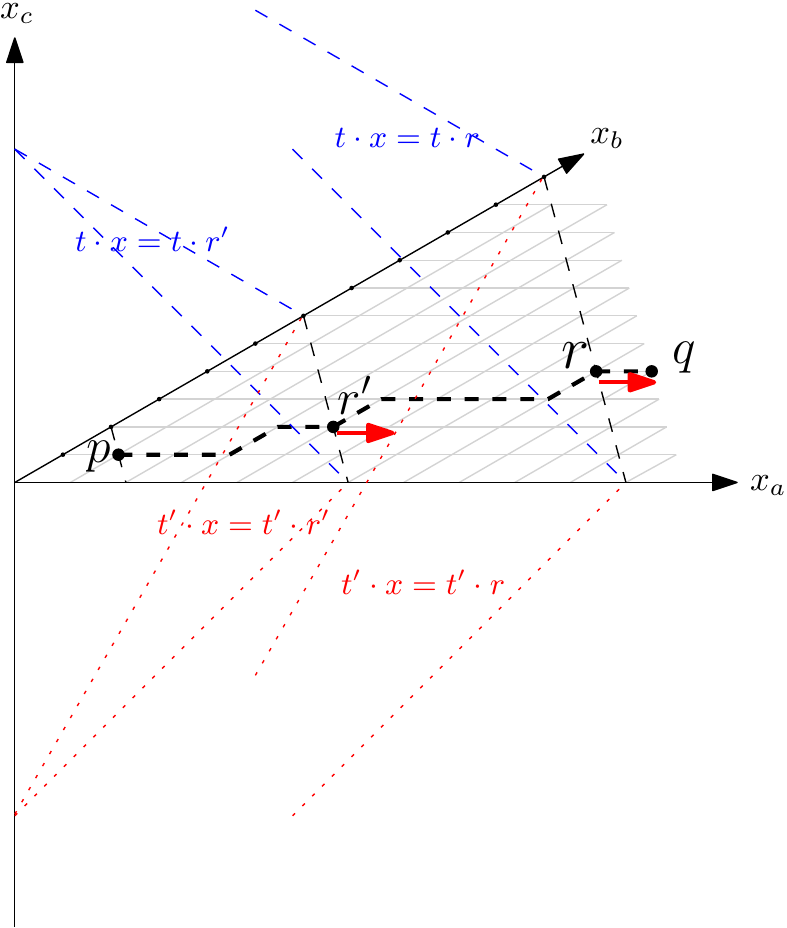}
\caption{Illustration of proof of Lemma~\ref{lem:necCDR}. In the path from $p$ to $q$ we pass through two intermediate points $r$ and $r'$. Their behavior on the two orthants depends on the relative positions of two different pairs of numbers (the constant in which they are swept by the planes $\mat{t} \cdot x=c$ and $\mat{t}'\cdot x = c$, respectively).
}
\label{fig:CDR}
\end{figure}

By construction, point $q$ lies in the orthants associated to both $\mat{t}$ and $\mat{t}'$ (and thus $q\in\mathcal{H}$). Moreover, we have $\mat{t} \cdot q = u$.  We compare the construction of $R(p,q)$ using both $\CDO(\theta_\mat{t}, p, \mat{t})$ and $\CDO(\theta_{\mat{t}'}, p, \mat{t}')$ and observe that indeed they do not match. From the viewpoint of $\CDO(\theta_\mat{t}, p, \mat{t})$, $R(p,q)$ is constructed by $\theta_\mat{t}[\mat{t}\cdot p, \mat{t}\cdot q-1 = u-1]$ with axis-order $x_a,x_b$. We only need to consider $x_a,x_b$ in the axis-order because the path only moves $i_v$ steps in the $x_a$ coordinate and $u - \mat{t}\cdot p - i_v$ steps in the $x_b$ coordinate. Recall that $\mat{t}\cdot p \leq v,v' \leq u-1$ and $\mat{t} \cdot q = u =\max\{v,v'\} + 1$. Thus, along the path $R(p,q)$ we must traverse through two points $r,r' \in R(p,q)$ such that $\mat{t} \cdot r = v$ and $\mat{t} \cdot r' = v'$. %
 We now consider the movements of $R(p,q)$ at the two intermediate points.

By the total order construction, this will depend on the positions of $v$ and $v'$ in $\theta_\mat{t}[\mat{t}\cdot p, u-1]$: the $i_v$ smallest elements will correspond to points in which we move in the $x_a$ coordinate (and in the  remaining $u - \mat{t}\cdot p - i_v$ cases we move in the $x_b$ coordinate). By construction, $v$ is the $i_v$-th element in $\theta_\mat{t}[\mat{t}\cdot p, u-1]$, which in particular implies that at point $r$ the path will move in the $x_a$ coordinate. On the other hand, the path at $r'$ moves in the $x_b$ coordinate because $v \prec_{\theta_\mat{t}} v' $ (and thus $v'$ is not among the $i_v$ smallest elements in $\theta_\mat{t}[\mat{t}\cdot p, u-1]$).

Now we look at $R(p,q)$ from the viewpoint of $\CDO(\theta_{\mat{t}'}, p, \mat{t}')$. In this case, the path $R(p,q)$ is constructed by $\theta_{\mat{t}'}[\mat{t}'\cdot p, \mat{t}'\cdot q-1]$ with axis-order $x_a,x_b$ (as before, we only need to look at the axis-order of these two coordinates because the path stays within that subspace). The path must contain points $r$ and $r'$ or it would not be consistent. Next, we consider the movements of the path at these two points. How we move on these two points will now depend on the positions of $\mat{t}' \cdot r$ and $\mat{t}' \cdot r'$ in $\theta_{\mat{t}'}[\mat{t}'\cdot p, \mat{t}'\cdot q-1]$ (again, the $i_v$ smallest elements will correspond to points in which we move in the $x_a$ coordinate).

Recall that we assumed that $v' - \mat{t} \cdot p + \mat{t}'\cdot p \prec_{\theta_{\mat{t}'}} v - \mat{t} \cdot p + \mat{t}'\cdot p$. We simplify the first term:

\begin{eqnarray*}
v' - \mat{t} \cdot p + \mat{t}'\cdot p &=& \mat{t} \cdot r'  - \mat{t} \cdot p + \mat{t}'\cdot p\\
 &=& \mat{t} \cdot r' - 2p_c \\ 
 &=& \mat{t}' \cdot r' + \mat{t} \cdot r' - \mat{t}' \cdot r' - 2 p_c \\
 &=& \mat{t}' \cdot r' + 2 r_c - 2 p_c \\
 &=& \mat{t}' \cdot r'
\end{eqnarray*}

Where the first equality follows from the definition of $r'$, and the second and third follow from the fact that $\mat{t}$ and $\mat{t}'$ only defer in the $c$-th coordinate, and the last one from the fact that $r_c=p_c$. Similarly, we can show that $v - \mat{t} \cdot p + \mat{t}'\cdot p= \mat{t}'\cdot r$, and thus $v' - \mat{t} \cdot p + \mat{t}'\cdot p \prec_{\theta_{\mat{t}'}} v - \mat{t} \cdot p + \mat{t}'\cdot p$ simplifies to $\mat{t}' \cdot r' \prec_{\theta_{\mat{t}'}} \mat{t}' \cdot r$. 

We use this result to argue about the positions of $\mat{t}' \cdot r$ and $\mat{t}' \cdot r'$ in $\theta_{\mat{t}'}[\mat{t}'\cdot p, \mat{t}'\cdot q-1]$.
In order to be consistent with the construction in the orthant associated to $\mat{t}$, $\mat{t}' \cdot r$ must be in the $i_v$ smallest elements in $\theta_{\mat{t}'}[\mat{t}'\cdot p, \mat{t}'\cdot q-1]$. However, this implies that $\mat{t}' \cdot r'$ also is among the $i_v$ smallest elements in $\theta_{\mat{t}'}[\mat{t}'\cdot p, \mat{t}'\cdot q-1]$. Thus, the path at both ${r}$ and ${r}'$ moves in the $x_a$ coordinate. This contradicts with $R(p,q)$ in $\CDO(\theta_\mat{t}, p, \mat{t})$.

In order to complete the proof we must consider the case in which $\mat{t}$ and $\mat{t}'$ differ in more than one coordinate. In this case we pick a sequence of orthants connecting $\mat{t}$ and $\mat{t}'$ in a way that two consecutive orthants only differ in a single coordinate. If $\theta_\mat{t}[\mat{t}\cdot p, \infty) \neq \theta_{\mat{t}'}[ \mat{t}' \cdot p, \infty) - \mat{t}' \cdot p + \mat{t}\cdot p$, then there must exist two orthants that are consecutive in the sequence and do not satisfy the equality either. This situation would contradict with the previous reasoning. Thus, we conclude that this cannot happen which completes the proof.
\end{proof}

\section*{Proof of Lemma~\ref{lem_sufficient}}
\begin{proof}
First notice that it suffices to show that the construction is well defined. Indeed, if so Lemma~\ref{lem:CDO} implies that each orthant on its own is a partial CDS (and in particular satisfies the axioms). Segments of different slopes can only meet at the intersection of both associated orthants, and by monotonicity, the union must also satisfy all five axioms. 

Thus, we must show that for any $q\in\Z^d$ there is a unique path $R(p,q)$. If $q$ belongs to a single orthant defined by a slope $\mat{t}$, then the path is unique (since it is only defined in $\CDO(\theta_{\mat{t}},p,\mat{t})$). Thus, we study the case in which a point belongs to two (or more) orthants.

For any $\mat{t} \neq \mat{t}' \in T$, the intersection of the two associated orthants will form a lower dimensional subspace $\mathcal{H}$ that is in the boundary of both orthants. To complete the proof, we need to show that the set of segments on $\mathcal{H}$ are also consistent. Using the translation lemma (Lemma~\ref{lem:DO_tran}), we look at the translated version at the origin $o$ instead.

Let $d'$ be the dimension of $\mathcal{H}$, and let $x_{a_1}, x_{a_2}, \ldots, x_{a_{d'}}$ be the coordinates that span $\mathcal{H}$.
Note that, $\mat{t}$ and $\mat{t}'$ have the same values in the $x_{a_1}, x_{a_2}, \ldots, x_{a_{d'}}$ coordinates. By Observation~\ref{obs:same_order}, both $\mat{t}$ and $\mat{t}'$ have the same axis-order $\tau$ restricted to $\mathcal{H}$. The segments of $\CDO(\theta_{\mat{t}} - \mat{t} \cdot p, o,\mat{t})$ on $\mathcal{H}$ are constructed by $\theta_{\mat{t}}[\mat{t} \cdot p, \infty) - \mat{t} \cdot p$ with axis-order $\tau$. On the other hand, the segments of $\CDO(\theta_{\mat{t}'} - \mat{t}' \cdot p, {o},\mat{t}')$ on $\mathcal{H}$ are constructed by $\theta_{\mat{t}'}[\mat{t}' \cdot p, \infty) - \mat{t}' \cdot p$ with the same axis-order $\tau$. Since $\theta_\mat{t}[\mat{t}\cdot p, \infty) = \theta_{\mat{t}'}[ \mat{t}' \cdot p, \infty) - \mat{t}' \cdot p + \mat{t} \cdot p$, the two sets of segments on $\mathcal{H}$ must be the same and thus the path is well defined as claimed.
\end{proof}

\section*{Proof of Lemma~\ref{lem:prefix}}
\begin{proof}
We now show that if $i_a \leq r_1 - p_1 \leq i_b -1$, $R(p,r)$ passes through $q$. 
 We will show that $R(p,q)$ and $R(p,r)$ behave in the same way before reaching the diagonal line $x_1 + x_2 = q_1 + q_2$. Since $r_1\geq q_1$ and  $r_2 \geq q_2$, both $q$ and $r$ are above and to the right of $p$ (and thus the slope of both segments is $\mat{t} = (+1,+1)$).

We distinguish three cases of $s_1, s_2$ separately. First consider the case in which both $s_1, s_2>0$. By the total order construction, the movements of $R(p,r)$ before reaching the diagonal line depend on the positions of the numbers $[\mat{t} \cdot p, \mat{t} \cdot q - 1]$ in $\theta^p_\mat{t}[\mat{t} \cdot p, \mat{t} \cdot r - 1]$. The $r_1-p_1$ smallest elements will correspond to points in which we move in the $x_1$ coordinate (and in the remaining $n-r_1+p_1$ ones in the $x_2$ coordinate).
Since there are at least $i_a$ and at most $i_b-1$ horizontal movements from $p$ to $r$, $a$ must be among the $r_1-p_1$ smallest elements in the complete order $\theta^p_\mat{t}[\mat{t} \cdot p, \mat{t} \cdot r - 1]$. Similarly,  $b$ must be among the $n-r_1+p_1$ largest elements. %

Recall that, by the definition of $a$ and $b$, these two numbers correspond to the last movement in the $x_1$-axis and the first movement in the $x_2$-axis in the left order. Thus, all $s_1$ numbers that correspond to movements in the $x_1$-axis in $R(p,q)$ are smaller than or equal to $a$ in $\theta^p_\mat{t}$ (similarly, the $s_2$ numbers that correspond to movements in the $x_2$-axis are larger than or equal to $b$). When we consider these numbers in the complete order, these $s_1$ numbers will belong to the $r_1-p_1$ smallest elements (and the last $s_2$ numbers in the left order, respectively). In particular, the same movements will be done until we reach $q$. 

Now, we consider the case $s_1=0$ (recall that in this case we have $i_a=0$). With a similar reasoning we observe that $b$ is among the $n-r_1+p_1$ largest elements in the complete order. Again, by the definition of $b$, we have that $b$ is the smallest elements in the left order. In particular, all numbers in the left order are larger than or equal to $b$ and thus in the $n-r_1+p_1$ largest elements in the complete order (and thus correspond to movements in the $x_2$-axis). The case in which $s_2=0$ is symmetric. This proves that $R(p,q) \subset R(p,r)$.

We now consider the reverse statement. By the monotonicity axiom, if $R(p,r)$ passes through $q$, $r$ must be above and to the right of $q$ (or it would violate the monotonicity axiom). Assume, for the sake of contradiction, that $R(p,r)$ passes through $q$, but $r_1-p_1 < i_a$ or $r_1-p_1 > i_b-1$. Axiom $(S3)$ implies that $R(p,q) \subset R(p,r)$; we will see that the two paths $R(p,q)$ and $R(p,r)$ will split before reaching $q$, giving a contradiction. 

First consider the case $r_1-p_1 < i_a$. In this case, $a$ belongs to the $n-r_1+p_1$ largest elements in the complete order. In particular, $R(p,r)$ will make a vertical movement at the intermediate point $q'$ such that $\mat{t} \cdot q' = a$. However, by the definition of $a$ in the left order, $R(p,q)$ makes a horizontal movement at $q'$ (recall that $a$ is the largest number for which $R(p,q)$ makes a movement in the $x_1$ coordinate). 

This implies that the two paths diverge and gives a contradiction as claimed. The case $r_1-p_1 > i_b-1$ is analogous. In this case, the movements will diverge at the intermediate point $q''$ such that $\mat{t} \cdot q'' = b$.%
\end{proof}

\section*{Proof of Lemma~\ref{lem_subtree}}
\begin{proof}
Before giving the proof, we note that this result follows from Lemma~1 of~\cite{Chowdhury2016}. Their statement is slightly more general and uses several results on a {\em contract} operation (defined in~\cite{Chowdhury2016}) that CDSs must satisfy. In the following, we give an alternative proof that is based on geometric properties and does not need the contracting machinery.

By Lemma~\ref{lem:prefix}, $i_a$ and $i_b$ partition the complete order into 3 parts: the first $i_a$ numbers, the last $n-i_b+1$ numbers and the remaining numbers. In order to pass through $q$, the first $i_a$ numbers in the complete order will correspond to points in which we move in the $x_1$ coordinate and the last $n-i_b+1$ numbers in the complete order will correspond to points in which we move in the $x_2$ coordinate. Since these numbers have clear movements, the order within themselves is not important. The numbers in the positions from $i_a+1$ to $i_b-1$ are larger than $\mat{t} \cdot q - 1$, which will determine the movements from $q$, so we need to keep the same order as those in $\theta^p_\mat{t}$.
Since we are only interested in the path starting from $q$, we can remove the numbers from $\mat{t} \cdot p$ to $\mat{t} \cdot q -1$ in these three partitions in the complete order and it will give us $X_1(n)$, $X_2(n)$ and $X_3(n)$.
\end{proof}

\section*{Proof of Lemma~\ref{lem:nec_2}}
\begin{proof}

Let $a,b$ be any two numbers such that  $a \prec_\theta b$. We will show that $-b-1 \prec_\theta -a-1$.  Let $c = \min\{a,b\}$ and $d =\max\{a,b\}+1$, Let $i_a$ be the position of $a$ in $\theta[c,d-1]$ and $\mat{t} = (+1,+1,+1, \ldots)$. Consider now the path between $p=(0,c,0,\ldots, 0)$ and $q=(i_a,d-i_a,0,\ldots,0)$. %
 By construction we have $\mat{t} \cdot p = c$, $\mat{t} \cdot q = d$ and $q_2 - p_2 = d-i_a-c \geq 0$ because $1 \leq i_a \leq d-c$.

By definition, path $R(p,q)$ is constructed by $\theta[c,d-1]$ with axis-order $x_1,x_2$. We only need to consider $x_1,x_2$ in the axis-order because the path only moves $i_a$ steps in the $x_1$ coordinate, $d - c - i_a$ steps in the $x_2$ coordinate and does not move in other dimensions. Note that there must exist two intermediate points $r, s \in R(p,q)$ such that $r_1+r_2 = a$ and $s_1+s_2 = b$.

We study the path $R(p,q)$ around points $r$ and $s$. Since the position of $a$ in $\theta[c,d-1]$ is $i_a$, the path at point $r$ moves in the $x_1$ coordinate. Recall that $a \prec_{\theta} b$, which implies that $b$ is not among the $i_a$ smallest elements of $\theta[c,d-1]$. Thus, the path at point $s$ moves in the $x_2$ coordinate. Note that in particular, this implies that points $r' = {r} + (1,0,0,\ldots)$ and $s' = s + (0,1,0,\ldots)$ both are in $R(p,q)$ (and $R(q,p)$ or it would violate axiom (S2)).

Then, we consider the reverse path $R(q,p)$, which is constructed by $\theta[ -\mat{t}\cdot q, -\mat{t}\cdot p-1]$ and axis-order $x_2,x_1$ because of Observation~\ref{obs:same_order}. We now look at the path $R(q,p)$ around points $r'$ and $s'$. Note that the slope of $R(q,p)$ is $-\mat{t}$, $-\mat{t}\cdot r' = -a-1$ and $-\mat{t}\cdot s' = -b-1$.

By the symmetry property (S2), we know that the path at $r'$ with $-\mat{t}\cdot r' = -a-1$ must move in the $x_1$ coordinate towards $r$ (which implies that $-a-1$ is among the $i_a$ largest elements of $\theta[ -\mat{t}\cdot q, -\mat{t}\cdot p-1]$). Similarly, at point $s'$ we move in the $x_2$ coordinate towards $s$ (and $-b-1$ is among the $d-c-i_a$ smallest elements of $\theta[ -\mat{t}\cdot q, -\mat{t}\cdot p-1]$). In particular $-b-1$ must be smaller than $-a-1$ in $\theta$ as claimed.
\end{proof}

\section*{Proof of Lemma~\ref{lem:CDS}}

\begin{proof}
Recall that $\CDS(\theta)$ is defined as a union of CDRs. In particular, any segment of $\CDS(\theta)$ will satisfy axioms (S1), (S3), (S4) and (S5). We focus on the remaining symmetry axiom.

Let $\mat{t}$ be the slope of $R(p,q)$. %
Let $p=m_0, \ldots m_k=q$ be the path from $p$ to $q$ in $\CDS(\theta, p, \mat{t})$, and let 
$q=w_k, \ldots w_0=p$ be the path from $q$ to $p$ in $\CDS(\theta, q, -\mat{t})$. Note that $m_j$ and $w_j$ lie on the same hyperplane $\mat{t} \cdot x = \mat{t} \cdot p + j$. i.e. $\mat{t} \cdot m_j = \mat{t} \cdot w_j$.
We show that $m_{j+1} - m_{j}  = w_{j+1} - w_{j}$ for all $j<k$ and thus $R(p,q)=R(q,p)$.

Let $\tau$ be the axis-order of $\mat{t}$. By the total order construction, the path  $R(p,q)$ is generated by $\theta[\mat{t} \cdot p, \mat{t} \cdot q -1]$ %
and axis order $\tau$. %
Similarly, $R(q,p)$ is generated by $\theta[-\mat{t} \cdot q, -\mat{t} \cdot p - 1]$. Since we are looking at slope $-\mat{t}$, the axis order is the reverse of $\tau$. Note that reversing both the total order and axis order generates the same path, so equivalently, $R(q,p)$ is generated by the total order $\theta^{-1}[-\mat{t} \cdot q, -\mat{t} \cdot p - 1]$ and axis order $\tau$. %
We then apply the condition  $\theta = -(\theta+1)^{-1}$ and obtain $\theta^{-1}[-\mat{t} \cdot q, -\mat{t} \cdot p - 1] = (-(\theta+1))[-\mat{t} \cdot q, -\mat{t} \cdot p - 1] = - ( (\theta+1)[ \mat{t} \cdot p + 1, \mat{t} \cdot q]) = -( \theta[\mat{t} \cdot p, \mat{t} \cdot q - 1] + 1)$. Hence, $R(q,p)$ is generated by $-( \theta[\mat{t} \cdot p, \mat{t} \cdot q - 1] + 1)$ and axis-order $\tau$. 

Specifically, the behaviour of segment $m_{j+1} - m_{j}$ in $R(p,q)$ depends on the position of $\mat{t} \cdot m_{j}$ in $\theta[\mat{t} \cdot p, \mat{t} \cdot q -1]$, whereas the behaviour of segment $w_{j} - w_{j+1}$ depends on the position of $-\mat{t} \cdot w_{j+1} = -\mat{t} \cdot m_j - 1$ in $-( \theta[\mat{t} \cdot p, \mat{t} \cdot q - 1] + 1)$. Clearly, the relative positions of these two numbers are the same. Since we are using the same axis-order, they will induce the same partition and thus the movements will correspond with each other. %
\end{proof}

\section*{Proof of Lemma~\ref{lem_consecutive}}
\begin{proof}
The reverse implication ($``\Leftarrow"$) is direct, since the right side condition is the particular case in which $q=0$, so we focus in the ``forward" implication ($``\Rightarrow"$). We prove this case by contradiction. 

Assume that $1\prec_{\theta} 2$, there exist two numbers $a,b$ of the same parity such that $a<b$ that appear consecutively in $\theta$, but there exist some $q,q' \in \Z$ such that $2q' \prec_{\theta} 2q + 1$. There are four cases to consider depending on the parity of $a$ (odd or even) and majority with respect to $\prec_{\theta}$ ($a \prec_\theta b$ or $b \prec_\theta a$). %
We prove the case of $a=2k \prec_\theta b=2k+2$. The proofs for the three other cases ($2k + 2 \prec_\theta 2k$, $2k+1 \prec_\theta 2k+3$ and $2k+3 \prec_\theta 2k+1$) are similar.

Recall that $\theta$ is a total order in $\mathcal{F}$ and in particular it satisfies $\theta=\theta+2$. We apply this equation $|k-q'+1|$ times on $2q' \prec_\theta 2q+1$ and obtain $2k + 2 \prec_\theta 2(q + k - q') + 3$, which in particular implies that there exists some odd number that is larger than $2k + 2$ in $\theta$.
Let $c$ be the first odd number that appears after $2k+2$ in $\theta$ (that is, any odd number $c'$ such that $c' \prec_\theta c$ satisfies $c' \prec_\theta 2k+2$). Since $2k \prec_\theta 2k + 2$ that means even numbers increase monotonically in $\theta$, odd numbers also increase monontonically in $\theta$, and thus $c-2 \prec_\theta c$.
Since there is no number between $2k$ and $2k+2$ in $\theta$ and $c$ is the first odd number after $2k+2$ in $\theta$, we must have 
$c-2 \prec_\theta 2k \prec_\theta 2k+2 \prec_\theta c$. 

Then, we apply $\theta=\theta +2$ to $c-2 \prec_\theta 2k$ and obtain $c\prec_\theta 2k+2$, which causes a contradiction with $2k+2 \prec_\theta c$. This contradicts with the initial assumption of two numbers of the same parity appearing consecutively in $\theta$.

\end{proof}

\section*{Proof of Theorem~\ref{theo_shapef}}
\begin{proof}
It is straightforward to verify that both $\alpha_{q}$ and $(\alpha_{q})^{-1}$ are total orders in $\mathcal{F}$ for all values of $q$ (i.e., we need to verify that they satisfy the two necessary and sufficient conditons). By corollary~\ref{cor_four}, we also know $\{\tau_{o^+e^+},\tau_{o^-e^-},\tau_{e^+o^+},\tau_{e^-o^-}\}$ are also in $\mathcal{F}$.

Thus, it suffices to show that for any total order $\theta$ such that $\theta=\theta+2$ and $\theta=-(\theta+1)^{-1}$, but $\theta\not\in\{\tau_{o^+e^+},\tau_{o^-e^-},\tau_{e^+o^+},\tau_{e^-o^-}\}$ there exists $q\in\Z$ such that $\theta=\alpha_{q}$ or $\theta=\alpha_{q}^{-1}$. Specifically, we claim that if $0\prec_{\theta} 2$ then $\theta=\alpha_{q}$ (for some $q\in\Z$ that will be specified later). Otherwise, we have $2\prec_{\theta} 0$ and $\theta=(\alpha_{q})^{-1}$ instead. Consider first the $0\prec_{\theta} 2$ case. By Observations~\ref{obs_monot} and~\ref{obs_monotboth}, we know that two numbers $a,b\in\Z$ of the same parity will satisfy $a\prec_{\theta} b$ if and only if $a<b$. 

Now we focus in the relationship between two numbers of different parities. Assume, without loss of generality that $a=2p$ and $b=2p'+1$ for some $p,p'\in\Z$. 
By Corollary~\ref{cor_four}, the two integers that are consecutive with $0$ in $\theta$ and must be odd. Let $q\in\Z$ be the unique integer such that $0$ and $2q+1$ are consecutive, and $0\prec_{\theta} 2q+1$. Observe that $2$ must be the integer that appears immediately after $2q+1$ in $\theta$ (indeed, using again Corollary~\ref{cor_four} we obtain that the number after $2q+1$ in $\theta$ must be even, and since numbers of the same parity appear monotonically increasing it must be number 2). In particular, we have the following relationships in $\theta$:

$$\ldots \prec_{\theta} -2 \prec_{\theta} 0 \prec_{\theta} 2q+1 \prec_{\theta} 2 \prec_{\theta} 4 \prec_{\theta} \ldots$$

Recall that $\theta$ satisfies $\theta=\theta+2$. We apply this equation $|p'-q|$ times and obtain:

$$\ldots \prec_{\theta} -2+2(p'-q) \prec_{\theta} 2(p'-q) \prec_{\theta} 2p'+1=b \prec_{\theta} 2(p'-q)+2 \prec_{\theta} 2(p'-q)+4 \prec_{\theta} \ldots$$

In particular, we have $a\prec_{\theta} b$ if and only if $p\leq p'-q$. The proof for the case in which $2\prec_{\theta} 0$ is identical (the only difference is that we choose $q$ such that $2q+1$ is the number that {\em precedes} $0$ in $\theta$). In either case, this number uniquely determines $\theta$ as either $\alpha_{q}$ or $(\alpha_{q})^{-1}$ as claimed.
\end{proof}

\section*{Proof of Theorem~\ref{theo:dist}}
\begin{proof}
First consider the case in which $\theta \in \{\tau_{o^+e^+},\tau_{e^+o^+}\} \cup \{\alpha_{q} \colon q\in \Z\}$. Let $q=p+(2n,4n,0\ldots,0)$. We claim that  the path from $p$ to $q$ on $\CDS(\theta)$ passes through point $r=p+(2n,2n,0\ldots,0)$.

Note that the slope of $R(p,q)$ is $\mat{t}=(+1, \ldots, +1)$. In particular, by the Translation Lemma (Lemma~\ref{lem:DO_tran}), $R(p,q)$ is a translated copy of the path from the origin to $q-p = (2n,4n,0\ldots,0)$ in $\CDS(\theta-\mat{t}\cdot p)$. Let $\theta' = \theta-\mat{t}\cdot p$ , $q' = (2n,4n,0\ldots,0)$, and $R^{\theta'}(o,q')$ denote the path from the origin to $q'$
in $\CDS(\theta')$. Note that our previous claim is equivalent to saying that $R^{\theta'}(o,q')$ passes through $r' = (2n,2n,0\ldots,0)$.

Indeed, points $o$ and $q'$ share all coordinates except the first two, which implies that $R^{\theta'}(o,q')$ will stay inside the plane $\{x_3=0, x_4=0, \ldots, x_d=0\}$.  Moreover, the $L_1$ distance between $o$ and $q'$ is $6n$. In particular, the segment $R^{\theta'}(o,q')$ will do $6n$ steps, out of which $2n$ will be in the $x_1$-axis and the remaining $4n$ of them in the $x_2$-axis. 

According to the total order construction, we must look at the values of $\theta'$ from $0$ to $\mat{t} \cdot q'-1 =6n-1$ (that is, $\theta'[0,6n-1]$). Since we are moving in the positive quadrant, the $2n$ numbers that are smallest in $\theta'[0,6n-1]$ will correspond to movements in the $x_1$-axis. We claim that all of these $2n$ numbers are smaller than  $\mat{t} \cdot {r}' = 4n$ (in the usual $<$ sense). In particular, all the movements in the $x_1$-axis must be done in the first $4n$ movements from ${o}$. That is, the first $4n$ steps of the segment $R^{\theta'}(o,q')$ contain $2n$ steps in $x_1$-axis and $2n$ steps in $x_2$-axis, so $R^{\theta'}(o,q')$ must pass through ${r}'$ and from there move vertically to $q'$.

For any $\theta \in \{\tau_{o^+e^+},\tau_{e^+o^+}\} \cup \{\alpha_{q} \colon q\in \Z\}$, both odd and even numbers increase monotonically. Since $\theta' = \theta-\mat{t}\cdot p$, both odd and even numbers remain monotonically increasing in $\theta'$. In particular, $\{0 \prec_{\theta'} 2 \prec_{\theta'} 4 \prec_{\theta'} 6 \ldots 6n-2\}$ and $\{1 \prec_{\theta'} 3 \prec_{\theta'} 5 \prec_{\theta'} 7 \ldots 6n-1\} \subset \theta'[0, 6n-1]$. Since the first $2n$ numbers in both sequences are from $0$ to $4n-2$ and $1$ to $4n-1$ respectively, the first $2n$ numbers in any total order containing these two sequences within the interval $[0,6n-1]$ are smaller than $4n$ (in the usual $<$ sense). 

Thus, we conclude that the path from $p$ to $q=p+(2n,4n,0\ldots,0)$ must pass through $r=p+(2n,2n,0\ldots,0)$. %
Using elementary geometry, we can see that the distance from $r$ to $\overline{pq}$ is $\frac{2n}{\sqrt{5}}$, which is a lower bound for $H(\overline{pq},R(p,q))$. 

This completes the proof for the case in which $\theta \in \{\tau_{o^+e^+},\tau_{e^+o^+}\} \cup \{\alpha_{q} \colon q\in \Z\}$. The proof for the remaining case ($\theta \in \{\tau_{o^-e^-},\tau_{e^-o^-}\} \cup \{(\alpha_{q})^{-1} \colon q\in \Z\}$) is very similar. Instead, we look at the path from $p$ to $s=p+(4n,2n,0\ldots,0)$. Using an analogous argument, we can show that $R^{\theta'}(o, s')$ will pass through point $r=p+(2n,2n,0\ldots,0)$. %
\end{proof}
\end{document}